\newtheorem{theorem}{Theorem} 
\newtheorem{lemma}{Lemma} 
\newtheorem{proposition}{Proposition}
\theoremstyle{definition}
\newtheorem{definition}{Definition}	
\newtheorem{remark}{Remark}
\newtheorem{example}{Example}
\newtheorem{assumption}{Assumption}
\numberwithin{equation}{section}
\newcommand{\cA}{\mathcal{A}}
\newcommand{\cF}{\mathcal{F}}
\newcommand{\cN}{\mathcal{N}}
\newcommand{\cQ}{\mathcal{Q}}
\newcommand{\cS}{\mathcal{S}}
\newcommand{\cX}{\mathcal{X}}
\newcommand{\cY}{\mathcal{Y}}
\newcommand{\N}{\mathbb{N}}
\newcommand{\R}{\mathbb{R}}
\newcommand{\E}{\mathbb{E}} 
\renewcommand{\P}{\mathbb{P}}
\DeclareMathOperator*{\argmin}{arg\,min} 
\renewcommand{\epsilon}{\varepsilon}
\title{Robust Point Matching with Distance Profiles}
\author[1]{YoonHaeng Hur}
\author[1]{Yuehaw Khoo}
\affil[1]{Department of Statistics, University of Chicago}
\begin{document}
\maketitle
\begin{abstract}
    Computational difficulty of quadratic matching and the Gromov-Wasserstein distance has led to various approximation and relaxation schemes. One of such methods, relying on the notion of distance profiles, has been widely used in practice, but its theoretical understanding is limited. By delving into the statistical complexity of the previously proposed method based on distance profiles, we show that it suffers from the curse of dimensionality unless we make certain assumptions on the underlying metric measure spaces. Building on this insight, we propose and analyze a modified matching procedure that can be used to robustly match points under a certain probabilistic setting. We demonstrate the performance of the proposed methods using simulations and real data applications to complement the theoretical findings. As a result, we contribute to the literature by providing theoretical underpinnings of the matching procedures based on distance invariants like distance profiles, which have been widely used in practice but rarely analyzed theoretically.
\end{abstract}

\section{Introduction}
\label{sec:intro}

Matching is one of the most fundamental tasks of modern data science, which can be found in numerous applications. For instance, the goal of shape correspondence problems \citep{van2011survey,sahilliouglu2020recent} in computer graphics is to match corresponding parts of different geometric shapes, often obtained by 3D scanners or imaging equipment used for Magnetic Resonance Imaging (MRI) or Computed Tomography (CT). Graph matching aims to find corresponding nodes between two or more graphs representing structured data, such as neuronal connectivity in human brains \citep{sporns2005human} or social networks \citep{borgatti2009network}. 

One formal mathematical framework for matching \citep{memoli2011gromov} is to conceptualize objects as metric measure spaces, which are essentially probability measures defined on suitable metric spaces. In practice, of course, objects are not fully available and thus approximated by suitable point clouds, say, $\{X_1, \ldots, X_n\}$ and $\{Y_1, \ldots, Y_m\}$ on metric spaces $(\cX, d_\cX)$ and $(\cY, d_\cY)$, respectively, and the practical goal is to match these point clouds. Quadratic matching is one of the most standard approaches to this problem, which formulates and solves a quadratic program based on pairwise distances $\{d_\cX(X_i, X_\ell)\}_{i, \ell = 1}^{n}$ and $\{d_\cY(Y_j, Y_\ell)\}_{j, \ell = 1}^{m}$. When $n = m$, quadratic matching finds a one-to-one correspondence that aligns the pairwise distances as closely as possible, which is formulated as a Quadratic Assignment Problem (QAP) \citep{koopmans1957assignment,loiola2007survey}. When $n \neq m$, the notion of correspondence is relaxed to soft assignments or couplings, leading to the formulation of the Gromov-Wasserstein distance \citep{memoli2011gromov}, which has been increasingly used in applications, such as word alignment \citep{alvarez2018gromov}, statistical inference \citep{brecheteau2019statistical,weitkamp2022distribution}, and generative modeling \citep{bunne2019learning,hur2024reversible}. On the practical side, however, quadratic matching requires solving a quadratic program that is NP-hard in general, and thus is approximated by several methods, such as convex relaxation \citep{aflalo2015convex}, semidefinite relaxation \citep{zhao1998semidefinite,kezurer2015tight}, spectral relaxation \citep{leordeanu2005spectral}, or entropic regularization \citep{rangarajan1999convergence,solomon2016entropic,peyre2016gromov}.

Matching procedures based on pairwise distances $\{d_\cX(X_i, X_\ell)\}_{i, \ell = 1}^{n}$ and $\{d_\cY(Y_j, Y_\ell)\}_{j, \ell = 1}^{m}$ are invariant to isometries, which is desirable in practice. For instance, in imaging problems, the images of the objects of interest are often obtained from different angles and thus need to be matched up to rigid transformations. However, such procedures are difficult to analyze theoretically and the literature has been mostly studying theoretical guarantees for other matching procedures based on $\|X_i - Y_j\|_2$'s when $(\cX, d_\cX) = (\cY, d_\cY)$ is the standard Euclidean space. In this setting, the point clouds lie in the same space and are often viewed as perturbations of one another, where one aims to match the points based on the proximity. \cite{collier2016minimax,kunisky2022strong,wang2022random} establish theoretical results on recovering the ground truth matching for the procedures based on $\|X_i - Y_j\|_2$'s under some noise, and \cite{galstyan2022optimal,minasyan2023matching,galstyan2023optimality} show matching guarantees in the presence of outliers. Although Theorem 2 of \cite{wang2022random} concerns a procedure using the pairwise distances in the Euclidean setting, it requires solving a nontrivial minimization over permutations, which is computationally challenging, and the result does not consider the presence of outliers.

Motivated by the above challenges, this paper theoretically analyzes matching procedures based on the distances $\{d_\cX(X_i, X_\ell)\}_{i, \ell = 1}^{n}$ and $\{d_\cY(Y_j, Y_\ell)\}_{j, \ell = 1}^{m}$, which are easily implementable in practice. The idea is to associate each point with the set of distances from it to other points, namely, $X_i$ and $Y_j$ are associated with the sets of distances to the other points $\{d_\cX(X_i, X_\ell)\}_{\ell = 1}^{n}$ and $\{d_\cY(Y_j, Y_\ell)\}_{\ell = 1}^{m}$, which are the distance profiles of $X_i$ and $Y_j$, respectively. The matching procedure is to find for each $X_i$, the point $Y_j$ having the most similar distance profile to that of $X_i$, where the similarity is measured by a suitable quantity between the sets $\{d_\cX(X_i, X_\ell)\}_{\ell = 1}^{n}$ and $\{d_\cY(Y_j, Y_\ell)\}_{\ell = 1}^{m}$. The idea of associating a point with suitable invariants based on the distances to other points from it has a long history in the object matching literature, which includes shape contexts \citep{belongie2002shape,ruggeri2008isometry}, integral invariants \citep{manay2006integral}, invariant histograms \citep{brinkman2012invariant}, the local distribution of distances \citep{memoli2011gromov}, and other related concepts \citep{gortler2019generic,norelli2023asif}. However, theoretical results on robustness of such methods are rarely found in the literature. In the context of random graph matching, \cite{ding2021efficient,ding2023efficiently} propose and analyze a degree profile matching procedure based on the adjacency matrices $(A_{i \ell})_{i, \ell = 1}^{n}$ and $(B_{j \ell})_{j, \ell = 1}^{n}$ of certain correlated random graphs. However, their analysis focuses on random graph models, such as Erd{\"o}s-R{\'e}nyi and correlated Wigner models, where the edges are independently generated, namely, $(A_{i \ell})_{i, \ell = 1}^{n}$ are mutually independent, which is substantially different from our setting where the distances $\{d_\cX(X_i, X_\ell)\}_{i, \ell = 1}^{n}$ are not mutually independent.

\paragraph{Organization} Section \ref{sec:foundations} presents mathematical background on the Gromov-Wasserstein distance and its computational alternative based on the distance profile, together with the new sample complexity result in Section \ref{sec:connection_GW_TLB}. Motivated by the insights built in Section \ref{sec:foundations}, Section \ref{sec:main} proposes and studies the partial matching algorithm based on distance profiles, where we formally show its robustness to noise and outlier components, together with the noise stability analysis for the one-to-one matching procedure. Section \ref{sec:simulations} demonstrates the performance of the proposed methods using simulations. All technical proofs, lemmas, extensions are deferred to Appendices \ref{sec:proofs}, \ref{sec:lemmas}, \ref{sec:extensions}, respectively, to streamline the main text.

\paragraph{Notation} For $k \in \N$, let $[k]$ denote $\{1, \ldots, k\}$, $\Delta_k = \{(p_1, \ldots, p_k) \in \R_+^k : \sum_{i = 1}^{k} p_i = 1\}$ denote the probability simplex, and $\cS_k$ denote the set of all permutations on $[k]$. For $a, b \in \R$, let $a \vee b = \max\{a, b\}$ and $a \wedge b = \min\{a, b\}$. Let $\mathscr{P}(\R^d)$ denote the set of all Borel probability measures defined on $\R^d$ and for any $z \in \R^d$, let $\delta_z \in \mathscr{P}(\R^d)$ denote the Dirac measure at $z$. We denote by $W_p$ the Wasserstein-$p$ distance for $p \ge 1$. For any measure $\mu$ on a measurable space $\cX$ and a measurable function $f \colon \cX \to \R$, let $\mu f = \int_{\cX} f \, \mathrm{d} \mu$.

\section{Foundations: Distance Profile Distance and Sample Complexity}
\label{sec:foundations}
As mentioned in Section \ref{sec:intro}, computational difficulty of the Gromov-Wasserstein (GW) distance serves as a major obstacle in practical applications, which has led to the development of various approximation and relaxation schemes. One of such methods, proposed as a tractable lower bound to the GW distance by \cite{memoli2011gromov}, utilizes the notion of distance profiles. We delineate the mathematical foundations of the distance profile in connection with the GW distance and see how the distance based on it, which we call the distance profile distance (DPD), avoids the computational difficulty of the GW distance. Then, we derive new results on the sample complexity of the DPD and discuss the limitations and potentials of the DPD as a matching method under certain probabilistic settings, which motivates the distance profile matching procedure in the next section.

\subsection{Background: Gromov-Wasserstein Distance and Computational Difficulty}
Following \cite{memoli2011gromov}, we build on the mathematical framework of object matching based on metric measure spaces. The Gromov-Wasserstein (GW) distance is a generalization of quadratic matching between point clouds to abstract metric measure spaces. The GW distance is defined based on the notion of couplings, which are probability measures on the product space of two metric measure spaces that preserve the marginal distributions of the two spaces. Couplings represent the correspondences between the points in the two metric measure spaces, and the GW distance finds the optimal coupling that best aligns the pairwise distances in the two spaces. Formal definitions are as follows.

\begin{definition}
    We call $(\cX, d_\cX, \mu)$ a metric measure space if $\mu$ is a Borel probability measure on a complete and separable metric space $(\cX, d_\cX)$.
\end{definition}

\begin{definition}
    Let $(\cX, d_\cX, \mu)$ and $(\cY, d_\cY, \nu)$ be two metric measure spaces. We call a Borel probability measure $\gamma$ on $\cX \times \cY$ a coupling if $\gamma(A \times \cY) = \mu(A)$ and $\gamma(\cX \times B) = \nu(B)$ for any Borel sets $A \subset \cX$ and $B \subset \cY$. We denote the set of all couplings by $\Pi(\mu, \nu)$.
\end{definition}

\begin{definition}
    The Gromov-Wasserstein distance of order $p \ge 1$ between two metric measure spaces $(\cX, d_\cX, \mu)$ and $(\cY, d_\cY, \nu)$ is defined as follows:
    \begin{equation}
        \label{eq:GW}
        \mathrm{GW}_p(\mu, \nu) := \left(\inf_{\gamma \in \Pi(\mu, \nu)} \int_{(\cX \times \cY)^2} |d_\cX(x, x') - d_\cY(y, y')|^p \, \mathrm{d} \gamma(x, y) \mathrm{d} \gamma(x', y')\right)^{1 / p}.
    \end{equation}
\end{definition}

The objective function on the right-hand side of \eqref{eq:GW} is quadratic in the variable $\gamma$ and is not convex in general, which makes the minimization computationally challenging. The following discrete case illustrates the computational difficulty of the GW distance in the case of point clouds, which is a common setting in practice.

\begin{example}[Discrete Case]
    \label{ex:GW_discrete}
    Consider the point clouds $\{X_1, \ldots, X_n\}$ and $\{Y_1, \ldots, Y_m\}$ on complete and separable metric spaces $(\cX, d_\cX)$ and $(\cY, d_\cY)$, respectively. We can define the GW distance between these point clouds as the GW distance between the metric measure spaces $(\cX, d_\cX, \hat{\mu})$ and $(\cY, d_\cY, \hat{\nu})$, where $\hat{\mu} = \frac{1}{n} \sum_{i = 1}^{n} \delta_{X_i}$ and $\hat{\nu} = \frac{1}{m} \sum_{j = 1}^{m} \delta_{Y_j}$ are the corresponding empirical measures. In this case, 
    \begin{equation}
        \label{eq:GW_discrete}
        \mathrm{GW}_p^p(\hat{\mu}, \hat{\nu}) = \min_{\gamma \in \Pi_{n, m}} \sum_{i, k = 1}^{n} \sum_{j, \ell = 1}^{m} |d_\cX(X_i, X_k) - d_\cY(Y_j, Y_\ell)|^p \gamma_{i j} \gamma_{k \ell},
    \end{equation}
    where $\Pi_{n, m} = \{\gamma \in \R_+^{n \times m} : \sum_{j = 1}^{m} \gamma_{i j} = \frac{1}{n} ~ \forall i \in [n] ~ \text{and} ~ \sum_{i = 1}^{n} \gamma_{i j} = \frac{1}{m} ~ \forall j \in [m]\} $ represents the set of all couplings between $\hat{\mu}$ and $\hat{\nu}$ as nonnegative matrices---called coupling matrices---of which the sum of each row (resp.\ column) is $\frac{1}{n}$ (resp.\ $\frac{1}{m}$). We again observe that the objective function of \eqref{eq:GW_discrete} is quadratic in the variable $\gamma$, which is NP-hard to solve in general.
\end{example}

\subsection{Distance Profile Distance}
Due to the aforementioned computational difficulty of the GW distance, various approximation and relaxation schemes have been proposed in the literature. \cite{memoli2011gromov} proposes several relaxations to the GW distance, which provide lower bounds to the GW distance. One of the three lower bounds proposed in \cite{memoli2011gromov}, which he named the third lower bound based on the ordering, is based on the idea of local distribution of distances, which we generalize as distance profiles below. 

\begin{definition}
    \label{def:distance_profile}
    Let $(\cX, d_\cX, \mu)$ be a metric measure space. For $x \in \cX$, let $\mu_x$ be the probability measure on $\R$ that represents the distribution of $d_\cX(x, X)$, where $X$ is a $\cX$-valued random variable whose law is $\mu$. We call $\mu_x$ the distance profile of $\mu$ at point $x$.
\end{definition}

In a nutshell, for each $x \in \cX$, the distance profile $\mu_x$ is the distribution of the distances from $x$ to the points in $\cX$ with respect to the measure $\mu$. This leads to a natural matching procedure as follows: given two metric measure spaces $(\cX, d_\cX, \mu)$ and $(\cY, d_\cY, \nu)$, we find a coupling that best aligns the distance profiles, namely, associate $x \in \cX$ and $y \in \cY$ if the distance profiles $\mu_x, \nu_y \in \mathscr{P}(\R)$ are similar. Here, similarity is measured by any suitable discrepancy on $\mathscr{P}(\R)$; we use the Wasserstein distance as a natural choice, which leads to the following definition.

\begin{definition}
    \label{def:distance_profile_distance}
    The distance profile distance (DPD) of order $p \ge 1$ between metric measure spaces $(\cX, d_\cX, \mu)$ and $(\cY, d_\cY, \nu)$ is defined as follows:
    \begin{equation}
        \label{eq:TLB}
        \mathfrak{T}_p(\mu, \nu) := \left(\inf_{\gamma \in \Pi(\mu, \nu)} \int_{\cX \times \cY} W_p^p(\mu_x, \nu_y) \, \mathrm{d} \gamma(x, y)\right)^{1 / p}.
    \end{equation}
\end{definition}

\begin{remark}
    \label{rmk:TLB}
    In Definition \ref{def:distance_profile_distance}, while we use $W_p^p$ to measure the discrepancy between the distance profiles $\mu_x, \nu_y \in \mathscr{P}(\R)$, any suitable discrepancy measure on $\mathscr{P}(\R)$ can be used in place of $W_p^p$. The choice $W_p^p$ leads to a nice metric property, namely, $\mathfrak{T}_p$ is a pseudometric on the set of metric measure spaces, and also naturally connects to the Gromov-Wasserstein distance. Indeed, one can show that $\mathfrak{T}_p \le \mathrm{GW}_p$. See \cite{memoli2011gromov} and \cite{memoli2022distance} for technical details.
\end{remark}

Unlike the GW distance, the objective function on the right-hand side of \eqref{eq:TLB} is linear in the variable $\gamma$, which makes the minimization computationally tractable. The following discrete case illustrates the computational advantage of the distance profile distance in the case of point clouds.

\begin{example}
    \label{ex:TLB_discrete}
    Consider the setting of Example \ref{ex:GW_discrete}. Again, we can define the distance profile distance between the point clouds $\{X_1, \ldots, X_n\}$ and $\{Y_1, \ldots, Y_m\}$ as the distance profile distance $\mathfrak{T}_p(\hat{\mu}, \hat{\nu})$. In this case, the distance profile of $\hat{\mu}$ at $x \in \cX$ is $\hat{\mu}_x = \frac{1}{n} \sum_{\ell = 1}^{n} \delta_{d_\cX(x, X_\ell)}$. Distance profiles of $\hat{\nu}$ are defined analogously. Then, one can see that the DPD is given as
    \begin{equation}
        \label{eq:TLB_discrete}
        \mathfrak{T}_p^p(\hat{\mu}, \hat{\nu}) = \min_{\gamma \in \Pi_{n, m}} \sum_{i = 1}^{n} \sum_{j = 1}^{m} W_p^p(\hat{\mu}_{X_i}, \hat{\nu}_{Y_j}) \gamma_{i j}.
    \end{equation}
    where $\Pi_{n, m}$ is as defined in Example \ref{ex:GW_discrete}. Computational advantage of $\mathfrak{T}_p$ over $\mathrm{GW}_p$ is clear: $\mathfrak{T}_p$ can be computed by solving a linear program, while $\mathrm{GW}_p$ requires solving a non-convex quadratic program. Unlike \eqref{eq:GW_discrete}, the minimization \eqref{eq:TLB_discrete} is a linear program that is solvable by polynomial-time algorithms.     
\end{example}

\subsection{Sample Complexity: Challenges and Opportunities}
\label{sec:connection_GW_TLB}
In practice, we have access to point clouds $\{X_1, \ldots, X_n\}$ and $\{Y_1, \ldots, Y_m\}$ that approximate some underlying objects represented as metric measure spaces $(\cX, d_\cX, \mu)$ and $(\cY, d_\cY, \nu)$, respectively. Hence, the quantity $\mathfrak{T}_p(\hat{\mu}, \hat{\nu})$ in Example \ref{ex:TLB_discrete} is an approximation of $\mathfrak{T}_p(\mu, \nu)$. An important statistical question is the rate of convergence of $\mathfrak{T}_p(\hat{\mu}, \hat{\nu})$ to $\mathfrak{T}_p(\mu, \nu)$ in terms of the sample sizes $n$ and $m$ when the point clouds are independent samples from $\mu$ and $\nu$. Faster convergence rates are desirable as they indicate that the distance profile distance is a good approximation of the population version $\mathfrak{T}_p(\mu, \nu)$ with a small sample size, which can further expedite the computation of the distance profile distance in practice.

It turns out, however, that the convergence rate of $\mathfrak{T}_p(\hat{\mu}, \hat{\nu})$ to $\mathfrak{T}_p(\mu, \nu)$ can be slow in the high-dimensional setting, just as the standard Wasserstein distance $W_p$ suffers from the curse of dimensionality when estimated from samples \citep{dudley1969speed,dereich2013constructive,boissard2014mean,fournier2015rate,weed2019sharp}. We derive this result for $p = 1$ as the leading case, focusing on $\E |\mathfrak{T}_1(\mu, \nu) - \mathfrak{T}_1(\hat{\mu}, \hat{\nu})|$. While the result is pessimistic in its full generality, we find that a better rate of convergence can be achieved when the underlying metric spaces have certain structures, controlled by the effective dimension of the metric measure space that can be much smaller than the ambient dimension.

\paragraph{Setting} Throughout this section, we mainly consider the setting of Example \ref{ex:TLB_discrete} where the point clouds are viewed as independent samples from the underlying metric measure spaces. It suffices to analyze the rate of convergence of $\E \mathfrak{T}_1(\mu, \hat{\mu})$ due to the triangle inequality: $\E |\mathfrak{T}_1(\mu, \nu) - \mathfrak{T}_1(\hat{\mu}, \hat{\nu})| \le \E \mathfrak{T}_1(\mu, \hat{\mu}) + \E \mathfrak{T}_1(\nu, \hat{\nu})$. Hence, we focus on $\E \mathfrak{T}_1(\mu, \hat{\mu})$.

\subsubsection{General Case: Curse of Dimensionality} 
We first observe that
\begin{equation}
    \label{eq:TLB_decomposition}
    \begin{split}
        \mathfrak{T}_1(\mu, \hat{\mu}) 
        & = \inf_{\gamma \in \Pi(\mu, \hat{\mu})} \int_{\cX \times \cX} W_1(\mu_x, \hat{\mu}_{x'}) \, \mathrm{d} \gamma(x, x') \\
        & \le \inf_{\gamma \in \Pi(\mu, \hat{\mu})} \int_{\cX \times \cX} \left(W_1(\mu_x, \mu_{x'}) + W_1(\mu_{x'}, \hat{\mu}_{x'})\right) \, \mathrm{d} \gamma(x, x') \\
        & \le \underbrace{\inf_{\gamma \in \Pi(\mu, \hat{\mu})} \int_{\cX \times \cX} W_1(\mu_x, \mu_{x'}) \, \mathrm{d} \gamma(x, x')}_{(*)} + \underbrace{\int_{\cX} W_1(\mu_{x'}, \hat{\mu}_{x'}) \, \mathrm{d} \hat{\mu}(x')}_{(**)}.
    \end{split}
\end{equation}
It turns out that the expectation of (**) enjoys a fast convergence rate. The idea is that
\begin{equation}
    \label{eq:TLB_decomposition_fast_part}
    \E(**) = \E\left(\frac{1}{n} \sum_{i = 1}^{n} W_1(\mu_{X_i}, \hat{\mu}_{X_i})\right) \le \E\left(\sup_{x \in \cX} W_1(\mu_x, \hat{\mu}_x)\right),
\end{equation}
where the last term can be shown to converge at the rate of $O(n^{-1/2})$ as stated below. 
\begin{theorem}
    \label{thm:W1_bound_sup_x_euclidean}
    Let $(\cX, d_\cX, \mu)$ be a metric measure space and $\hat{\mu} = \frac{1}{n} \sum_{i = 1}^{n} \delta_{X_i}$ be the empirical measure based on $X_1, \ldots, X_n$ that are i.i.d.\ from $\mu$. Suppose $\cX$ is a closed subset of $\R^d$ and $d_\cX$ is the standard Euclidean distance. Then, 
    \begin{equation}
        \label{eq:sup_W1_euclidean}
        \E\left[\sup_{x \in \cX} W_1(\mu_x, \hat{\mu}_x)\right] \le C \sqrt{\frac{d + 1}{n}},
    \end{equation}
    where $C$ is an absolute constant that is independent of $n$, $d$, and $\mu$.
\end{theorem}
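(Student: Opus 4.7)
My plan is to reduce $\E\sup_{x \in \cX} W_1(\mu_x, \hat{\mu}_x)$ to a uniform empirical-process bound indexed by the VC class of closed Euclidean balls. The starting point is the one-dimensional CDF formulation of $W_1$: since $\mu_x$ is the pushforward of $\mu$ by $y \mapsto \|x - y\|_2$, its CDF equals $F_{\mu_x}(t) = \mu(\bar B(x,t))$, where $\bar B(x,t) = \{y \in \R^d : \|y - x\|_2 \le t\}$, and likewise $F_{\hat{\mu}_x}(t) = \hat{\mu}(\bar B(x,t))$. Hence
\begin{equation*}
    W_1(\mu_x, \hat{\mu}_x) \;=\; \int_0^\infty \bigl|\mu(\bar B(x,t)) - \hat{\mu}(\bar B(x,t))\bigr|\, dt,
\end{equation*}
and the supremum over $x \in \cX$ becomes a functional of the signed measure $\mu - \hat{\mu}$ evaluated on the collection $\mathcal B = \{\bar B(x,t) : x \in \R^d,\, t \ge 0\}$ of closed Euclidean balls.

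The second ingredient is the classical fact that $\mathcal B$ is a VC class of dimension $d+1$, obtained by lifting balls to affine half-spaces in $\R^{d+1}$. Combining this with symmetrization and either Dudley's entropy integral or Haussler's covering-number bound yields the uniform estimate
\begin{equation*}
    \E \sup_{B \in \mathcal B} \bigl|\mu(B) - \hat{\mu}(B)\bigr| \;\le\; C_0 \sqrt{\tfrac{d+1}{n}}
\end{equation*}
for a universal constant $C_0$. This is the source of the $\sqrt{(d+1)/n}$ rate that appears in the conclusion.

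The delicate step, which I expect to be the main obstacle, is turning the uniform sup bound over $\mathcal B$ into a bound on the \emph{integral} in $t$ uniformly in $x$, with a constant independent of $\mu$. A naive interchange of $\sup_x$ with $\int_0^\infty dt$ picks up a factor equal to the effective radius range, which is $\mu$-dependent. My proposed resolution is to exploit that $F_{\mu_x}(t)$ and $F_{\hat{\mu}_x}(t)$ both approach $1$ at the same scale, and to peel the radius $t$ into geometric shells: on each shell one applies the VC bound to the restricted subclass of balls, whose per-shell empirical standard deviation $\sqrt{F(1-F)/n}$ decays geometrically with the shell, so that summing the bounds over shells yields a constant that is absolute rather than $\mu$-dependent. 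Equivalently, one can view the graph class $\{(y,t) \in \R^d \times \R_{\ge 0} : \|y - x\|_2 \le t\}$ as a single VC class of dimension $d+1$ and apply an $L^1$-type bracketing/Talagrand bound directly to the functional $\int_0^\infty |\,\cdot\,|\, dt$ in one stroke, thereby yielding \eqref{eq:sup_W1_euclidean}.
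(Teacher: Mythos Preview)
Your reduction is exactly the paper's: write $W_1(\mu_x,\hat\mu_x)=\int_0^\infty|\mu(\bar B(x,r))-\hat\mu(\bar B(x,r))|\,dr$, recognise closed Euclidean balls as a VC class of dimension $d+1$, and invoke the uniform bound $\E\sup_{x,r}|\mu(\bar B(x,r))-\hat\mu(\bar B(x,r))|\le C\sqrt{(d+1)/n}$. The paper parts ways with you precisely at your ``delicate step'': it simply asserts
\[
\int_0^\infty |\mu\phi_{x,r}-\hat\mu\phi_{x,r}|\,dr \;\le\; \sup_{r\ge 0}|\mu\phi_{x,r}-\hat\mu\phi_{x,r}|
\]
and moves on. You are right that this inequality is not valid in general; at best it yields $\mathrm{diam}(\mathrm{supp}\,\mu)\cdot\sup_r|\cdot|$, which is $\mu$-dependent.

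However, your proposed repairs do not close the gap, because under the stated hypotheses (merely $\cX\subset\R^d$ closed, no moment condition on $\mu$) the conclusion \eqref{eq:sup_W1_euclidean} is in fact false. If $\mu$ has Cauchy-type tails so that $1-F_x(t)\sim c/t$, then already for a single fixed $x$ one has $\E W_1(\mu_x,\hat\mu_x)\asymp n^{-1/2}\int_0^\infty\sqrt{F_x(t)(1-F_x(t))}\,dt=\infty$. In your peeling scheme the shell $\{t:1-F_x(t)\in[2^{-k-1},2^{-k}]\}$ then has Lebesgue length $\sim 2^k$, so its contribution is $\sim 2^k\sqrt{2^{-k}/n}=2^{k/2}/\sqrt n$, and the sum over $k$ diverges: the geometric decay of $\sqrt{F(1-F)}$ is cancelled by the growing shell widths. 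The graph-class suggestion does not help either, since the empirical process lives on $\R^d$ only (the $X_i$'s carry no $t$-coordinate), so a VC bound on subsets of $\R^d\times\R_{\ge 0}$ has nothing to act on. What your argument and the paper's actually establish is $\E\sup_x W_1(\mu_x,\hat\mu_x)\le C\,\Delta\sqrt{(d+1)/n}$ with $\Delta=\mathrm{diam}(\mathrm{supp}\,\mu)$, consistent with the explicit diameter factor in the paper's general metric-space version (Theorem~\ref{thm:W1_bound_sup_x_general}).
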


\begin{remark}
    For each fixed $x \in \cX$, as $\mu_x$ is a probability measure on $\R$, one can deduce from the standard results \citep{dudley1969speed,fournier2015rate} that $W_1(\mu_x, \hat{\mu}_x)$ enjoys a fast rate of convergence in terms of the sample size $n$, which is $O(n^{-1 / 2})$. Theorem \ref{thm:W1_bound_sup_x_euclidean} shows that taking the supremum over $x \in \cX$ does not slow down the rate of convergence. We present the extension of Theorem \ref{thm:W1_bound_sup_x_euclidean} to general metric measure spaces in Appendix \ref{sec:W1_bound_sup_x_general}.
\end{remark}

While Theorem \ref{thm:W1_bound_sup_x_euclidean} shows fast convergence of (**), the term (*) in \eqref{eq:TLB_decomposition} is the main bottleneck for the rate of convergence of $\E \mathfrak{T}_1(\mu, \hat{\mu})$. Unfortunately, the rate of convergence of the expectation of (*) can be as slow as that of $\E W_1(\mu, \hat{\mu})$, which is known to be $O(n^{-1 / d})$ for $\cX \subset \R^d$. To see this, deduce (*) $\le W_1(\mu, \hat{\mu})$ from Proposition \ref{prop:analytic_properties}.

\begin{proposition}
    \label{prop:analytic_properties}
    Let $(\cX, d_\cX, \mu)$ be a metric measure space. Then, $W_1(\mu_x, \mu_{x'}) \le d_\cX(x, x')$ for $x, x' \in \cX$. 
\end{proposition}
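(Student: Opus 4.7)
The plan is to exhibit an explicit coupling between $\mu_x$ and $\mu_{x'}$ obtained by pushing $\mu$ forward through the pair of distance-to-$x$ and distance-to-$x'$ maps, and then to bound the $L^1$ transport cost using the reverse triangle inequality.

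First, I would recall from Definition \ref{def:distance_profile} that $\mu_x$ is the law of $d_\cX(x, Y)$ when $Y \sim \mu$, i.e., $\mu_x = (d_\cX(x, \cdot))_\# \mu$, and likewise for $\mu_{x'}$. Define the map $\Psi \colon \cX \to \R \times \R$ by $\Psi(y) = (d_\cX(x, y),\, d_\cX(x', y))$ and let $\gamma = \Psi_\# \mu$. A direct check of the two marginals shows that $\gamma \in \Pi(\mu_x, \mu_{x'})$: the pushforward of $\gamma$ under the first (resp.\ second) coordinate projection equals $(d_\cX(x, \cdot))_\# \mu = \mu_x$ (resp.\ $\mu_{x'}$).

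Second, plug this coupling into the Kantorovich definition of $W_1$ on $\R$:
\begin{equation*}
    W_1(\mu_x, \mu_{x'}) \le \int_{\R \times \R} |r - s| \, \mathrm{d} \gamma(r, s) = \int_{\cX} |d_\cX(x, y) - d_\cX(x', y)| \, \mathrm{d} \mu(y),
\end{equation*}
where the equality is a change of variables along $\Psi$.

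Third, apply the reverse triangle inequality $|d_\cX(x, y) - d_\cX(x', y)| \le d_\cX(x, x')$ pointwise in $y$, and integrate against the probability measure $\mu$ to conclude $W_1(\mu_x, \mu_{x'}) \le d_\cX(x, x')$. There is no real obstacle here; the only mild subtlety is verifying measurability of $\Psi$ and correctness of the marginals of $\gamma$, both of which follow immediately from the continuity of $d_\cX(x, \cdot)$ on a separable metric space.
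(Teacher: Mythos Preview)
Your proof is correct and follows essentially the same approach as the paper: both construct the coupling $\Psi_\#\mu$ (the paper calls it $S_\#\mu$) from the pair of distance maps, apply change of variables, and conclude via the triangle inequality.
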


In summary, despite the fast convergence rate of the term (**) in \eqref{eq:TLB_decomposition}, the term (*) in \eqref{eq:TLB_decomposition} can lead to a slow rate of convergence of $\E \mathfrak{T}_1(\mu, \hat{\mu})$ as $O(n^{-1 / d})$, thereby suffering from the curse of dimensionality.

\subsubsection{Faster Rate by Invariance and Effective Dimension}
It turns out that when $\mu$ has a certain structure, we can derive a faster rate of convergence for (*). Namely, if $\mu$ is a mixture of rotationally invariant distributions like a Gaussian mixture, then the effective dimension of $\mu$ is the number of the components, which can be much smaller than the ambient dimension $d$. The key is that $W_1(\mu_x, \mu_{x'})$, the distance between distance profiles, captures this structure and thus is determined by the effective dimension, rather than the ambient dimension.

\begin{proposition}
    \label{prop:W1_distance_profiles_mixture}
    Suppose $\mu$ is a probability measure on $\R^d$ given as a mixture of rotationally invariant distributions, namely, $\mu = \sum_{k = 1}^{t} p_k \mu_k$, where $(p_1, \ldots, p_t) \in \Delta_t$, and $\mu_1, \ldots, \mu_t \in \mathscr{P}(\R^d)$ are rotationally invariant with respect to the centers $\theta_1, \ldots, \theta_t \in \R^d$, respectively. Then, for any $x, x' \in \R^d$,
    \begin{equation}
        \label{eq:W1_distance_profiles_mixture}
        W_1(\mu_x, \mu_{x'}) \le \sum_{k = 1}^{t} p_k \left|\|x - \theta_k\|_2 - \|x' - \theta_k\|_2\right|.
    \end{equation}
\end{proposition}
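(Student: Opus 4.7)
The strategy is to first reduce to a per-component statement via the mixture structure of the Wasserstein distance, then exploit rotational invariance of each $\mu_k$ together with the reverse triangle inequality in $\R^d$ to construct an explicit coupling attaining the desired bound.

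Since $X \sim \mu$ can be sampled by first drawing a component index $K$ with $\P(K = k) = p_k$ and then sampling from $\mu_K$, the distance profile decomposes as $\mu_x = \sum_{k = 1}^{t} p_k \mu_x^k$, where $\mu_x^k$ denotes the law of $\|x - X\|_2$ for $X \sim \mu_k$, and analogously for $\mu_{x'}$. Gluing componentwise optimal couplings with weights $p_k$ produces a valid coupling between $\mu_x$ and $\mu_{x'}$, yielding the mixture inequality $W_1(\mu_x, \mu_{x'}) \le \sum_{k = 1}^{t} p_k W_1(\mu_x^k, \mu_{x'}^k)$. It therefore suffices to show that $W_1(\mu_x^k, \mu_{x'}^k) \le \bigl|\|x - \theta_k\|_2 - \|x' - \theta_k\|_2\bigr|$ for each $k$.

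Fix $k$ and write $X = \theta_k + Z$, where $Z$ has a rotationally invariant law centered at the origin. Let $r = \|x - \theta_k\|_2$, $r' = \|x' - \theta_k\|_2$, and fix any unit vector $e \in \R^d$. Because the law of $\|u - Z\|_2$ depends on $u$ only through $\|u\|_2$ (align $u$ with $\|u\|_2 \cdot e$ by an orthogonal transformation and invoke the rotational invariance of $Z$), we may realize $\mu_x^k$ and $\mu_{x'}^k$ on a common probability space as the laws of $A := \|r e - Z\|_2$ and $B := \|r' e - Z\|_2$, driven by the same $Z$. The map $v \mapsto \|v - Z\|_2$ is $1$-Lipschitz on $\R^d$ for every realization of $Z$ by the reverse triangle inequality, so $|A - B| \le \|r e - r' e\|_2 = |r - r'|$ pointwise; taking expectations gives $W_1(\mu_x^k, \mu_{x'}^k) \le \E|A - B| \le |r - r'|$. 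Combining this with the mixture inequality completes the proof.

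The step that needs the most care is the construction of the coupling: rotational invariance lets us transport both $\mu_x^k$ and $\mu_{x'}^k$ onto the axis spanned by $e$ and couple them through a single noise realization $Z$, which is what allows us to improve the naive bound $\|x - x'\|_2$ (obtained from the coupling $X = X'$) to the strictly tighter $|r - r'|$. Once the coupling is in place, the pointwise estimate $|A - B| \le |r - r'|$ is immediate from the $1$-Lipschitzness of $v \mapsto \|v - Z\|_2$, so I do not foresee any further technical obstacles.
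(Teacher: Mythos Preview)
Your proof is correct and follows essentially the same approach as the paper: both use the mixture decomposition of the distance profile to reduce to a per-component bound, then exploit rotational invariance to place $x$ and $x'$ on a common ray through $\theta_k$ and couple via a single draw of the noise. The only cosmetic difference is that the paper packages the per-component step by invoking the general inequality $W_1(\mu_x,\mu_{x'}) \le d_\cX(x,x')$ (their Proposition~\ref{prop:analytic_properties}) at the aligned points $\theta_k + r e$ and $\theta_k + r' e$, whereas you write out the underlying coupling explicitly.
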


Combining Proposition \ref{prop:W1_distance_profiles_mixture} with Theorem \ref{thm:W1_bound_sup_x_euclidean}, we can derive the following convergence rate of $\E \mathfrak{T}_1(\mu, \hat{\mu})$ depending on the effective dimension $t$ of $\mu$.

\begin{theorem}
    \label{thm:convergence_TLB_mixture}
    Suppose $\mu$ is a probability measure on $\R^d$ given as a mixture of rotationally invariant distributions, namely, $\mu = \sum_{k = 1}^{t} p_k \mu_k$, where $(p_1, \ldots, p_t) \in \Delta_t$, and $\mu_1, \ldots, \mu_t \in \mathscr{P}(\R^d)$ are rotationally invariant with respect to the centers $\theta_1, \ldots, \theta_t \in \R^d$, respectively. Let $\hat{\mu} = \frac{1}{n} \sum_{i = 1}^{n} \delta_{X_i}$ be the empirical measure based on $X_1, \ldots, X_n$ that are i.i.d.\ from $\mu$. Then, 
    \begin{equation*}
        \E \mathfrak{T}_1(\mu, \hat{\mu}) \le O(n^{-1 / t}) + C \sqrt{\frac{d + 1}{n}},
    \end{equation*}
    where $C$ is an absolute constant that is independent of $n$, $d$, and $\mu$.
\end{theorem}

\paragraph{Summary and Insights}
The above results provide insights into how to use distance profiles for matching. Due to the curse of dimensionality, it is impractical to estimate the distance profile distance $\mathfrak{T}_1$ between high-dimensional metric measure spaces by random samples. Hence, it is more suitable to use and study $\mathfrak{T}_1$ under a different setting than the sampling scenario; the next section conducts such a study based on the model with fixed high-dimensional locations corrupted by noise. Meanwhile, we have seen that a better rate can be achieved under a mixture model with rotationally invariant components. The high-level idea is that when such a structure is present, points from the same component are treated as the same point so that the effective dimension is essentially the number of components. This is pertinent to partial matching, where we want to allow similar points from one object to be matched to the same point in another object, not necessarily requiring a one-to-one correspondence or a coupling. In the next section, we study the distance profile matching procedure that allows for such flexibility, with a robustness property.

\section{Distance Profile Matching and Its Theoretical Guarantees}
\label{sec:main}
This section introduces and analyzes the distance profile matching algorithms, inspired by the insights on the distance profile distance $\mathfrak{T}_1$ in the previous section. We have learned that using $\mathfrak{T}_1$ directly for matching in the random sampling setting can be impractical. Hence, we propose a flexible matching procedure---Algorithm \ref{alg:distance_profile_matching}---modified from $\mathfrak{T}_1$ that allows for partial matching and derive its robustness property. Meanwhile, Algorithm \ref{alg:distance_profile_matching_assignment} is a special case of $\mathfrak{T}_1$ to find a one-to-one correspondence, which we analyze under a fixed-location model with noise, deviating from the random sampling setting where $\mathfrak{T}_1$ is impractical.

\subsection{Distance Profile Matching Algorithms}

\subsubsection{Distance Profile Matching Algorithm for Partial Matching}
\label{sec:distance_profile_matching}

\begin{algorithm}[!b]
    \caption{Distance Profile Matching}
    \label{alg:distance_profile_matching}
    \begin{algorithmic}[1]
    \REQUIRE Pairwise distances $\{d_\cX(X_i, X_\ell)\}_{i, \ell = 1}^{n}$ and $\{d_\cY(Y_j, Y_\ell)\}_{j, \ell = 1}^{m}$ from two point clouds $\{X_1, \ldots, X_n\}$ and $\{Y_1, \ldots, Y_m\}$ on metric spaces $(\cX, d_\cX)$ and $(\cY, d_\cY)$, respectively.
    \REQUIRE (Optional) A threshold $\rho > 0$. Set $\rho = \infty$ if not provided.
    \STATE For each pair $(i, j) \in [n] \times [m]$, define 
    \begin{equation}
        \label{eq:W(i, j)}
        W(i, j) := W_1\left(\frac{1}{n} \sum_{\ell = 1}^{n} \delta_{d_\cX(X_i, X_\ell)}, \frac{1}{m} \sum_{\ell = 1}^{m} \delta_{d_\cY(Y_j, Y_\ell)}\right).
    \end{equation}
    \STATE Define a map $\pi \colon [n] \to [m]$ by $\pi(i) = \argmin_{j \in [m]} W(i, j)$.
    \STATE Define $I \subset [n]$ by $I = \{i \in [n] : W(i, \pi(i)) < \rho\}$.
    \RETURN $\pi$ and $I$.
\end{algorithmic}
\end{algorithm}

Algorithm \ref{alg:distance_profile_matching} follows the idea of $\mathfrak{T}_1$ in that it compares the distance profiles of two point clouds, but it does not require finding an optimal coupling which was the main bottleneck of the sample complexity of $\mathfrak{T}_1$ in the random sampling setting. Instead, for each point $X_i$, we find $Y_j$ whose distance profile is closest to the distance profile of $X_i$ under $W_1$ as shown in Step 2 of Algorithm \ref{alg:distance_profile_matching}. Here, note that $W(i, j) = W_1(\hat{\mu}_{X_i}, \hat{\nu}_{Y_j})$, where $\hat{\mu}_{X_i}$ and $\hat{\nu}_{Y_j}$ are the distance profiles based on the empirical measures introduced in Example \ref{ex:TLB_discrete}. Therefore, the key difference from $\mathfrak{T}_1$ is that instead of finding a coupling $\gamma$ as in \eqref{eq:TLB_discrete}, we find $\pi$ that matches each $X_i$ to a point $Y_{\pi(i)}$ closes to $X_i$ in terms of the distance profiles. 

Note that $\pi$ may not be injective or surjective, meaning that some points may not be matched or some points may be matched to multiple points. The motivation behind this is simple: in partial matching, (1) we want some points to be ignored or matched arbitrarily, and (2) we may allow similar points to be matched to the same point. The underlying objects that the points $X_i$'s and $Y_j$'s approximate, say, probability measures $\mu$ and $\nu$ defined on $\R^d$, respectively, are not identical in practice. Usually, they coincide up to several outlier components, which we can formally state as $\mu = \lambda \rho + (1 - \lambda) \mu_o$ and $\nu = \lambda \rho + (1 - \lambda) \nu_o$ for some $\lambda \in (0, 1)$ and $\rho, \mu_o, \nu_o \in \mathscr{P}(\R^d)$. Here, $\lambda \rho$ represents the common part of $\mu$ and $\nu$ that we want to match, while $(1 - \lambda) \mu_o$ and $(1 - \lambda) \nu_o$ are the outlier components of $\mu$ and $\nu$, respectively. We are mainly interested in matching $X_i$'s and $Y_j$'s from the common part, while ignoring or arbitrarily matching the points from the outlier components.

The threshold $\rho$ is an optional parameter that can be used to obtain the subset $I \subset [n]$ of indices $i$'s such that the discrepancy $W(i, \pi(i))$ is below $\rho$. Small $\rho$ leads to a more conservative matching that only focuses on the points that are closely matched, which can be used for other downstream tasks. For instance, one can solve the orthogonal Procrustes problem \citep{wahba1965least,yang2019quaternion} using the matched pairs $\{(X_i, Y_{\pi(i)})\}_{i \in I}$ to find the optimal rotation and translation aligning them as closely as possible, which theoretically works with just $|I| = d$ pairs in $\R^d$, or $|I| = O(d)$ due to potential noise or degeneracy in practice. The obtained transformation can then be used as an initialization of the registration procedures, such as the Iterative Closest Point (ICP) algorithm \citep{besl1992method}, to find the optimal transformation that aligns the whole point clouds. We will utilize this idea in the experiments in Section \ref{sec:cryo-em}.

\subsubsection{Distance Profile Matching by Assignment}
Algorithm \ref{alg:distance_profile_matching_assignment} is a special case of $\mathfrak{T}_1$ that finds a one-to-one correspondence, namely, assignment, when the two point clouds have the same number of points. In Step 2 of Algorithm \ref{alg:distance_profile_matching_assignment}, we find a permutation $\pi \in \cS_n$ that minimizes the total discrepancy between the distance profiles of $X_i$ and $Y_{\pi(i)}$'s, which is formulated as the standard linear assignment problem. By the standard linear programming theory, one can deduce that the minimum of this linear assignment problem is equivalent to the minimum of the linear program over the Birkhoff polytope, which is $\mathfrak{T}_1$ defined in \eqref{eq:TLB_discrete} when $n = m$. Hence, Algorithm \ref{alg:distance_profile_matching_assignment} is essentially $\mathfrak{T}_1$, which, as discussed, can be impractical in the random sampling setting. Instead, it can be useful in the fixed-location model with noise, which we will analyze in Section \ref{sec:noise_stability}.

\begin{algorithm}[!h]
    \caption{Distance Profile Matching by Assignment}
    \label{alg:distance_profile_matching_assignment}
    \begin{algorithmic}[1]
    \REQUIRE Pairwise distances $\{d_\cX(X_i, X_\ell)\}_{i, \ell = 1}^{n}$ and $\{d_\cY(Y_j, Y_\ell)\}_{j, \ell = 1}^{n}$ from two point clouds $\{X_1, \ldots, X_n\}$ and $\{Y_1, \ldots, Y_n\}$ on metric spaces $(\cX, d_\cX)$ and $(\cY, d_\cY)$, respectively.
    \STATE  For each pair $(i, j) \in [n] \times [n]$, define $W(i, j)$ as in \eqref{eq:W(i, j)}.
    \STATE Solve $\hat{\pi} = \argmin_{\pi \in \cS_n} \sum_{i = 1}^{n} W(i, \pi(i))$.
    \RETURN $\hat{\pi}$.
\end{algorithmic}
\end{algorithm}

\subsubsection{Computational Complexity}
The Wasserstein-1 distance between two sets of points in $\R$ admits a closed-form expression based on the corresponding cumulative distribution functions, which can be computed efficiently \citep{peyre2019computational}. In this case, the computational complexity of Algorithms \ref{alg:distance_profile_matching} and \ref{alg:distance_profile_matching_assignment} is computed as follows. For Step 1, assuming $n \ge m$, computing the Wasserstein-1 distances between two sets of points in $\R$ for $n m$ pairs takes $O(n^3 \log(n))$ in total; when $n = m$ as in Algorithm \ref{alg:distance_profile_matching_assignment}, the exact computation is possible solely based on sorting. Meanwhile, Step 2 takes $O(n m \log(m))$ as it requires sorting $m$ numbers $n$ times; for Algorithm \ref{alg:distance_profile_matching_assignment}, it takes $O(n^3)$ to solve the linear assignment problem. Hence, the overall complexity of Algorithm \ref{alg:distance_profile_matching} or \ref{alg:distance_profile_matching_assignment} is $O(n^3 \log(n))$ assuming $n \ge m$, which is nearly cubic in the number of points $n$; however, as the input dimension of the algorithm is $n^2$, we can say that the complexity scales nearly as the input dimension $n^2$ raised to the power of $\frac{3}{2}$. 

\begin{remark}
    While we mainly use the Wasserstein-1 distance $W_1$ to compare the distance profiles, one may use any well-defined discrepancy measure in the place of $W_1$ in \eqref{eq:W(i, j)}. The choice $W_1$ is natural in connection with the previous section (Remark \ref{rmk:TLB}), which is theoretically appealing and practically useful. Lastly, we comment that Algorithms \ref{alg:distance_profile_matching} and \ref{alg:distance_profile_matching_assignment} are implementable without knowing the points $X_i$'s and $Y_j$'s as long as the pairwise distances $\{d_\cX(X_i, X_\ell)\}_{i, \ell = 1}^{n}$ and $\{d_\cY(Y_j, Y_\ell)\}_{j, \ell = 1}^{m}$ are provided. This is because the distance profiles are solely determined by the pairwise distances.
\end{remark}

\subsection{Partial Matching Analysis: Robustness to Outlier Components}
\label{sec:outlier_robustness}
This section analyzes the robustness of Algorithm \ref{alg:distance_profile_matching} in the presence of outlier components. Algorithm \ref{alg:distance_profile_matching} is designed to match the points corresponding to the same region of the objects, while ignoring or arbitrarily matching the points from the outlier components. In order to theoretically analyze robustness to such components, we consider appropriate model assumptions on the underlying objects $\mu, \nu \in \mathscr{P}(\R^d)$ that capture the nature of the outlier components that we want to ignore in the partial matching setting. Motivated by several practical considerations, we model $\mu, \nu$ as follows.

\begin{assumption}
    \label{a:mixture}
    Suppose that $\mu$ and $\nu$ are mixture distributions consisting of $t$ and $s$ components, respectively, defined as follows: for some $(p_1, \ldots, p_t) \in \Delta_t$ and $(q_1, \ldots, q_s) \in \Delta_s$, 
    \begin{equation*}
        \mu = \sum_{k = 1}^{t} p_k \mu_k \quad \text{and} \quad \nu = \sum_{k = 1}^{s} q_k \nu_k,
    \end{equation*}
    where $\mu_k, \nu_k \in \mathscr{P}(\R^d)$ are sub-Gaussian\footnote{See Lemma \ref{lem:sub_Gaussian} for the precise definition.} with variance proxies $\sigma_k^2, \tau_k^2$ and means $\theta_k, \eta_k \in \R^d$, respectively. For some $K \le \min\{t, s\}$, we assume that the first $K$ components of $\mu$ and $\nu$ are identical, that is, $p_k = q_k$ and $\mu_k = \nu_k$ for $k \in [K]$.
\end{assumption}

Assumption \ref{a:mixture} is equivalent to letting $\mu = \lambda \rho + (1 - \lambda) \mu_o$ and $\nu = \lambda \rho + (1 - \lambda) \nu_o$, where the common part $\lambda \rho$ is the mixture consisting of the first $K$ components and the other parts, namely, the outlier parts $(1 - \lambda) \mu_o, (1 - \lambda) \nu_o$ are the mixtures of the remaining components. The motivation behind Assumption \ref{a:mixture} is that the objects $\mu, \nu$ are composed of several regions, where each region is represented as a sub-Gaussian distribution so that the mean corresponds to the center of the region and the variance proxy roughly represents the size of the region. The points corresponding to the same region are to be matched, while we ignore matching the points from the outlier components. 

\begin{remark}[Scope of Assumption \ref{a:mixture}]
    \label{rmk:mixture}
    Gaussian mixture models are the most common instance of Assumption \ref{a:mixture}, where $\mu_k$'s and $\nu_k$'s are Gaussian. By allowing these components to be sub-Gaussian, Assumption \ref{a:mixture} encompasses a wider class of distributions. For instance, stochastic ball models \citep{nellore2015recovery,iguchi2017probably}---widely used to evaluate clustering algorithms---are special cases of Assumption \ref{a:mixture} where $\mu_k$'s and $\nu_k$'s are supported on balls. Also, Assumption \ref{a:mixture} allows the components to be low-dimensional, which can model 3D objects given as smooth densities on the surface of the objects.
\end{remark}

Now, we apply Algorithm \ref{alg:distance_profile_matching} to the point clouds $\{X_1, \ldots, X_n\}$ and $\{Y_1, \ldots, Y_m\}$ that approximate the objects $\mu$ and $\nu$, respectively. We take a viewpoint that the points $X_1, \ldots, X_n$ and $Y_1, \ldots, Y_m$ are independent samples from $\mu$ and $\nu$, respectively. For $i \in [n]$ and $j \in [m]$, let $t(i) \in [t]$ and $s(j) \in [s]$ denote the components where $X_i$ and $Y_j$ are from, respectively. Our goal is to match $X_i$ and $Y_j$ if $t(i) = s(j) \in [K]$, namely, they are from the same region. Under this probabilistic setting, we show that Algorithm \ref{alg:distance_profile_matching} outputs the matching $\pi$ that has the following guarantee: for $X_i$ such that $t(i) \in [K]$, we have $s(\pi(i)) = t(i)$, namely, its match $Y_{\pi(i)}$ is from the corresponding component. More precisely, we show that the event
\begin{equation}
    \label{eq:matching_success_event}
    \bigcap_{i \in [n] : t(i) \in [K]} (s(\pi(i)) = t(i)),
\end{equation}
which represents the successful matching of the points from the common components, holds with high probability. 

\begin{theorem}
    \label{thm:outlier_robustness}
    Suppose $\mu$ and $\nu$ satisfy Assumption \ref{a:mixture}. Let $X_1, \ldots, X_n$ and $Y_1, \ldots, Y_m$ be independent samples from $\mu$ and $\nu$, respectively. For $i \in [n]$ and $j \in [m]$, let $t(i) \in [t]$ and $s(j) \in [s]$ denote the components where $X_i$ and $Y_j$ are from, respectively. Let $\pi \colon [n] \to [m]$ be the output of Algorithm \ref{alg:distance_profile_matching}. Then, for any $\delta \in (0, 1)$, we have
    \begin{equation}
        \label{eq:matching_guarantee}
        \P\left(\bigcap_{i \in [n] : t(i) \in [K]} (s(\pi(i)) = t(i))\right) \ge 1 - \delta
    \end{equation}
    if
    \begin{equation}
        \label{eq:separation_condition}
        \begin{split}
            & \min_{\substack{\alpha \in [K] \\ \beta \in [s] \backslash \{\alpha\}}} \overline{W}(\alpha, \beta) - R \cdot \bigg(1 - \sum_{k \in [K]} p_k \bigg) \\
            & \ge \max\bigg(8 R \sqrt{\frac{\log K +\log (t + s - 2) + \log \tfrac{4}{\delta}}{2 \min(n, m)}}, \, 32 \Gamma \sqrt{2 d + \log \frac{4 (\max(n, m))^2}{\delta}}\bigg),
        \end{split}
    \end{equation}
    where $R = \max\{\|x - y\|_2 : x, y \in \{\theta_1, \ldots, \theta_t, \eta_1, \ldots, \eta_s\}\}$, $\Gamma = \max\{\sigma_1, \ldots, \sigma_t, \tau_1, \ldots, \tau_s\}$, and
    \begin{equation}
        \label{eq:W1_distance_profiles_centers}
        \overline{W}(\alpha, \beta) := W_1\left(\sum_{k = 1}^{t} p_k \delta_{\|\theta_{\alpha} - \theta_k\|_2}, \sum_{k = 1}^{s} q_k \delta_{\|\eta_{\beta} - \eta_k\|_2}\right) \quad \forall (\alpha, \beta) \in [t] \times [s].
    \end{equation}
\end{theorem}

The main takeaway of Theorem \ref{thm:outlier_robustness} is that we have the desired guarantee \eqref{eq:matching_guarantee} if
\begin{equation*}
    \begin{split}
        & \text{\textit{separation among the components compared to the proportion of outlier components}} \\
        & \ge \text{\textit{some threshold depending on the noise and sampling error}},
    \end{split}
\end{equation*}
which is exactly what \eqref{eq:separation_condition} states. Here, the left-hand side of \eqref{eq:separation_condition} captures the separation in terms of \eqref{eq:W1_distance_profiles_centers} in comparison to the proportion $1 - \sum_{k \in [K]} p_k$ of the outlier components, while the right-hand side is a threshold 
determined by the noise and sampling error. 

It is worth nothing that the right-hand side of \eqref{eq:separation_condition} contains the term $\Gamma \sqrt{d}$, which is sensible as a bound on the typical size of the noise displacement from the centers of the components. This is a minimal condition given that both sides of \eqref{eq:separation_condition} scale with typical Euclidean distances in $\R^d$. See Figure \ref{fig:centers}. Each sub-Gaussian component corresponds---in the worst case sense---to a ball of radius $\Gamma \sqrt{d}$, which has to be well-separated from the other balls to ensure that it is a distinct component. 

\begin{figure}[!htb]
    \centering
    \includegraphics[trim=0.7cm 0.7cm 0.7cm 0.7cm, clip=true, width=0.375\textwidth]{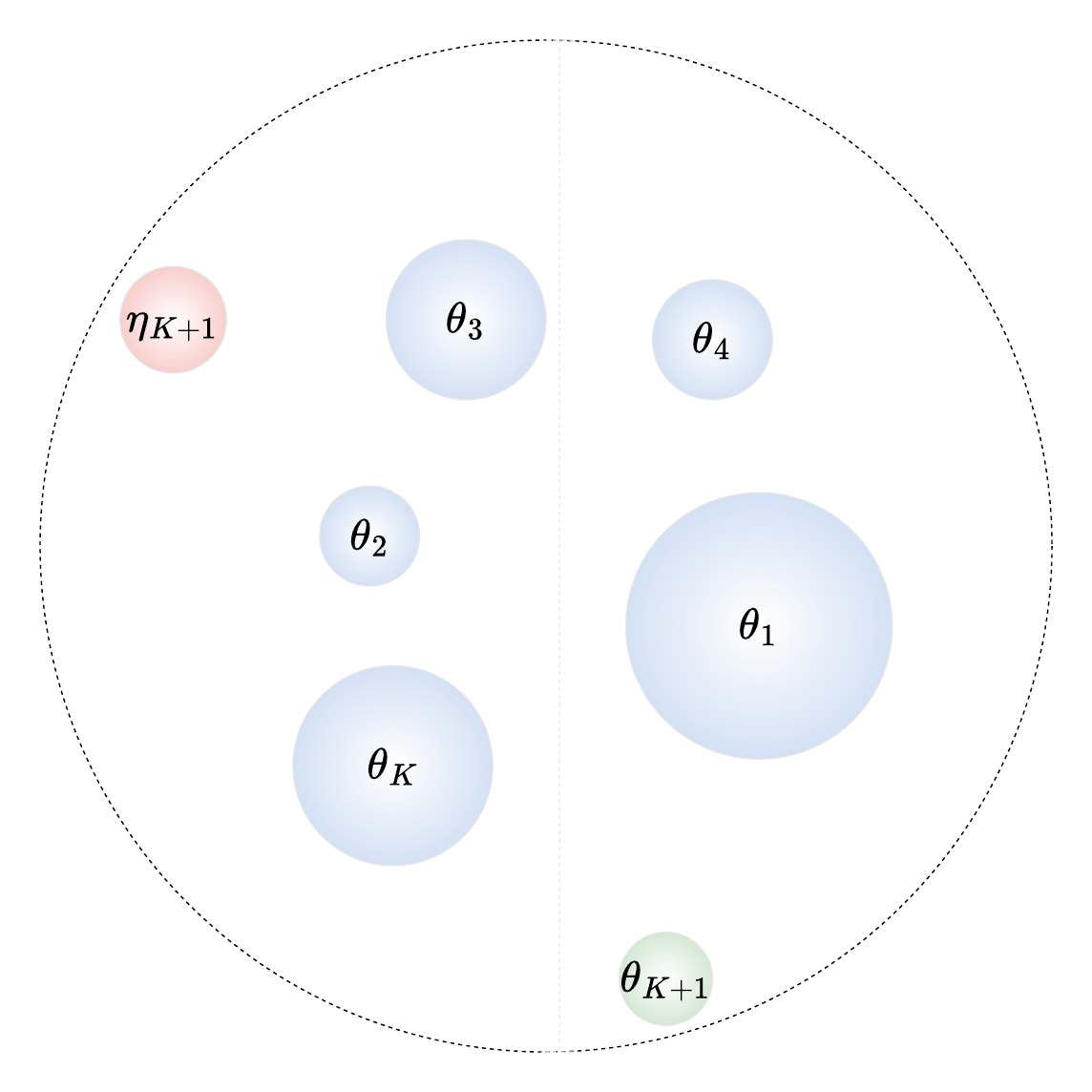}
    \caption{Illustration of the sub-Gaussian mixtures. For simplicity, each component is represented as a ball centered at the corresponding center, where the radius of the ball is roughly $\Gamma \sqrt{d}$. All of them are contained in a larger ball of diameter $R$.}
    \label{fig:centers}
\end{figure}

\subsection{One-to-One Matching Analysis: Noise Stability}
\label{sec:noise_stability}
We now analyze the noise stability of Algorithm \ref{alg:distance_profile_matching_assignment}. Again, remember that Algorithm \ref{alg:distance_profile_matching_assignment} is a special case of $\mathfrak{T}_1$ that finds a one-to-one correspondence, namely, assignment, which is not recommended in the random sampling setting of the previous section. Instead, this algorithm can be useful in the fixed-location model with noise; as we analyze below, we find that the matching procedure based on distance profiles is robust to the additive noise.

\paragraph{Fixed-Location Model with Noise}
We consider two sets of locations $\{\theta_1, \ldots, \theta_n\}$ and $\{\eta_1, \ldots, \eta_n\}$ in $\R^d$ and assume $\eta_{\pi^\ast(i)} = T(\theta_i)$ for $i \in [n]$, where $T \colon \R^d \to \R^d$ is an unknown rigid transformation and $\pi^\ast$ is an unknown permutation on $[n]$ representing the correspondence, say, $\theta_i$ and $\eta_{\pi^\ast(i)}$ are to be matched. Unlike the previous model in Section \ref{sec:outlier_robustness}, we assume there is a ground truth one-to-one correspondence $\pi^\ast$ between two sets of locations, which we want to recover from the following perturbed locations $X_i = \theta_i + \xi_i$ and $Y_i = \eta_i + \zeta_i$, where $\xi_i$ and $\zeta_i$ are independent mean-zero random vectors following sub-Gaussian distributions with variance proxies $\sigma_i^2$ and $\tau_i^2$, respectively.\footnote{Technically, one may think of this model as the previous model in Section \ref{sec:outlier_robustness} with $t = s = K$, namely, no outlier components. However, the key difference is that we no longer sample randomly from $\mu$ and $\nu$, which is not the setting we want to use Algorithm \ref{alg:distance_profile_matching_assignment} for as we previously discussed.} For instance, for $d = 3$, we can think of $X_i$'s and $Y_i$'s as the coordinates obtained by scanning some objects of interest but from different angles, say, frontal and lateral views, where the noise is added to the ground truth coordinates $\theta_i$'s and $\eta_i$'s during the scanning process. 

The goal is to recover the true permutation $\pi^\ast$ from the perturbed locations $X_i$'s and $Y_i$'s. We show that the output $\hat{\pi}$ of Algorithm \ref{alg:distance_profile_matching_assignment} recovers $\pi^\ast$ with high probability. As discussed in Section \ref{sec:outlier_robustness}, several conditions determine the success of the matching. Again, the configuration of the locations $\theta_1, \ldots, \theta_n, \eta_1, \ldots, \eta_n$ is crucial for the matching to be well-defined. To this end, we suppose the following configuration condition:
\begin{equation}
    \label{eq:separation_locations}
    \Phi := \min_{i \neq j} W_1\left(\frac{1}{n} \sum_{k = 1}^{n} \delta_{\|\theta_i - \theta_k\|_2}, \frac{1}{n} \sum_{k = 1}^{n} \delta_{\|\theta_j - \theta_k\|_2}\right) > 0.
\end{equation}
The condition \eqref{eq:separation_locations} ensures that $\pi^\ast$ is the unique optimal matching in the following sense:
\begin{equation*}
    \pi^\ast = \argmin_{\pi \in \cS_n} \sum_{i = 1}^{n} W_1\left(\frac{1}{n} \sum_{k = 1}^{n} \delta_{\|\theta_i - \theta_k\|_2}, \frac{1}{n} \sum_{k = 1}^{n} \delta_{\|\eta_{\pi(i)} - \eta_k\|_2}\right) =: \argmin_{\pi \in \cS_n} \cQ(\pi).
\end{equation*}
To see this, note that $\cQ(\pi^\ast) = 0$, while $\cQ(\pi) > 0$ for $\pi \neq \pi^\ast$ as $\Phi > 0$. The following theorem shows that the output $\hat{\pi}$ of Algorithm \ref{alg:distance_profile_matching_assignment} recovers $\pi^\ast$ with high probability given this condition \eqref{eq:separation_locations} and the variances $\sigma_i$'s and $\tau_i$'s are sufficiently small.

\begin{theorem}
    \label{thm:noise_stability}
    Suppose that the locations $\theta_1, \ldots, \theta_n, \eta_1, \ldots, \eta_n \in \R^d$ satisfy
    \begin{equation}
        \label{eq:locations_eta}
        \eta_{\pi^\ast(i)} = T(\theta_i) \quad \forall i \in [n],
    \end{equation}
    for some unknown rigid transformation $T \colon \R^d \to \R^d$ and some unknown permutation $\pi^\ast$ on $[n]$. Now, suppose we have the following noisy locations:
    \begin{equation}
        \label{eq:noise_stability_X_Y}
        X_i = \theta_i + \xi_i \quad \text{and} \quad Y_i = \eta_i + \zeta_i,
    \end{equation}
    where $\xi_i$ and $\zeta_i$ are independent mean-zero random vectors following sub-Gaussian distributions with variance proxies $\sigma_i^2$ and $\tau_i^2$, respectively. Let $\hat{\pi}$ be the output of Algorithm \ref{alg:distance_profile_matching_assignment}. Then, for any $\delta \in (0, 1)$, we have
    \begin{equation*}
        \P(\hat{\pi} = \pi^\ast) \ge 1 - \delta
    \end{equation*}
    if for $\Phi$ defined in \eqref{eq:separation_locations}, the following condition holds:
    \begin{equation}
        \label{eq:noise_level_condition}        
        \max\{\sigma_1, \ldots, \sigma_n, \tau_1, \ldots, \tau_n\} \le \frac{\Phi}{16 \sqrt{2 d + \log(2 n^2 / \delta)}}.
    \end{equation}
\end{theorem}

For the case where there is no rigid transformation, namely, $T$ is the identity map, the noise stability guarantee in the above model is well-studied in the literature, known as the feature matching problem. In this case, the simplest approach is to solve the following discrete optimal transport problem \citep{peyre2019computational}:, 
\begin{equation}
    \label{eq:feature_matching_linear_assignment}
    \hat{\pi}^{\mathrm{OT}} = \argmin_{\pi \in \cS_n} \sum_{i = 1}^{n} \|X_i - Y_{\pi(i)}\|_2^2.
\end{equation} 
It is well-studied \citep{collier2016minimax,galstyan2022optimal} that such a procedure is indeed guaranteed to recover $\pi^\ast$ with high probability if the separation among $\theta_i$'s is large enough compared to the noise level. However, in the presence of an unknown rigid transformation $T$, such a procedure is not guaranteed to succeed, while $\hat{\pi}$ in Theorem \ref{thm:noise_stability} is invariant to rigid transformations. We demonstrate this point in Section \ref{sec:simulations}.

\section{Simulations}
\label{sec:simulations}

\subsection{Synthetic Data}

\subsubsection{Gaussian Mixture Matching}
We first provide a simulation study to demonstrate the performance of the matching procedure discussed in Section \ref{sec:outlier_robustness}. Figure \ref{fig:gmm_matching}(a) shows the probability of the perfect matching, namely, the left-hand side of \eqref{eq:matching_guarantee}. Here, $d = 3$ and $K = 10$, and we consider the clean case $t = s = K$ and the case with one outlier component $t = s = K + 1$ for the sample size $n = m \in \{500, 1000\}$; also, we let $p_k = q_k = \frac{1}{t}$ and consider Gaussian mixtures, namely, $\mu_k = N(\theta_k, \sigma^2 I_d)$ and $\nu_k = N(\eta_k, \sigma^2 I_d)$, where we vary $\sigma$. We can observe that the probability of the perfect matching decreases as $\sigma$ grows for both cases. Though the presence of an outlier component deteriorates the matching performance as expected, we can verify that Algorithm \ref{alg:distance_profile_matching} is robust to the outlier components in that the recovery probability is sufficiently high when $\sigma$ is small. Comparing the plots for $n = m = 500$ and $n = m = 1000$, we can see that the matching performance is better for the larger sample size as it better approximates the population-level distance profiles.

\begin{figure}[!ht]
    \centering
    \subfloat[Perfect matching probability]{\includegraphics[trim=0.1cm 0.45cm 0.1cm 0.35cm, clip=true, width=0.42\textwidth]{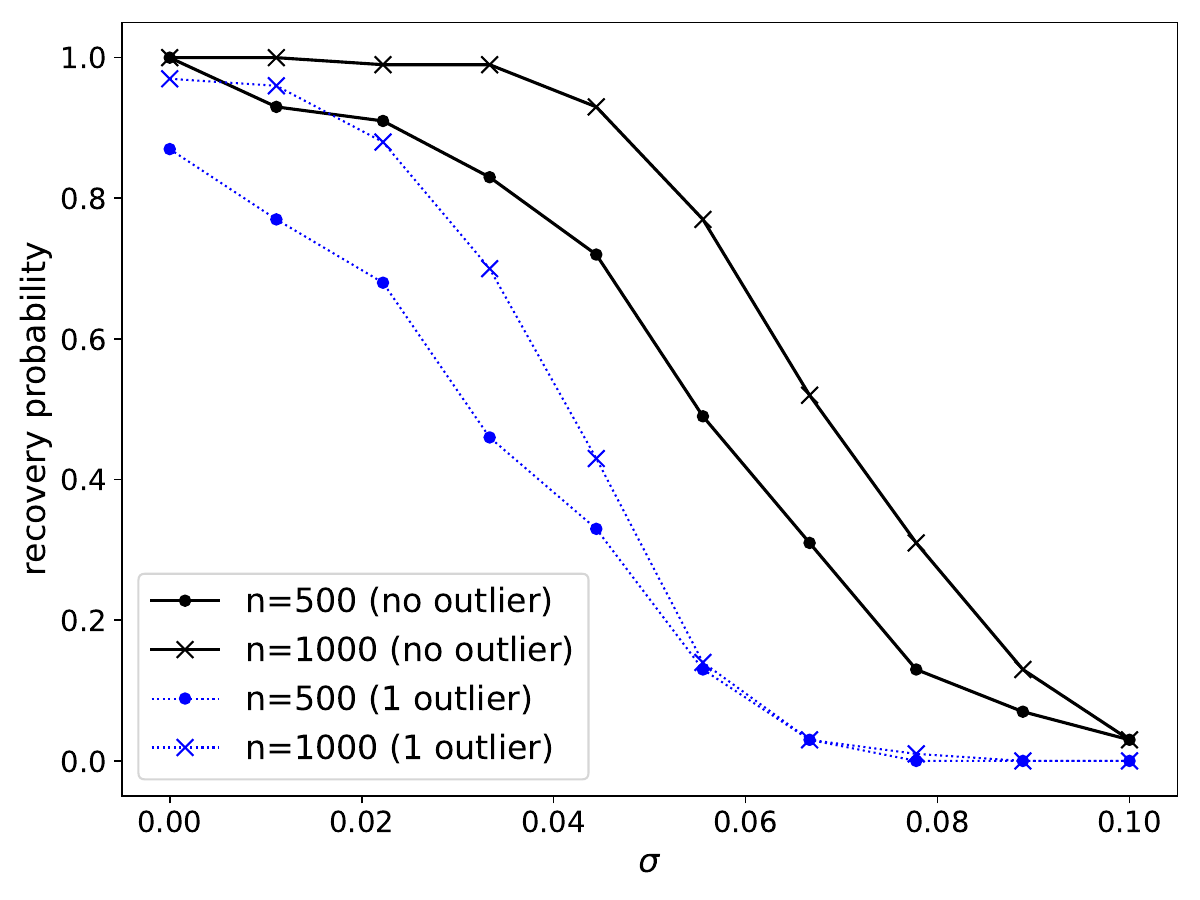}}
    \subfloat[Matching accuracy ($K = 10$)]{\includegraphics[trim=0.1cm 0.45cm 0.1cm 0.35cm, clip=true, width=0.42\textwidth]{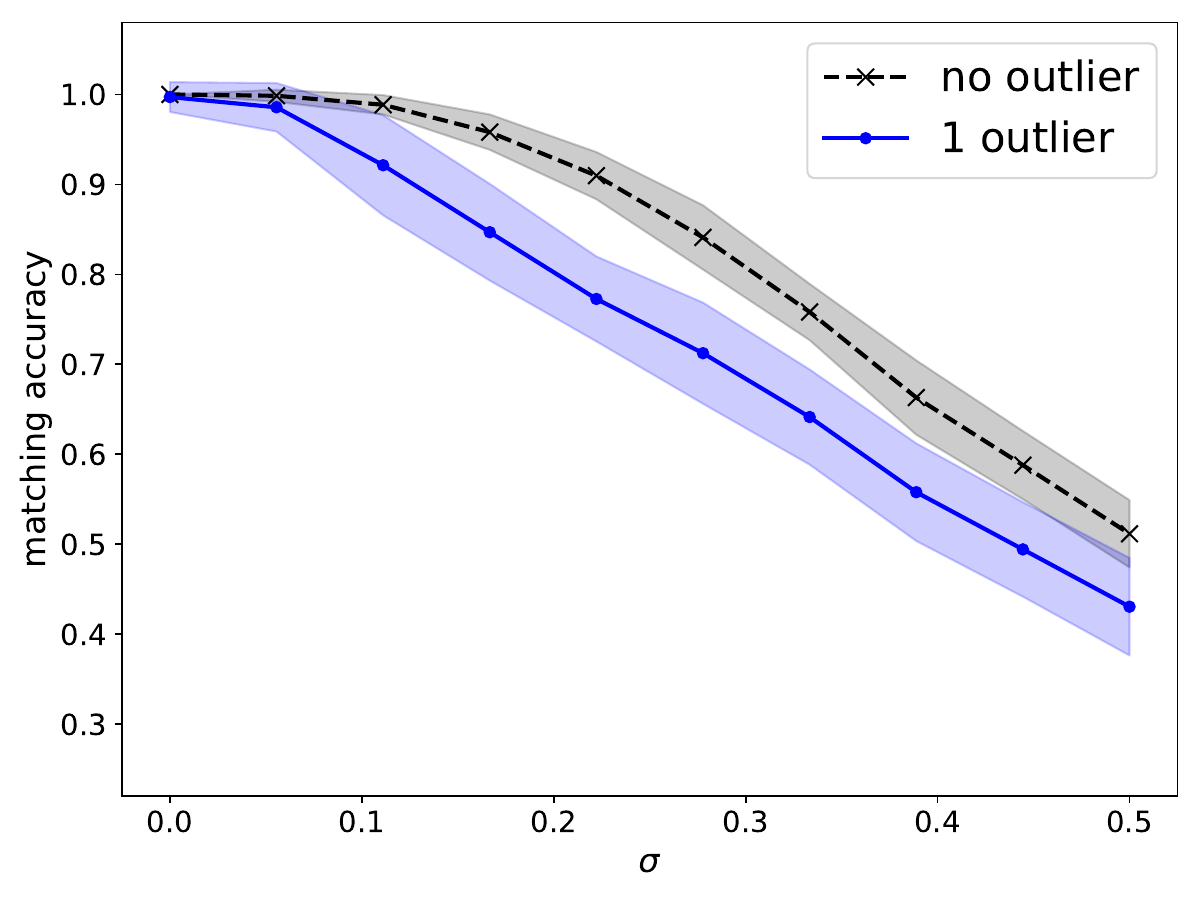}} \\
    \subfloat[Matching accuracy (varying $K$)]{\includegraphics[trim=0.1cm 0.45cm 0.1cm 0.35cm, clip=true, width=0.42\textwidth]{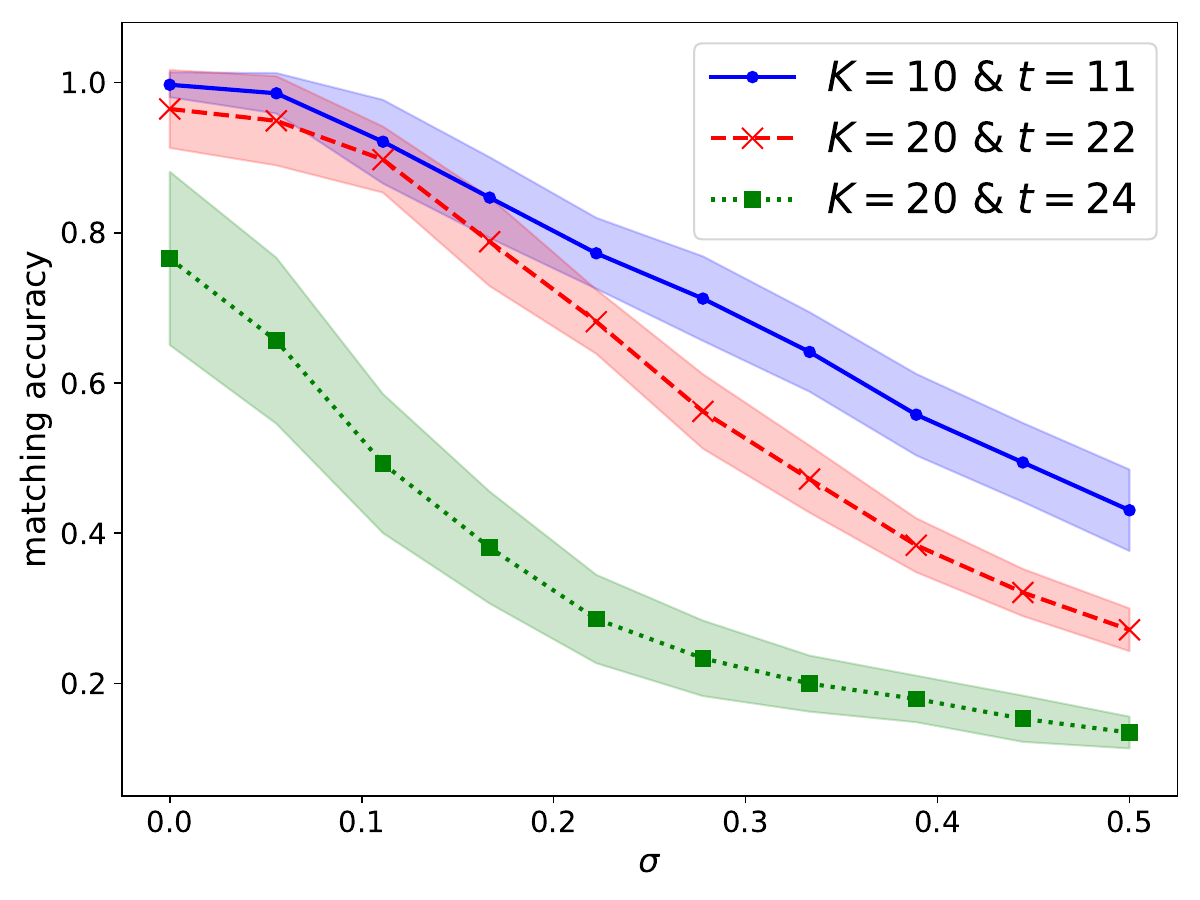}} 
    \subfloat[Matching accuracy (varying $d$)]{\includegraphics[trim=0.1cm 0.45cm 0.1cm 0.35cm, clip=true, width=0.42\textwidth]{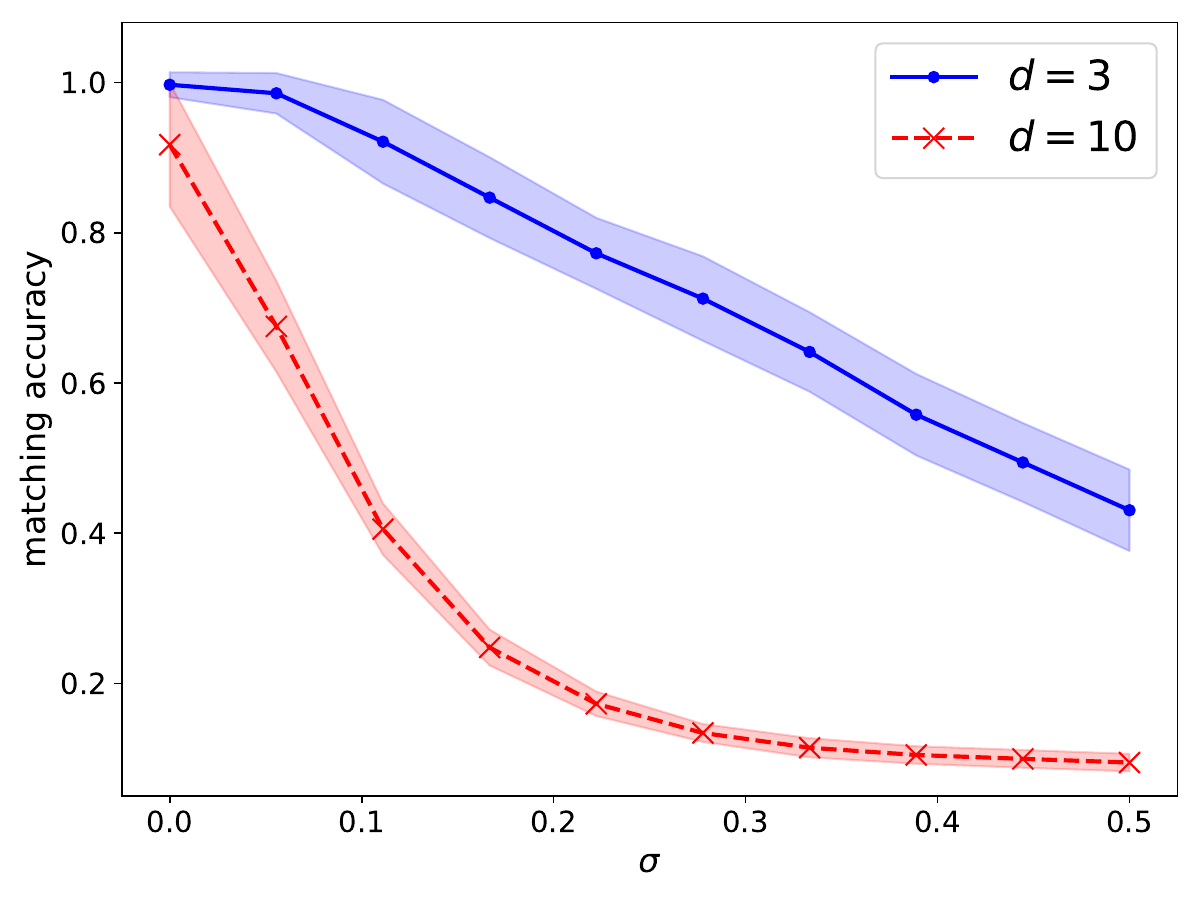}} 
    \caption{Gaussian mixture matching results with $p_k = q_k = \frac{1}{t}$, $\mu_k = N(\theta_k, \sigma^2 I_d)$, $\nu_k = N(\eta_k, \sigma^2 I_d)$, and varying $\sigma$. (a) shows the probability---the left-hand side of \eqref{eq:matching_guarantee}---with $d = 3$, $K = 10$, and the sample size $n = m \in \{500, 1000\}$. (b) shows the matching accuracy \eqref{eq:mixture_matching_accuracy} for the same setting as (a), while (c), (d) show the matching accuracy for different settings, varying $K, d$, respectively, with $n = m = 1000$ for (b)-(d). For all plots, the sampling from $\mu, \nu$ is repeated 100 times. For (a), the markers show the probability estimated from the 100 realizations; for (b)-(d), they show the average of the matching accuracy over 100 realizations, with the standard deviations shown as the shaded regions. The blue solid lines of (b)-(d) are the same, showing the case of $K = 10$ and $t = s = 11$.}
    \label{fig:gmm_matching}
\end{figure}

Although the analysis in Section \ref{sec:outlier_robustness} was based on the probability of the perfect matching, it is reasonable to expect matching errors in practice. Figure \ref{fig:gmm_matching}(b) shows the matching accuracy for the same setting as Figure \ref{fig:gmm_matching}(a) with $n = m = 1000$, where the matching accuracy is the proportion of correctly matched $X_i$'s from the first $K$ components:
\begin{equation}
    \label{eq:mixture_matching_accuracy}
    \frac{|\{i \in [n] : t(i) \in [K] ~~ \text{and} ~~ s(\pi(i)) = t(i)\}|}{|\{i \in [n] : t(i) \in [K]\}|},
\end{equation}
where $t, s$ are defined as in Theorem \ref{thm:outlier_robustness}. Again, we can see that the proposed method performs reasonably well. Figure \ref{fig:gmm_matching}(c) compares the case with one outlier component ($K = 10$ and $t = s = 11$) to the case with more outlier components but with the same proportion ($K = 20$ and $t = s = 22$) and to the case with more outlier components with larger proportions ($K = 20$ and $t = s = 24$). We can observe that the matching accuracy decreases as the proportion of outlier components increases, which is consistent with the theoretical results in Section \ref{sec:outlier_robustness}. Even when the proportion $(t - K) / K$ is the same, the matching accuracy is lower for larger $t$ as the larger effective dimensions of the underlying distributions $\mu, \nu$ lead to larger estimation errors given the same sample size. Figure \ref{fig:gmm_matching}(d) compares the case with $d = 3$ to the case with $d = 10$ for $K = 10$ and $t = s = 11$. We can see that the matching accuracy is lower and deteriorates much faster for larger $d$.

\subsubsection{Noise Stability}
Next, we provide simulation results demonstrating the noise stability discussed in Section \ref{sec:noise_stability}. We first obtain $n$ locations $\theta_1, \ldots, \theta_n \in \R^d$ and define the other locations $\eta_1, \ldots, \eta_n$ as in \eqref{eq:locations_eta} using some rigid transformation $T$ and a permutation $\pi^\ast$. Then, we generate the points according to \eqref{eq:noise_stability_X_Y} and obtain $\hat{\pi}$ by applying Algorithm \ref{alg:distance_profile_matching_assignment}; we estimate $\P(\hat{\pi} = \pi^\ast)$ by repeating this for 100 times. Figure \ref{fig:noise_stability} shows the results, where the noise variables $\xi_i$'s and $\zeta_i$'s follow $N(0, \sigma^2 I_d)$, and we vary $\sigma$ and compare with the probability $\P(\hat{\pi}^{\mathrm{OT}} = \pi^\ast)$ for $\hat{\pi}^{\mathrm{OT}}$ obtained by the procedure \eqref{eq:feature_matching_linear_assignment}. As shown in Figure \ref{fig:noise_stability}(a), when the rigid transformation is based on a rotation matrix that only affects the first two coordinates, both methods can recover the true permutation to some extent for small enough $\sigma$; in this case, as the rotation matrix is relatively close to the identity matrix, the standard optimal transport method \eqref{eq:feature_matching_linear_assignment} shows good performance. For a rotation matrix that changes all the coordinates, however, the standard optimal transport method fails to recover the true permutation as shown in Figure \ref{fig:noise_stability}(b), while the distance profile matching method performs almost as in Figure \ref{fig:noise_stability}(a).

Lastly, we briefly comment that the plots of Figure \ref{fig:gmm_matching}(a) and Figure \ref{fig:noise_stability} suggest the existence of a phase transition in the noise level, where the performance of the matching methods might change dramatically under certain settings. We leave this for future work.

\begin{figure}[!htb]
    \centering
    \subfloat[Changing two coordinates]{\includegraphics[trim=0.1cm 0.2cm 0.1cm 0.2cm, clip=true, width=0.44\textwidth]{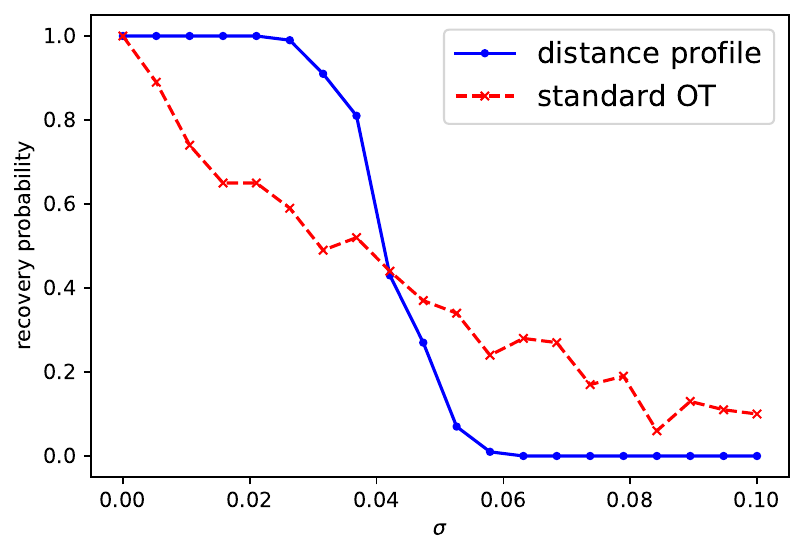}}
    \subfloat[Changing all coordinates]{\includegraphics[trim=0.1cm 0.2cm 0.1cm 0.2cm, clip=true, width=0.44\textwidth]{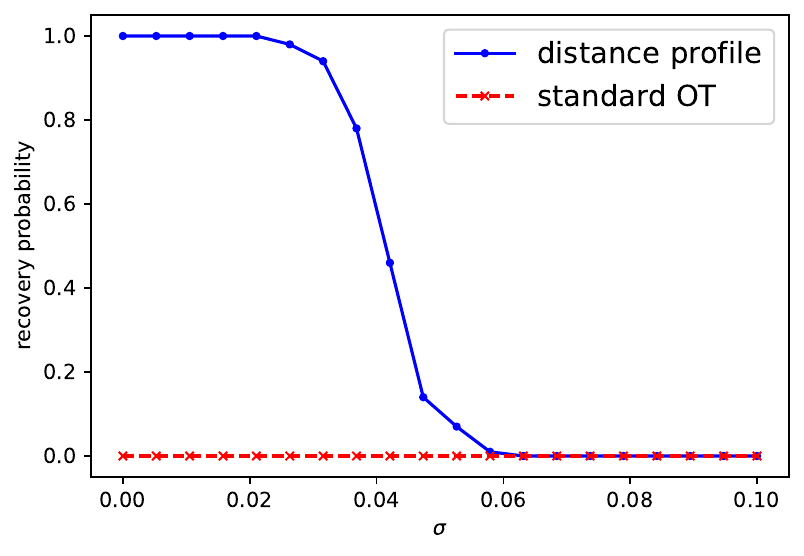}}
    \caption{Noise stability results under the setting of Theorem \ref{thm:noise_stability} with $\xi_i$'s and $\zeta_i$'s following $N(0, \sigma^2 I_d)$, where $d = 10$ and $n = 100$. The plots show the probability $\P(\hat{\pi} = \pi^\ast)$, where $\hat{\pi}$ corresponds to the output of Algorithm \ref{alg:distance_profile_matching_assignment} for the blue lines (distance profile) and to \eqref{eq:feature_matching_linear_assignment} for the red lines (standard OT); here, the probability is estimated by repeating each setting 100 times. For (a), the rigid transformation is based on a rotation matrix that only affects the first two coordinates, while (b) is based on a random rotation matrix changing all $d$ coordinates.}
    \label{fig:noise_stability}
\end{figure}

\subsection{Applications}
\label{sec:data_applications}

\subsubsection{Shape Correspondence}
We demonstrate how Algorithm \ref{alg:distance_profile_matching} in practice using stylized data examples. We apply it to the shape correspondence using CAPOD data \citep{papadakis2014canonically}. Figure \ref{fig:dolphin_matching}(a) and Figure \ref{fig:dolphin_matching}(d) show the two 3D images representing dolphins with different poses, from which we obtain point clouds $X_1, \ldots, X_n$ and $Y_1, \ldots, Y_m$ as shown in Figure \ref{fig:dolphin_matching}(b) and Figure \ref{fig:dolphin_matching}(e), respectively, where $n = m = 285$ in this example. We want to match the points corresponding to the same part of the dolphins, rather than finding a one-to-one correspondence. As explained earlier, it is reasonable to match similar points to the same point. For instance, there are many points concentrated around the nose---called the rostrum---of the source dolphin, which are expected to be matched to any points concentrated around the rostrum of the target dolphin. Hence, we first divide the dolphins into several components. Here, we use k-means clustering with k = $7$, which indeed cluster points based on the body parts. 

\begin{figure}[!htb]
    \centering
    \subfloat[Source object]{\includegraphics[trim=9cm 11cm 9cm 14cm, clip=true, width=0.33\textwidth]{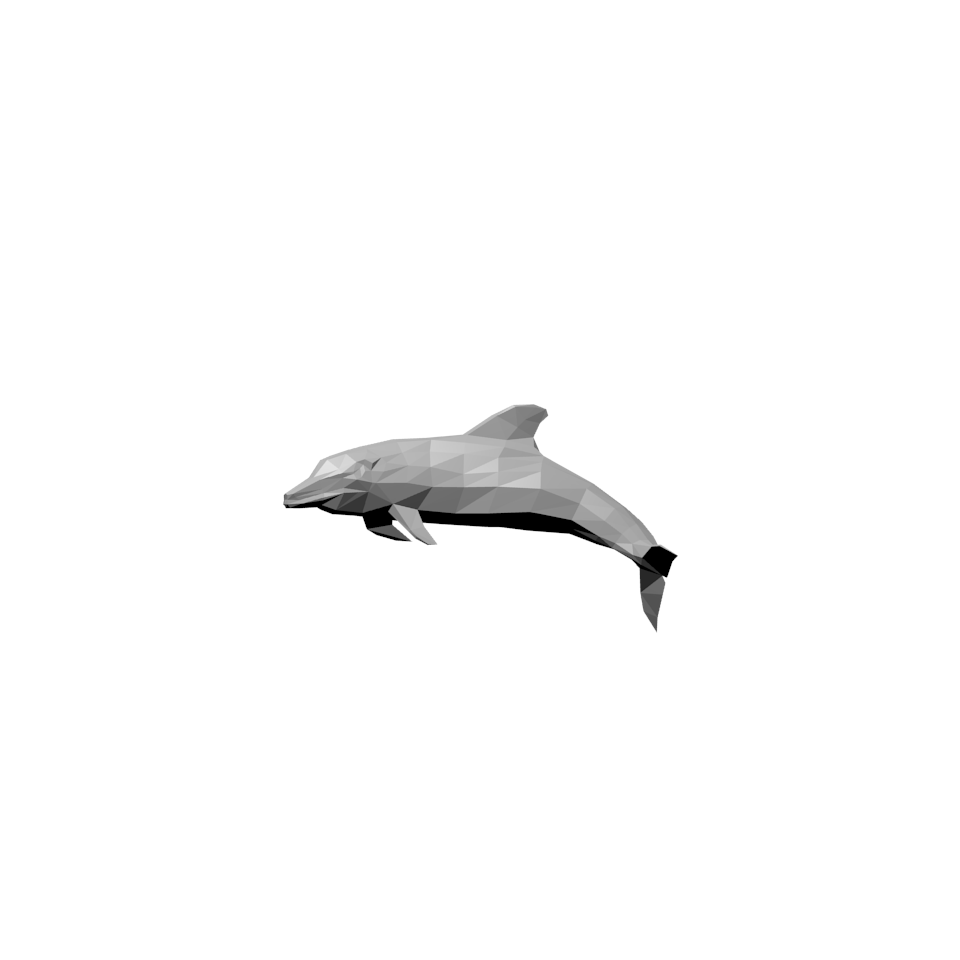}}
    \subfloat[Source points (color = $s(\pi(i))$)]{\includegraphics[trim=1.0cm 3.8cm .5cm 3.0cm, clip=true, width=0.33\textwidth]{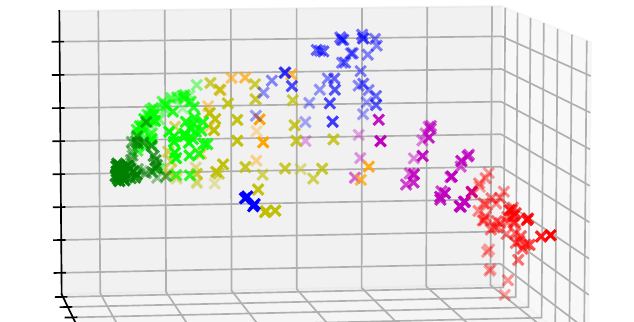}}
    \subfloat[Source points (color = $s(\pi(i))$)]{\includegraphics[trim=1.0cm 3.8cm .5cm 3.0cm, clip=true, width=0.33\textwidth]{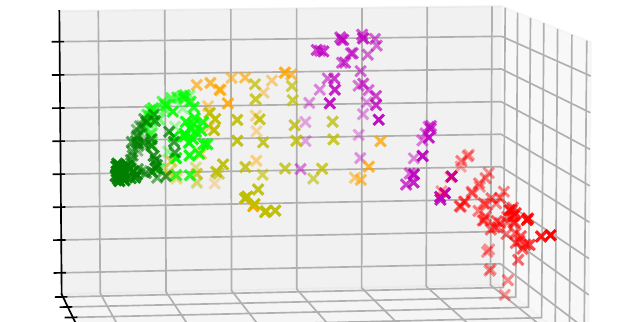}}    
    \qquad
    \subfloat[Target object]{\includegraphics[trim=9cm 12cm 8cm 14cm, clip=true, width=0.33\textwidth]{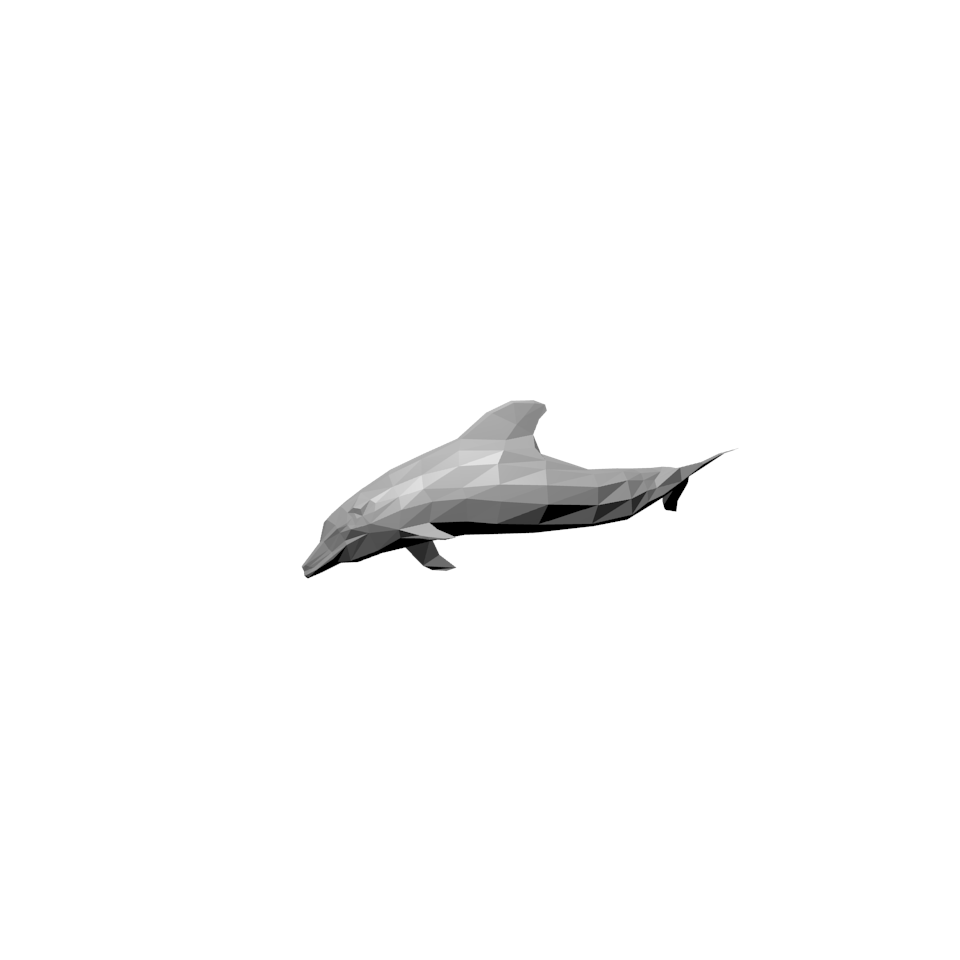}}
    \subfloat[Target points (color = $s(j)$)]{\includegraphics[trim=.3cm 3.5cm 1.5cm 4.0cm, clip=true, width=0.33\textwidth]{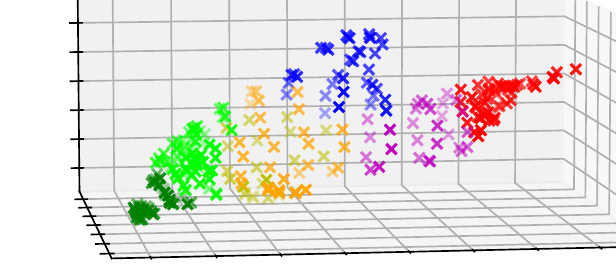}}
    \subfloat[Target points (color = $s(j)$)]{\includegraphics[trim=.3cm 3.5cm 1.5cm 4.0cm, clip=true, width=0.33\textwidth]{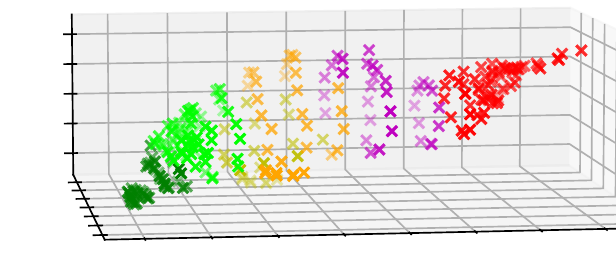}}
    \caption{3D images of dolphins and the corresponding point clouds. (a) and (d) are the source and target 3D images, from which we obtain point clouds $X_1, \ldots, X_n$ and $Y_1, \ldots, Y_m$, with $n = m = 285$, as shown in (b) and (e), respectively. We apply k-means clustering to both point clouds with k $= 7$ to cluster them based on the body parts. (e) shows the target points $Y_j$'s colored by $s(j)$'s, where $s(j) \in \{1, \ldots, 7\}$ denotes the index of the component that $Y_j$ belongs to. (b) shows the source points $X_i$'s colored by $s(\pi(i))$'s, where $\pi$ is obtained by applying Algorithm \ref{alg:distance_profile_matching}. (c) and (f) repeat (b) and (e), respectively, after removing 16 points around the fin of the target dolphin in (e) and clustering again with k $= 6$.}
    \label{fig:dolphin_matching}
\end{figure}

Figure \ref{fig:dolphin_matching}(e) shows the target points $Y_j$'s colored by $s(j)$'s, where $s(j) \in \{1, \ldots, 7\}$ is the index of the component that $Y_j$ belongs to. Now, we run Algorithm \ref{alg:distance_profile_matching} and obtain the output $\pi \colon [n] \to [m]$. For each $X_i$, we verify if the part that it belongs to coincides with the part that its match $Y_{\pi(i)}$ belongs to, which is indexed by $s(\pi(i))$. To this end, Figure \ref{fig:dolphin_matching}(b) shows the source points $X_i$'s colored by $s(\pi(i))$'s. We can observe that the proposed procedure recovers the corresponding regions mostly correctly. For instance, for $X_i$'s corresponding to the rostrum of the source dolphin, their colors $s(\pi(i))$'s (green) indicate that they are matched to the points corresponding to the rostrum of the target dolphin. Similarly, the points corresponding to the head (lime), tail (red), dorsal fin (blue), and peduncle (magenta) are mostly matched correctly, while there are some mismatches, for instance, among the points around the pectoral fin. Figure \ref{fig:dolphin_matching}(c) and Figure \ref{fig:dolphin_matching}(f) repeat the same procedure after removing 16 points around the fin of the target dolphin in Figure \ref{fig:dolphin_matching}(e) and clustering again with k = $6$. Figure \ref{fig:dolphin_matching}(c) shows that the matching result is similar to the previous one. In this case, though the points around the fin of the source dolphin do not have corresponding points in the target dolphin and thus we may ignore matching them, the proposed procedure still matches them to the points around the peduncle (magenta) of the target dolphin, which is a cluster adjacent to the removed fin. 

Figure \ref{fig:horse_matching} repeats the same procedure for the 3D images of horses. Again, we can verify that the proposed procedure matches the corresponding body parts mostly correctly. Figure \ref{fig:horse_matching}(c) and Figure \ref{fig:horse_matching}(f) shows the results after removing 62 points around the two legs on the target horse's left side in Figure \ref{fig:horse_matching}(e) and clustering again with k = $6$. We again observe that the points in the common parts are mostly matched correctly.

\begin{figure}[!htb]
    \centering
    \subfloat[Source object]{\includegraphics[trim=6cm 9cm 7cm 8cm, clip=true, width=0.33\textwidth]{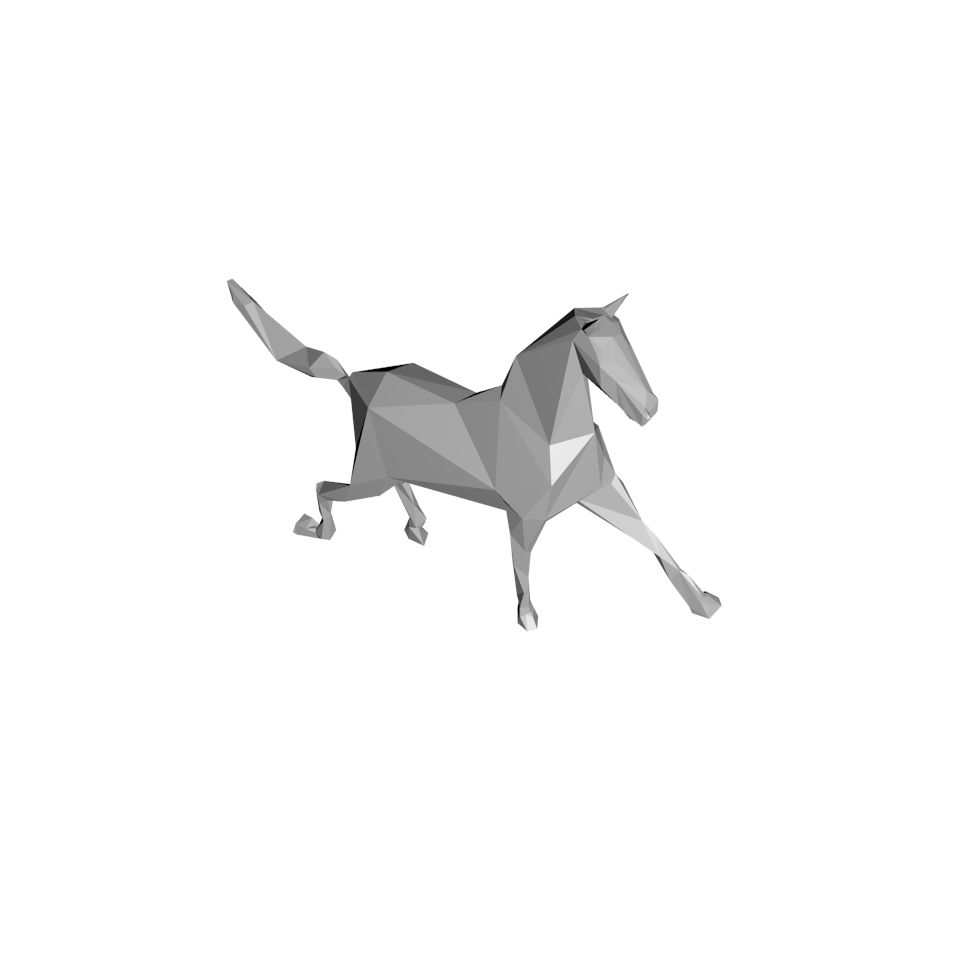}}
    \subfloat[Source points (color = $s(\pi(i))$)]{\includegraphics[trim=1.0cm 2.5cm .5cm 2.0cm, clip=true, width=0.33\textwidth]{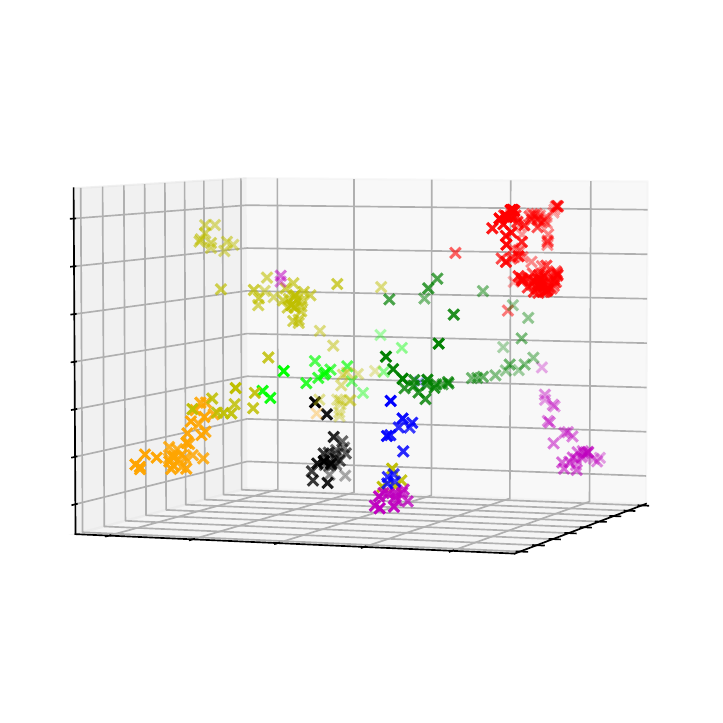}}
    \subfloat[Source points (color = $s(\pi(i))$)]{\includegraphics[trim=1.0cm 2.5cm .5cm 2.0cm, clip=true, width=0.33\textwidth]{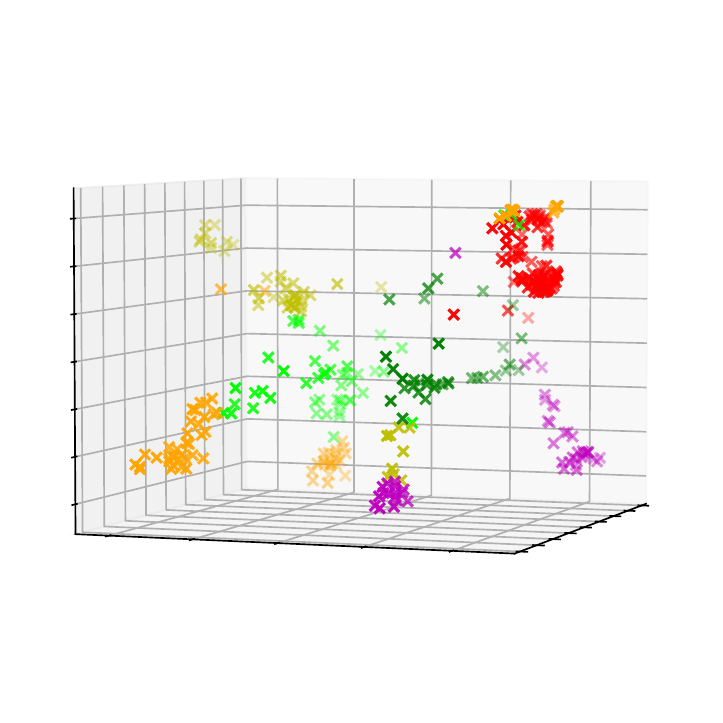}}    
    \qquad
    \subfloat[Target object]{\includegraphics[trim=6cm 9cm 7cm 8cm, clip=true, width=0.33\textwidth]{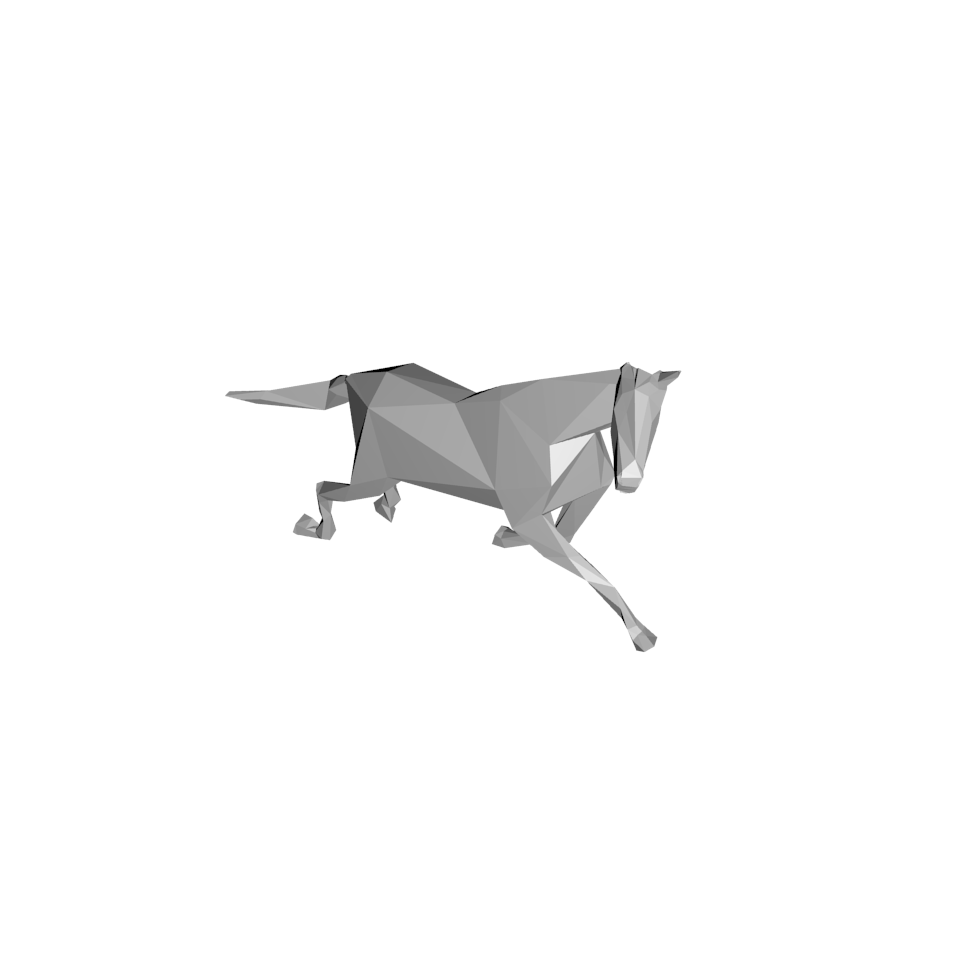}}
    \subfloat[Target points (color = $s(j)$)]{\includegraphics[trim=.3cm 2.5cm 1.5cm 2.0cm, clip=true, width=0.33\textwidth]{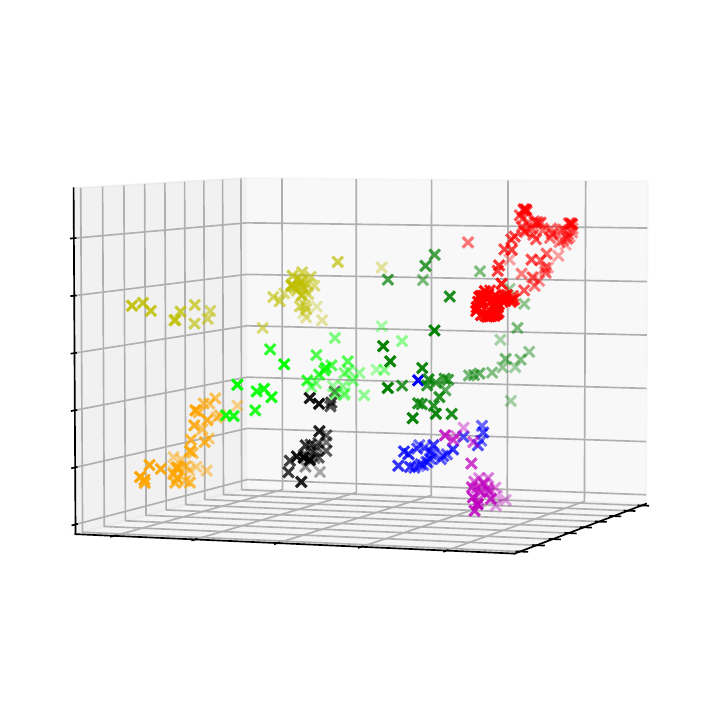}}
    \subfloat[Target points (color = $s(j)$)]{\includegraphics[trim=.3cm 2.5cm 1.5cm 2.0cm, clip=true, width=0.33\textwidth]{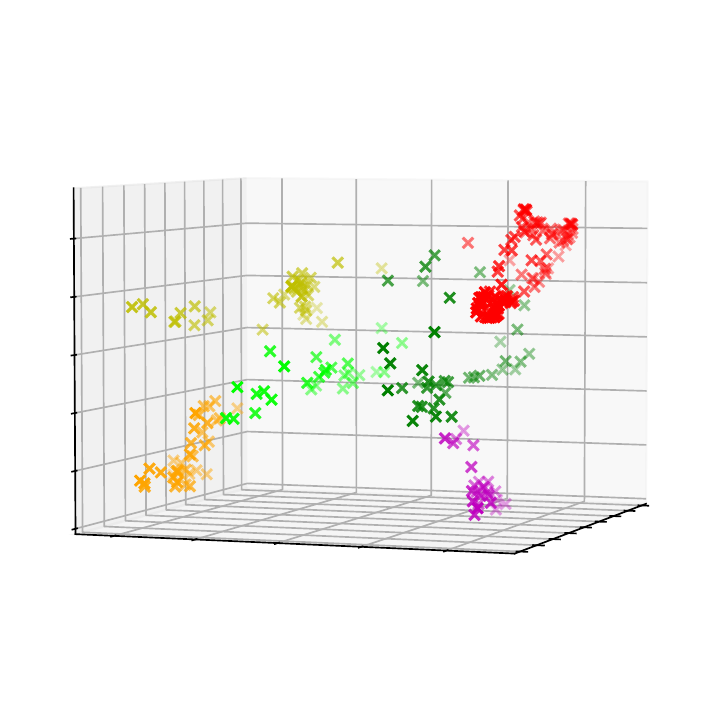}}
    \caption{3D images of horses and the corresponding point clouds. (a) and (d) are the source and target 3D images, from which we obtain point clouds $X_1, \ldots, X_n$ and $Y_1, \ldots, Y_m$, with $n = m = 346$, as shown in (b) and (e), respectively. We apply k-means clustering to both point clouds with k $= 8$ to cluster them based on the body parts. (e) shows the target points $Y_j$'s colored by $s(j)$'s, where $s(j) \in \{1, \ldots, 8\}$ denotes the index of the component that $Y_j$ belongs to. (b) shows the source points $X_i$'s colored by $s(\pi(i))$'s, where $\pi$ is obtained by applying Algorithm \ref{alg:distance_profile_matching}. (c) and (f) repeat (b) and (e), respectively, after removing 62 points around the two legs on the target horse's left side in (e) and clustering again with k $= 6$.}
    \label{fig:horse_matching}    
\end{figure}

\subsubsection{Partial Matching for Alignment of Cryo-EM Structures}
\label{sec:cryo-em}
We apply the developed partial matching procedure to align cryogenic electron microscopy (cryo-EM) structures of human $\gamma$-secretase. Cryo-EM has gained significant attention in the past decade as a powerful tool for unveiling the structures of biomolecules at near-atomic resolution. However, many computational challenges arise; see \cite{bendory2020single} for a review. Aligning cryo-EM structures is particularly challenging due to high noise levels and different conformations. Figure \ref{fig:gamma-secretase}(a) shows the 3D model of human $\gamma$-secretase, which is thought to be involved in the development of Alzheimer's disease. The cryo-EM data\footnote{Listed as 5A63 at the Protein Data Bank (PDB) \url{https://www.rcsb.org/structure/5A63} and as EMD-3061 at the Electron Microscopy Data Bank (EMDB) \url{https://www.ebi.ac.uk/emdb/EMD-3061}.} of human $\gamma$-secretase \citep{bai2015atomic} essentially represents a probability density function of the intensities of the object, stored as voxels on a 3D grid. Figures \ref{fig:gamma-secretase}(b)-(d) show the projections of the cryo-EM data along the $x$, $y$, and $z$ axes, respectively, which clearly demonstrate high noise levels.

\begin{figure}[!tb]
    \centering
    \subfloat[3D Model]{\includegraphics[width=0.25\textwidth]{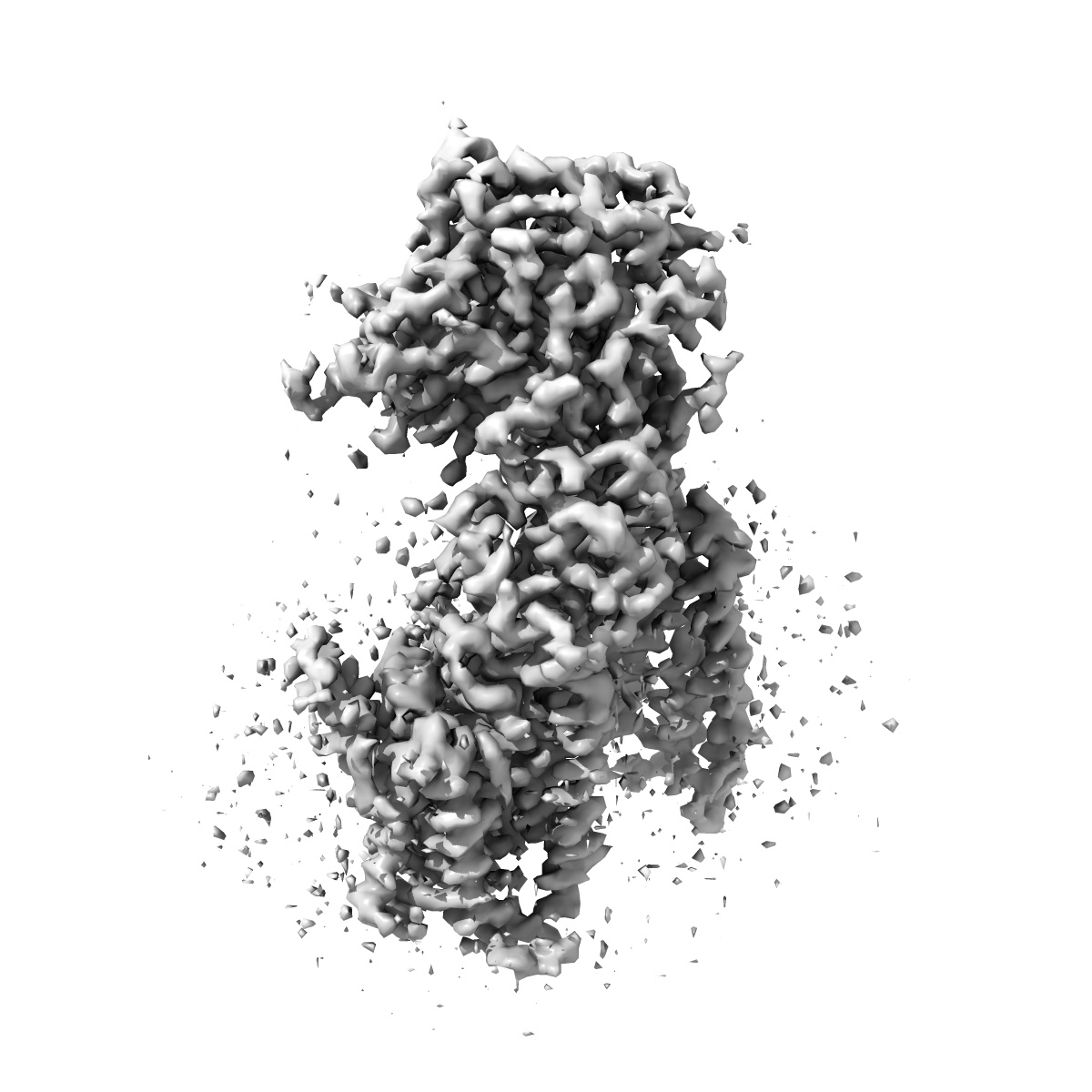}}
    \subfloat[Projection ($x$-axis)]{\includegraphics[trim=0.1cm 0.1cm 0.1cm 0.1cm, clip=true, width=0.25\textwidth]{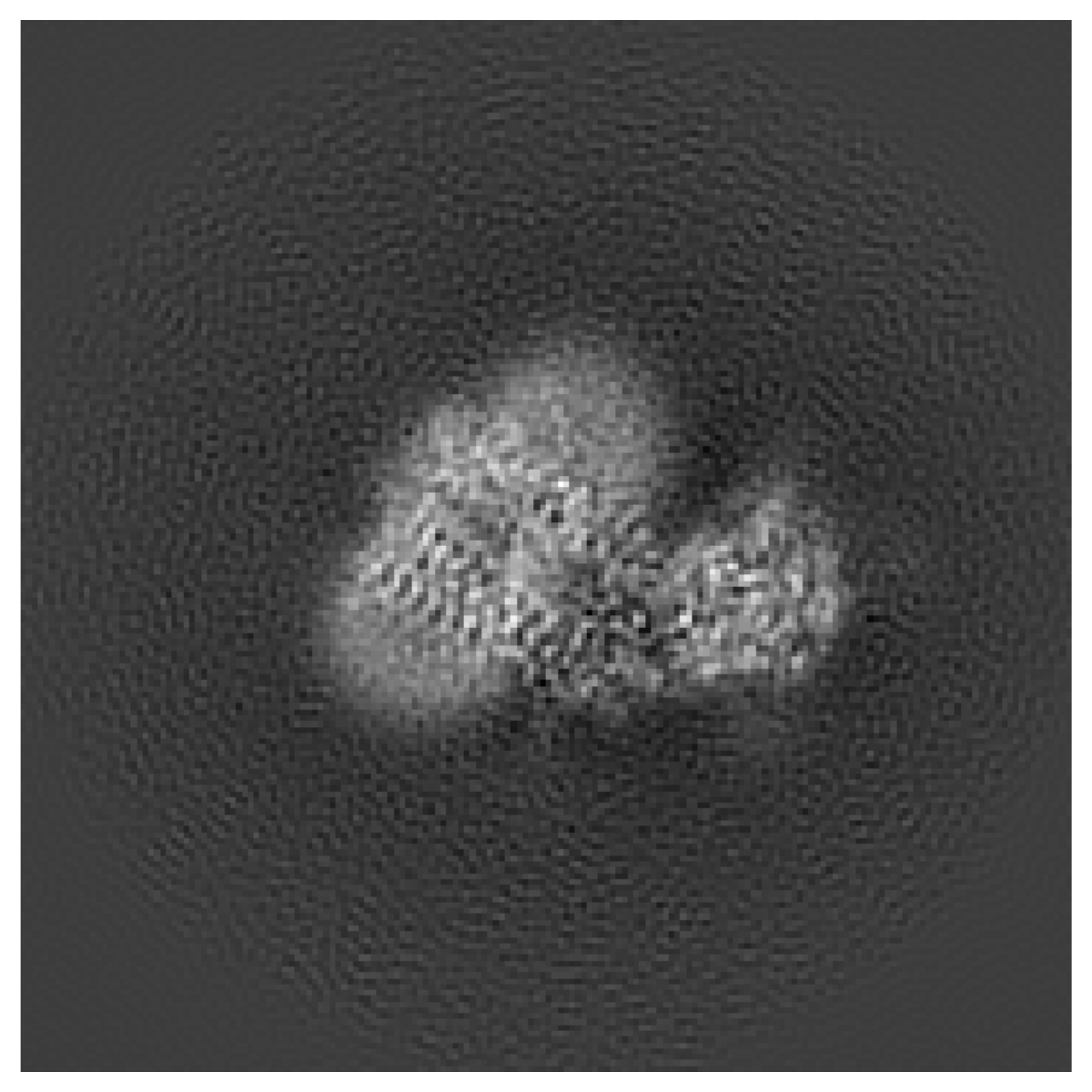}}
    \subfloat[Projection ($y$-axis)]{\includegraphics[trim=0.1cm 0.1cm 0.1cm 0.1cm, clip=true, width=0.25\textwidth]{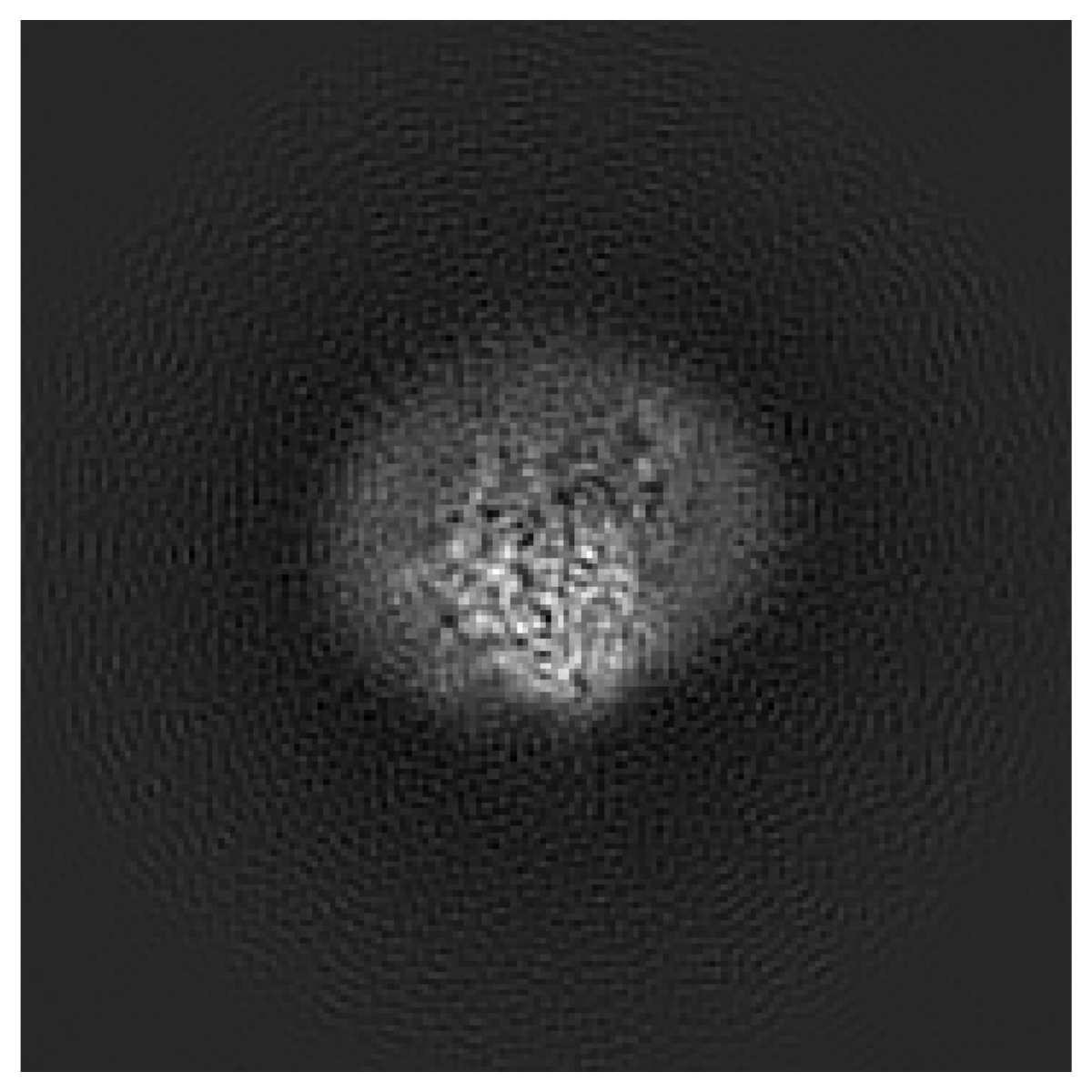}}
    \subfloat[Projection ($z$-axis)]{\includegraphics[trim=0.1cm 0.1cm 0.1cm 0.1cm, clip=true, width=0.25\textwidth]{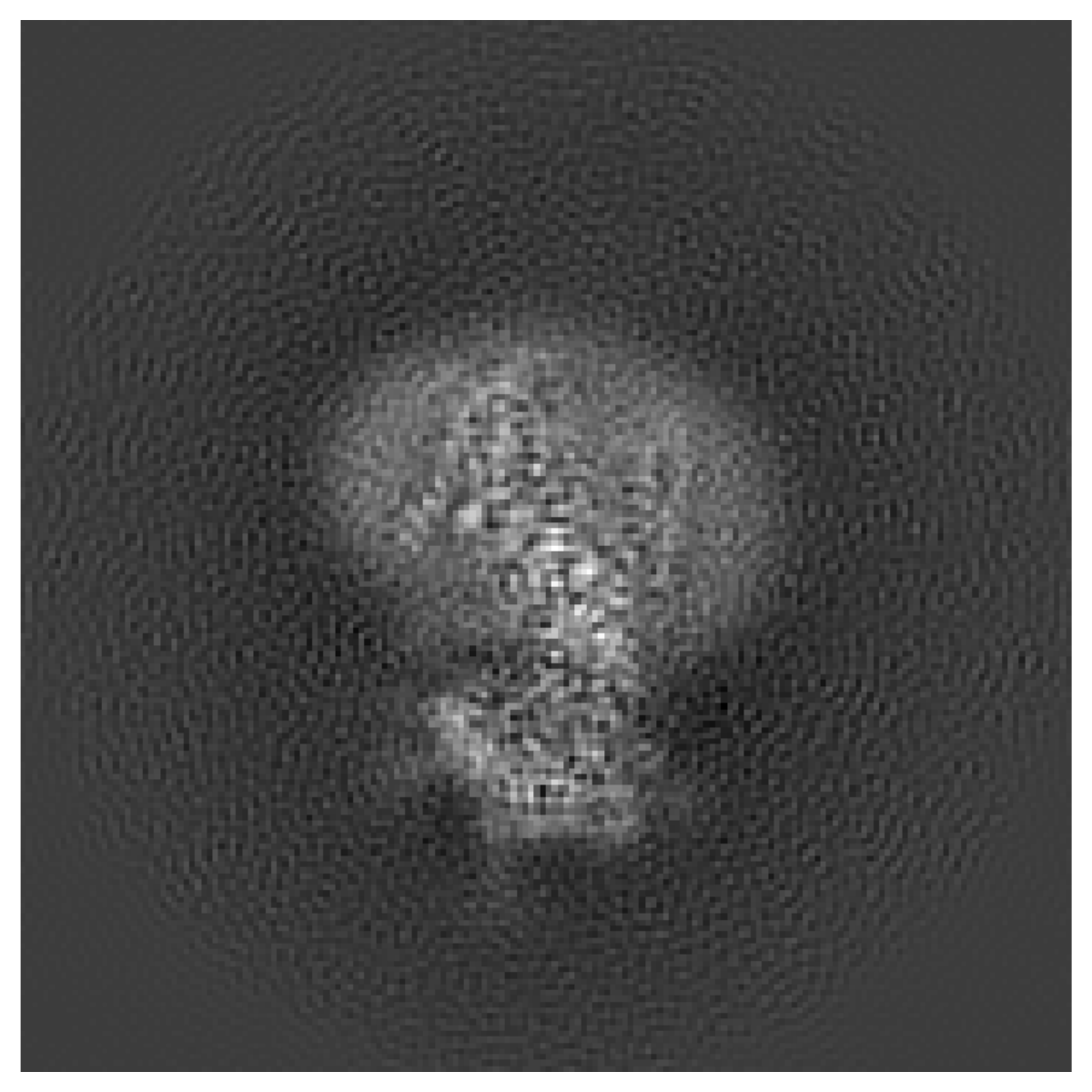}}
    \caption{Human $\gamma$-secretase. (a) shows the 3D model of human $\gamma$-secretase, and (b)-(d) show the projections of the cryo-EM data along the $x$, $y$, and $z$ axes, respectively.}
    \label{fig:gamma-secretase}
\end{figure}

Alignment problems in cryo-EM are often tackled by using the density functions \citep{singer2024alignment,harpaz2023three} with suitable transformations or by obtaining the point clouds from the density functions \citep{riahi2023alignot,tajmir2025alignment}, say, utilizing the topology representing network algorithm \citep{martinetz1994topology}. The proposed algorithm can be applied to the latter setting. Here, we sample two point clouds from the cryo-EM data, one from the original density function and the other from the density function after applying a rigid transformation; see Figure \ref{fig:cryo-em}(a). The goal is to recover the rigid transformation from the point clouds. 

The proposed partial matching---Algorithm \ref{alg:distance_profile_matching}---provides correspondences between the points. Once we have the correspondences, we can estimate the rigid transformation using the standard orthogonal Procrustes problem---after centering---between matrices $\mathbb{X}, \mathbb{Y} \in \R^{n \times d}$ whose rows are the point clouds $X_1, \ldots, X_n$ and $Y_{\pi(1)}, \ldots, Y_{\pi(n)}$, respectively, where $\pi$ is the output of Algorithm \ref{alg:distance_profile_matching}. Figure \ref{fig:cryo-em}(b) shows the aligned point clouds, which seems reasonable. However, for better results, it is beneficial to subset the good matches by using the threshold $\rho$ in Algorithm \ref{alg:distance_profile_matching}. The crux is that we only need to partially match the points for alignment, and focusing on good matches can improve the alignment results. Figures \ref{fig:cryo-em}(c) and \ref{fig:cryo-em}(d) apply Algorithm \ref{alg:distance_profile_matching} with $\rho$ being the $0.5$ and $0.1$ quantiles of $W(1, \pi(1)), \ldots, W(n, \pi(n))$ to use the best 50\% and 10\% of the correspondences, respectively. In these cases, we obtain the set of index $I$ from the output of Algorithm \ref{alg:distance_profile_matching} and estimate the rigid transformation using the submatrices $\mathbb{X}_I, \mathbb{Y}_I$ of $\mathbb{X}, \mathbb{Y}$, respectively, corresponding to the rows indexed by $I$. Figures \ref{fig:cryo-em}(c) and \ref{fig:cryo-em}(d) indeed look better than using the full correspondences, which we can quantify by computing the Wasserstein-2 distance between the aligned point cloud and the target point cloud. The Wasserstein-2 distance between the aligned point cloud and the target point cloud is $10.68$ for the full correspondences, $9.04$ for the best 50\%, and $8.31$ for the best 10\%, which confirms that the partial matching improves the alignment results.

\begin{figure}[!ht]
    \centering
    \subfloat[Point clouds]{\includegraphics[trim=1cm 2cm 1cm 3cm, clip=true, width=0.43\textwidth]{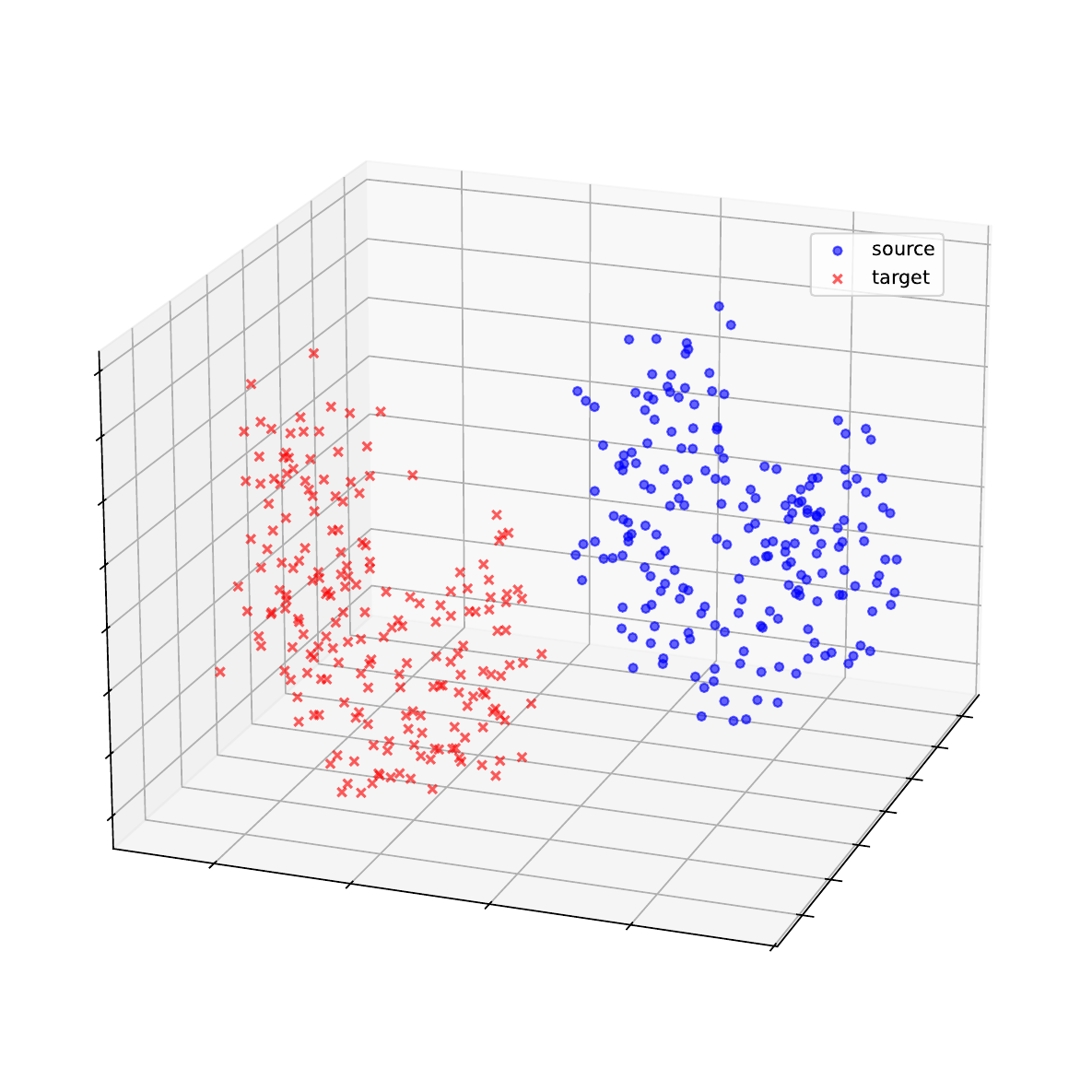}}
    \quad
    \subfloat[Aligned with full correspondences]{\includegraphics[trim=1cm 2cm 1cm 3cm, clip=true, width=0.43\textwidth]{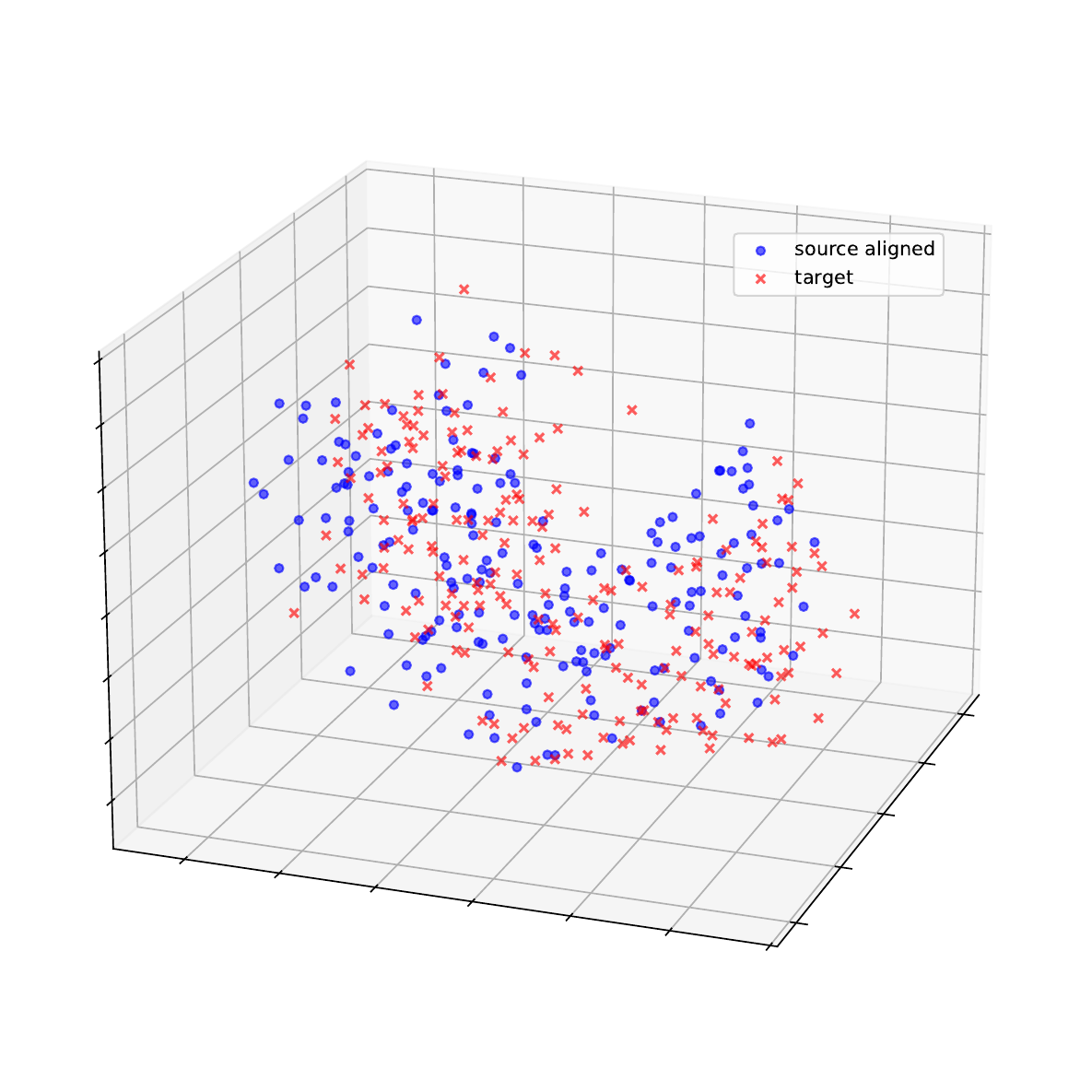}}
    \qquad
    \subfloat[Aligned with best 50\% correspondences]{\includegraphics[trim=1cm 2cm 1cm 3cm, clip=true, width=0.43\textwidth]{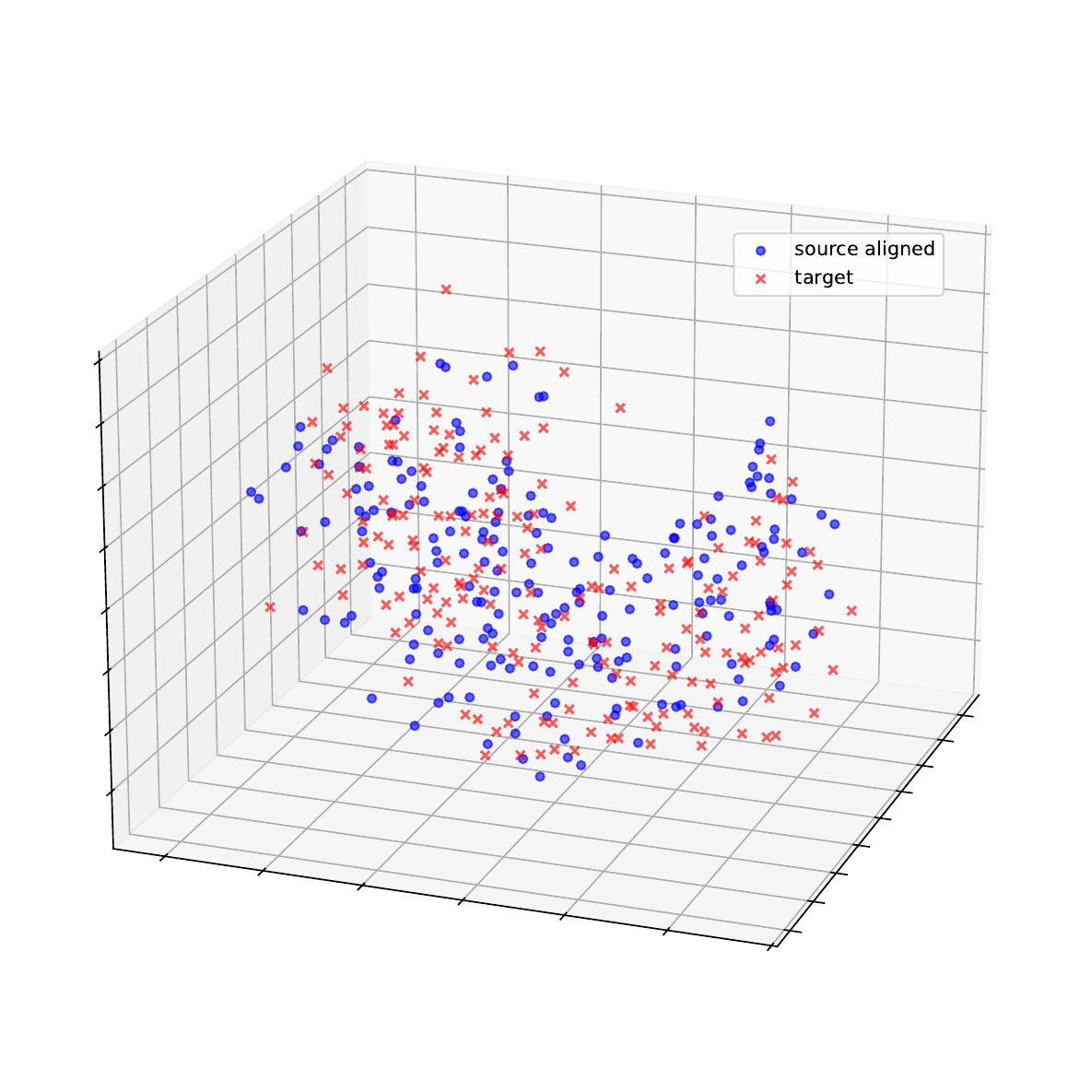}}
    \quad
    \subfloat[Aligned with best 10\% correspondences]{\includegraphics[trim=1cm 2cm 1cm 3cm, clip=true, width=0.43\textwidth]{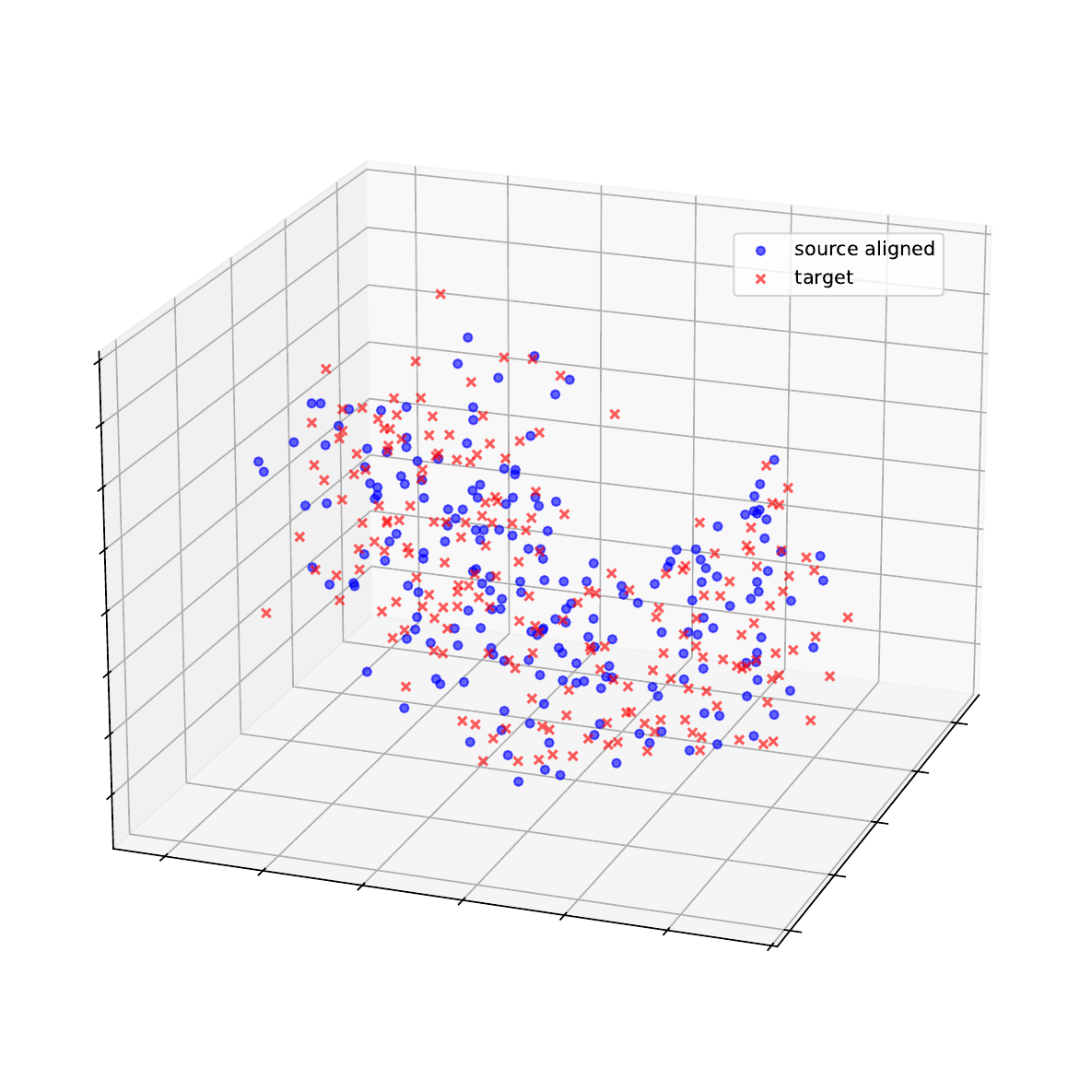}}
    \caption{Point clouds from the cryo-EM data of human $\gamma$-secretase and alignment results. (a) shows the point clouds sampled from the original density function and the density function after applying a rigid transformation, shown as source and target, respectively. For (b)-(c), we first apply Algorithm \ref{alg:distance_profile_matching} to obtain the correspondences and then estimate the rigid transformation using the orthogonal Procrustes problem. (b) shows the aligned point clouds using all the correspondences, while (c) and (d) use the best 50\% and 10\% of the correspondences, respectively, based on the threshold $\rho$ in Algorithm \ref{alg:distance_profile_matching}.}
    \label{fig:cryo-em}
\end{figure}

We repeat the same process for the cryo-EM data of the voltage-gated calcium channel ($\text{Ca}_{\text{v}}1.1$) of rabbit \citep{wu2016structure}, which plays a role as the sensor for excitation-contraction coupling of skeletal muscles. Figure \ref{fig:calcium} shows the 3D model and projections of the cryo-EM data. Figure \ref{fig:cryo-em2} shows the alignment results. Again, we obtain two point clouds from the cryo-EM data, one from the original density function and the other from the density function after applying a rigid transformation; see Figure \ref{fig:cryo-em2}(a). Figures \ref{fig:cryo-em2}(b)-(d) show the aligned point clouds using all the correspondences, the best 50\%, and the best 10\%, respectively. The Wasserstein-2 distance between the aligned point cloud and the target point cloud is $11.10$ for the full correspondences, $8.23$ for the best 50\%, and $9.17$ for the best 10\%. In this case, taking the best 50\% correspondences yields the best partial matching result among the three settings.

\begin{figure}[!tb]
    \centering
    \subfloat[3D Model]{\includegraphics[width=0.25\textwidth]{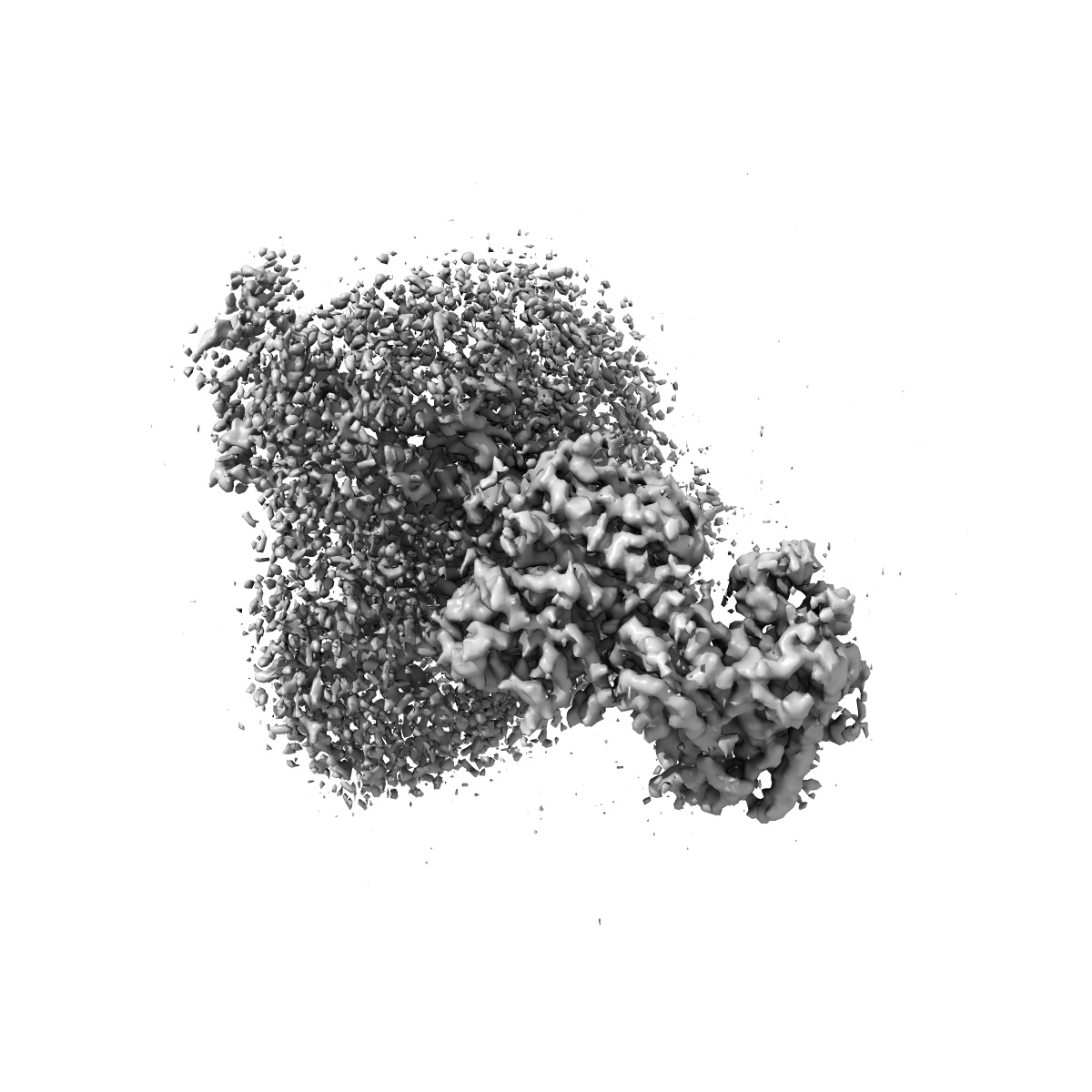}}
    \subfloat[Projection ($x$-axis)]{\includegraphics[trim=0.1cm 0.1cm 0.1cm 0.1cm, clip=true, width=0.25\textwidth]{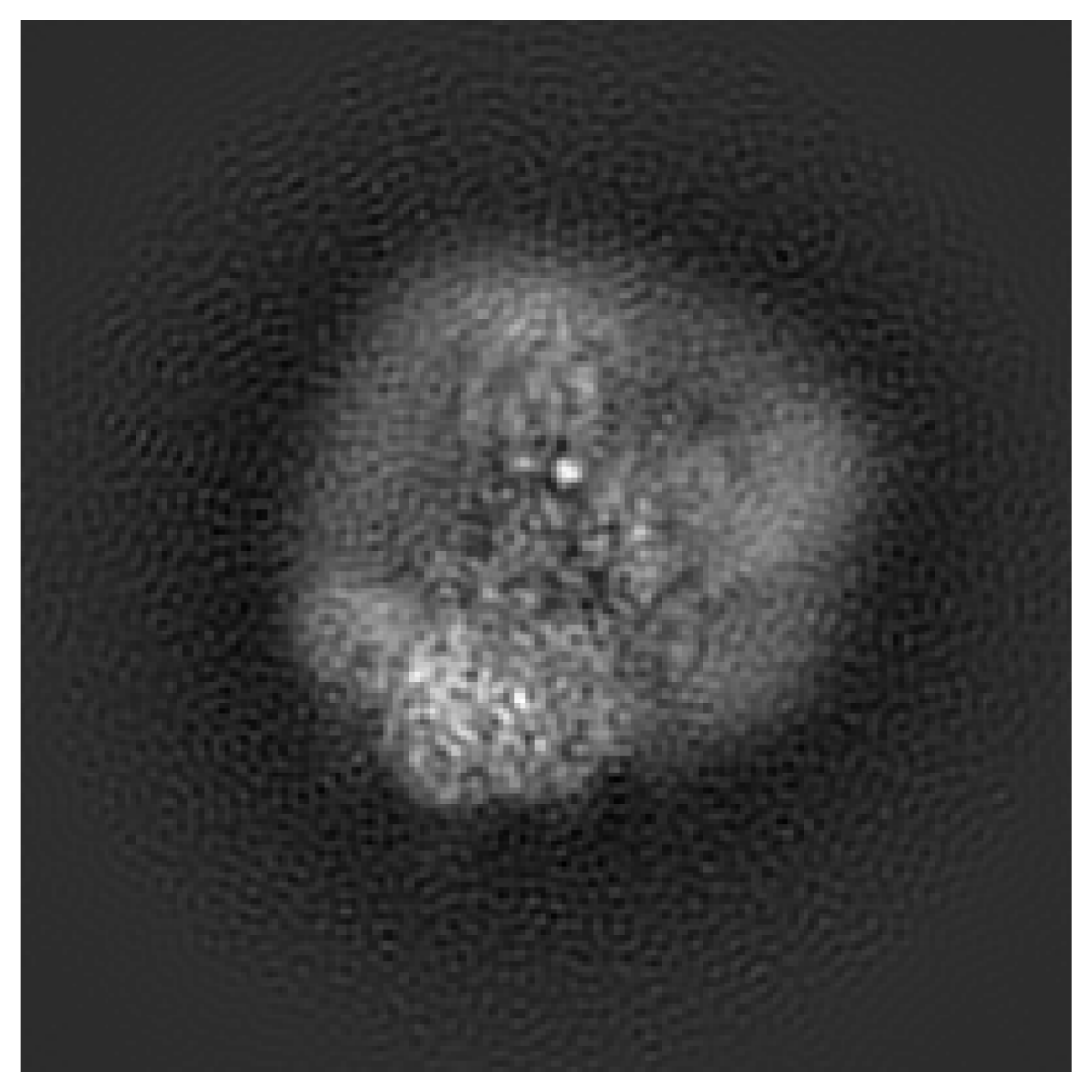}}
    \subfloat[Projection ($y$-axis)]{\includegraphics[trim=0.1cm 0.1cm 0.1cm 0.1cm, clip=true, width=0.25\textwidth]{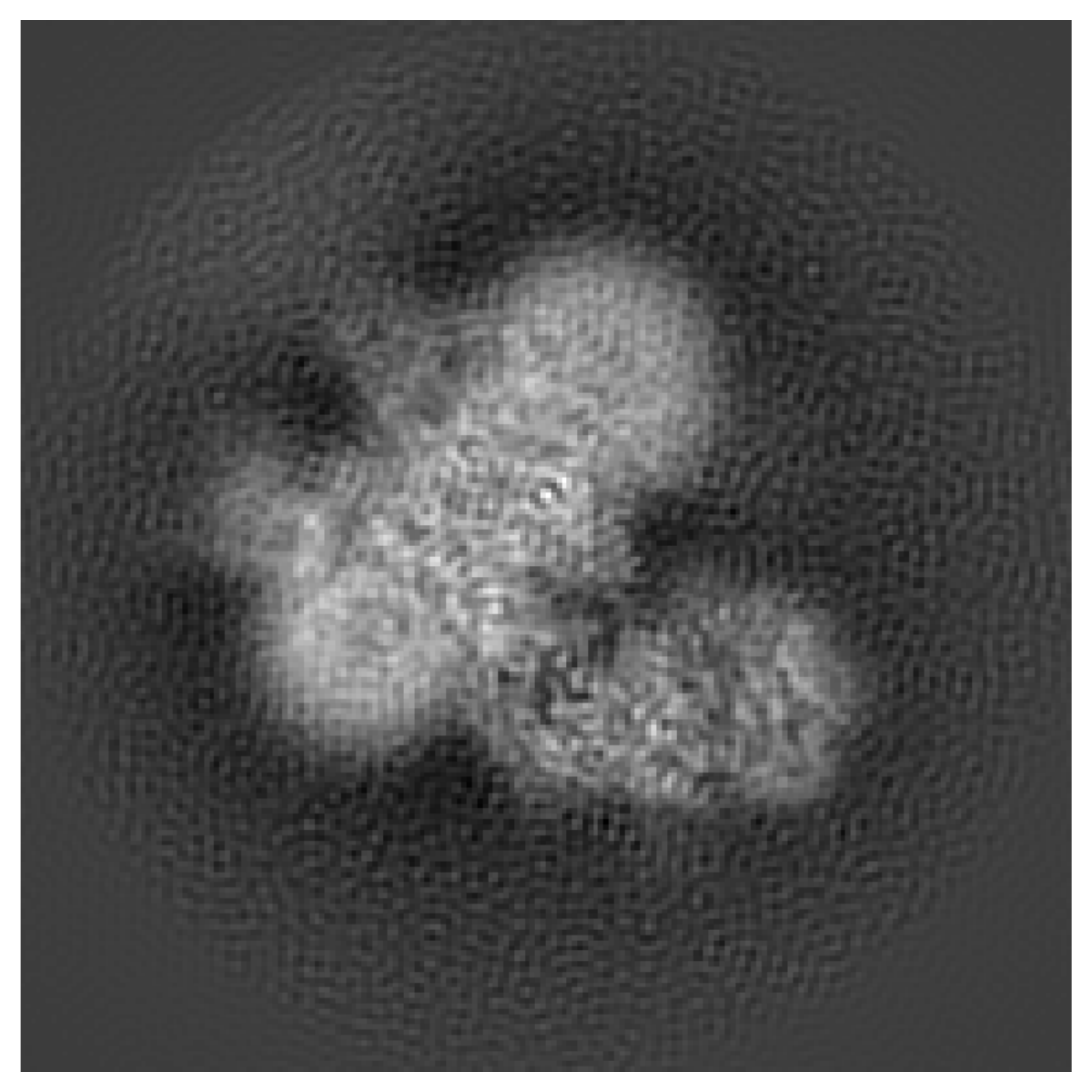}}
    \subfloat[Projection ($z$-axis)]{\includegraphics[trim=0.1cm 0.1cm 0.1cm 0.1cm, clip=true, width=0.25\textwidth]{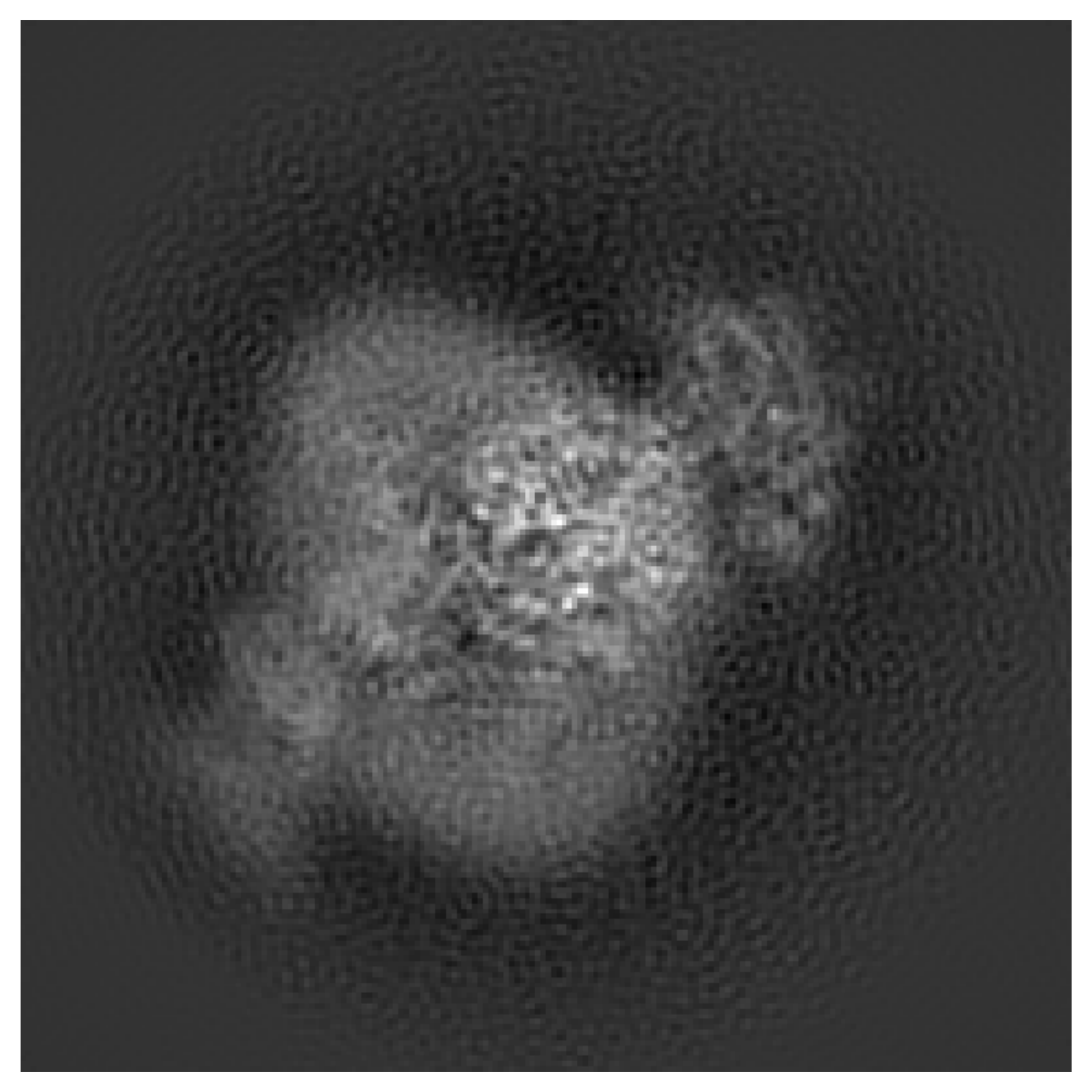}}
    \caption{Rabbit $\text{Ca}_{\text{v}}1.1$. (a) shows the 3D model of $\text{Ca}_{\text{v}}1.1$, and (b)-(d) show the projections of the cryo-EM data along the $x$, $y$, and $z$ axes, respectively.}
    \label{fig:calcium}
\end{figure}

\begin{figure}[!hb]
    \centering
    \subfloat[Point clouds]{\includegraphics[trim=1cm 2cm 1cm 3cm, clip=true, width=0.43\textwidth]{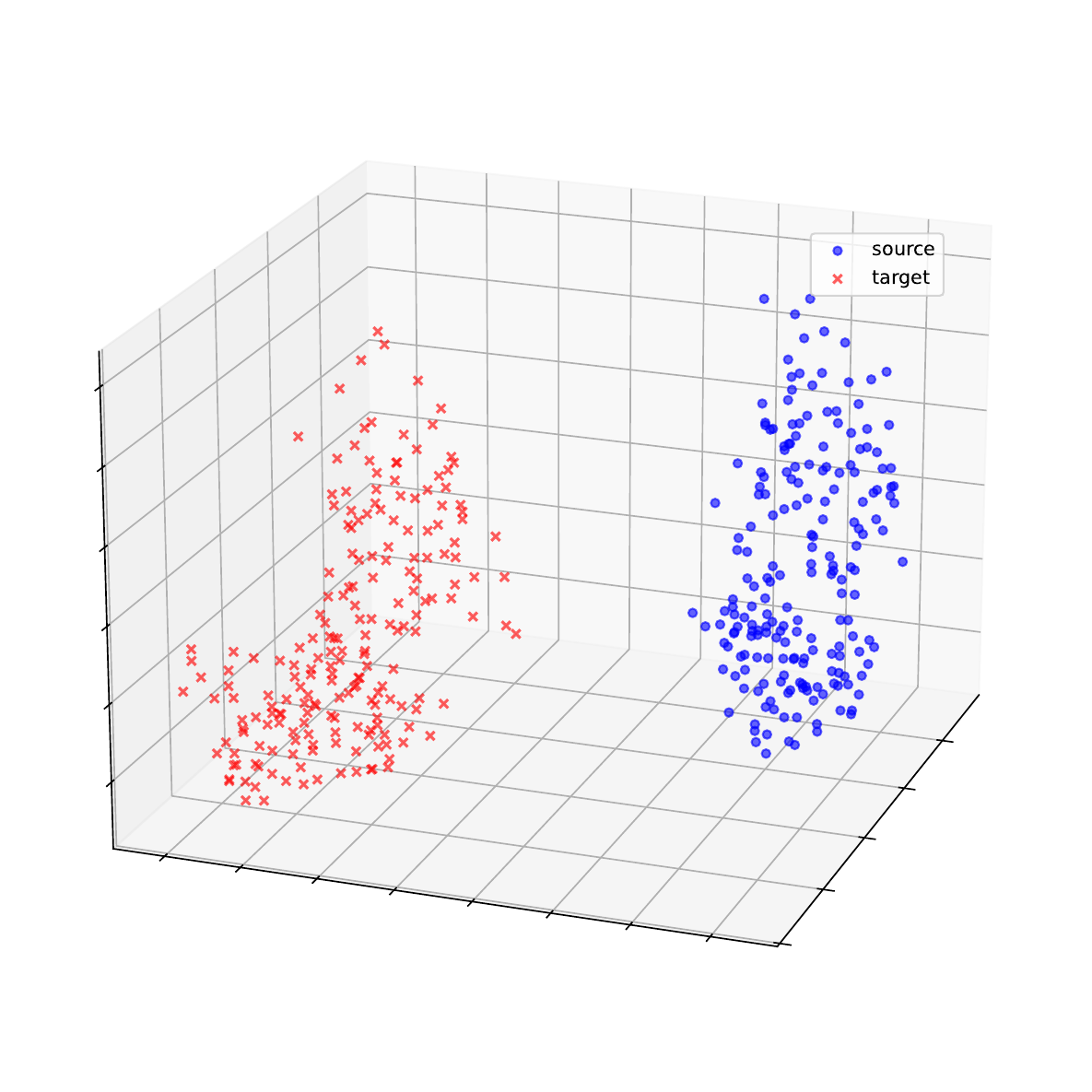}}
    \quad
    \subfloat[Aligned with full correspondences]{\includegraphics[trim=1cm 2cm 1cm 3cm, clip=true, width=0.43\textwidth]{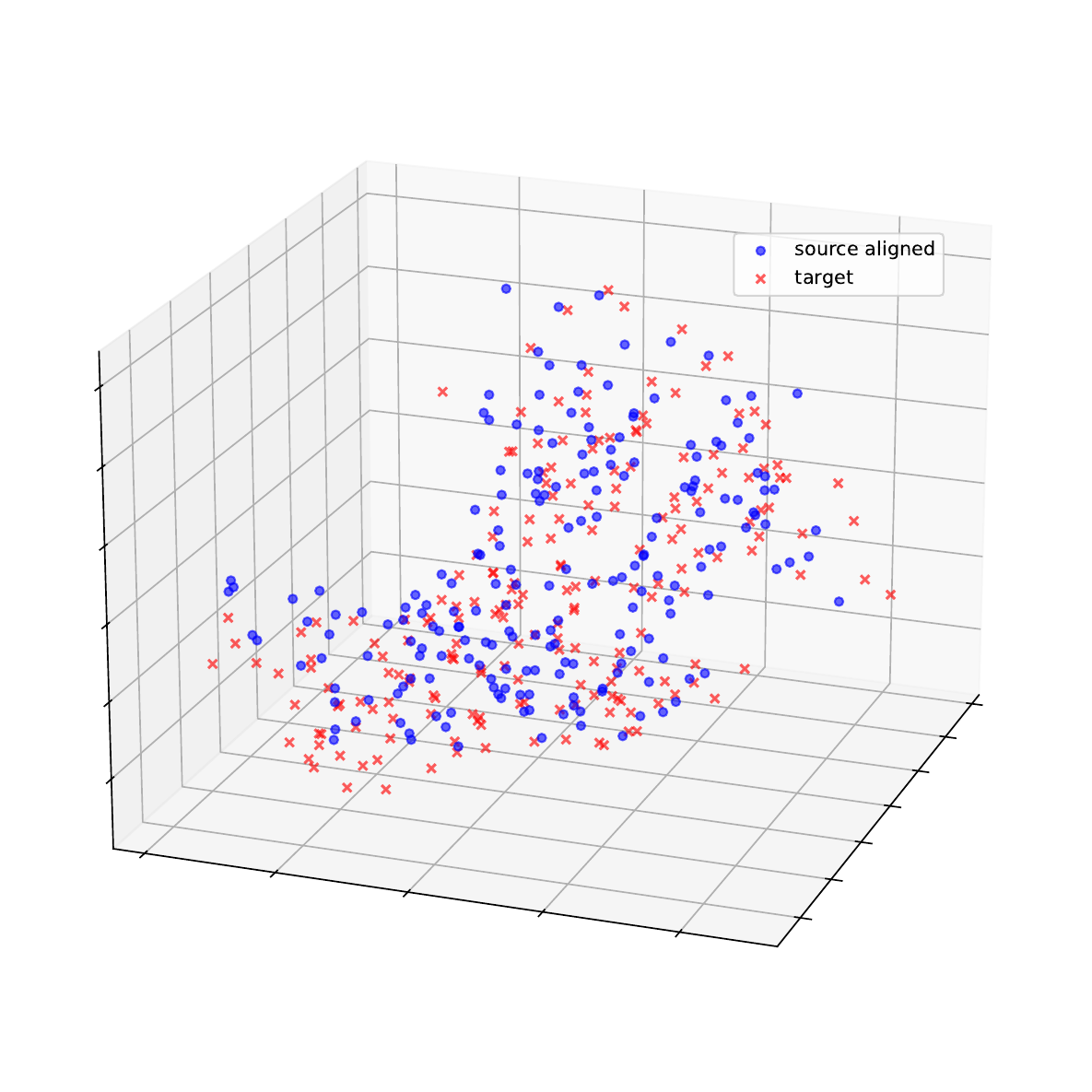}}
    \qquad
    \subfloat[Aligned with best 50\% correspondences]{\includegraphics[trim=1cm 2cm 1cm 3cm, clip=true, width=0.43\textwidth]{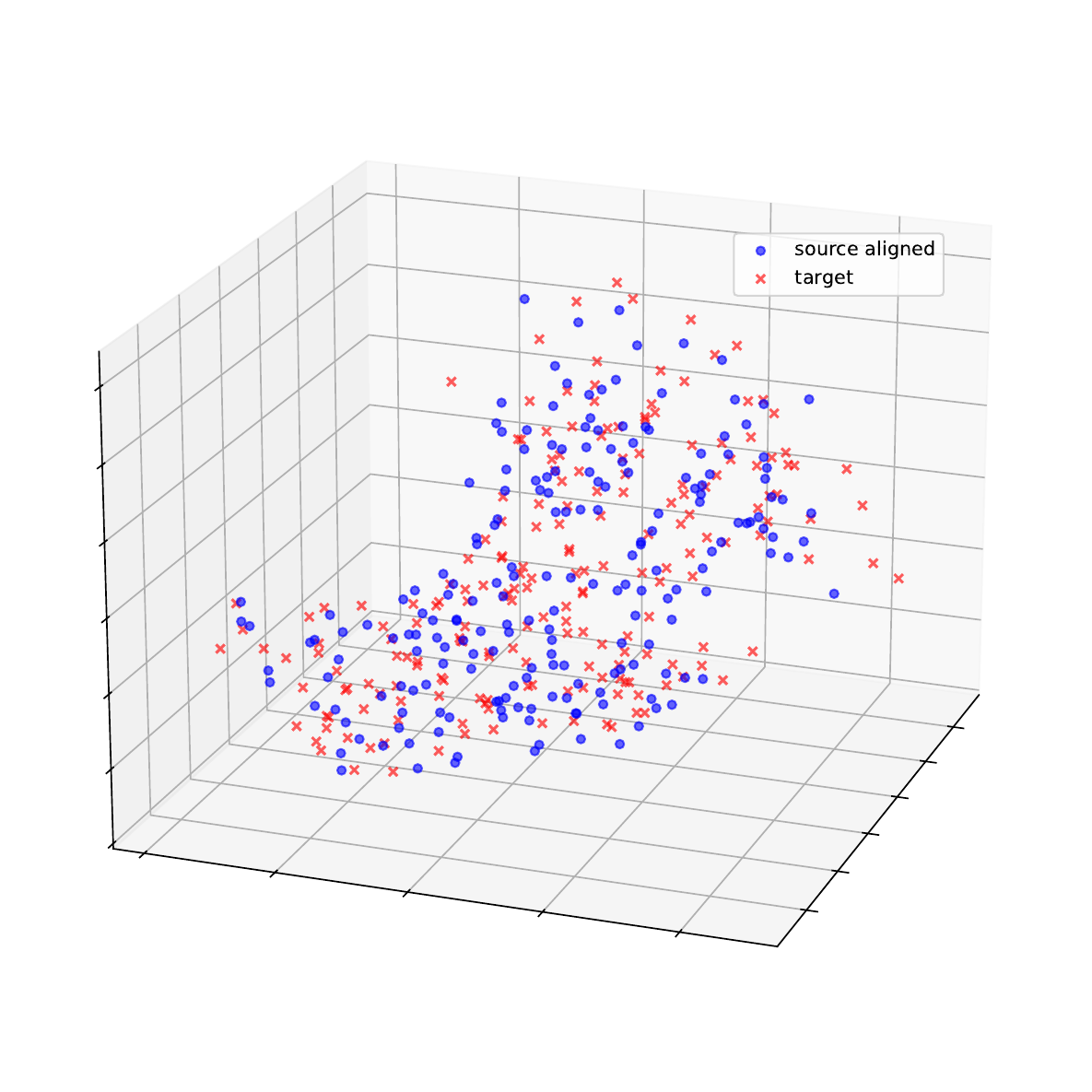}}
    \quad
    \subfloat[Aligned with best 10\% correspondences]{\includegraphics[trim=1cm 2cm 1cm 3cm, clip=true, width=0.43\textwidth]{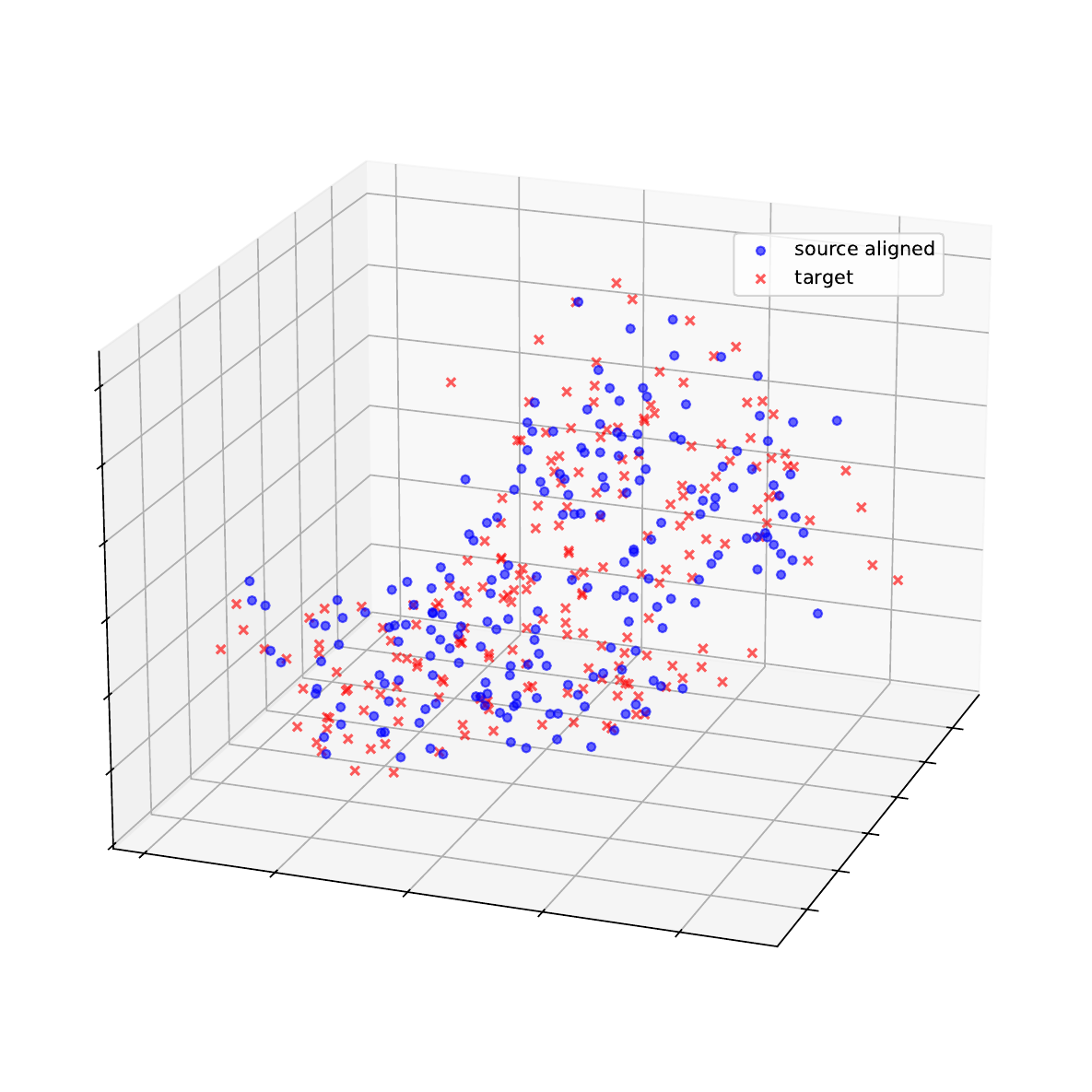}}
    \caption{Point clouds from the cryo-EM data of rabbit $\text{Ca}_{\text{v}}1.1$ and alignment results. (a) shows the point clouds sampled from the original density function and the density function after applying a rigid transformation, shown as source and target, respectively. For (b)-(c), we first apply Algorithm \ref{alg:distance_profile_matching} to obtain the correspondences and then estimate the rigid transformation using the orthogonal Procrustes problem. (b) shows the aligned point clouds using all the correspondences, while (c) and (d) use the best 50\% and 10\% of the correspondences, respectively, based on the threshold $\rho$ in Algorithm \ref{alg:distance_profile_matching}.}
    \label{fig:cryo-em2}
\end{figure}

\section{Conclusion}
We have proposed matching procedures based on distance profiles, which are easily implementable without having to solve nontrivial optimization problems over permutations. Most importantly, we have shown that the proposed method is provably robust to outlier components and noise under some model assumptions that we believe capture essential aspects of empirical applications. To the best of our knowledge, this is the first work that provides theoretical guarantees for the robustness of matching procedures based on distance profiles in the context of object matching. Also, we have derived the new statistical complexity result of the distance profile distance---the lower bound to the Gromov-Wasserstein distance---that has provided insights into how to use the distance profile matching procedures in practice. We have also demonstrated the empirical performance of the proposed methods through simulations and applications to real data.

We suggest several directions for future research. On the theoretical side, it would be interesting to consider other types of matching guarantees in Theorem \ref{thm:outlier_robustness}. As mentioned in Section \ref{sec:simulations}, the matching accuracy \eqref{eq:mixture_matching_accuracy} may be practically more useful. Similarly, for Theorem \ref{thm:noise_stability}, it would be interesting to consider the probability that $\hat{\pi}$ is close to $\pi^\ast$ under a certain discrepancy, such as the Hamming distance, instead of the probability of $\hat{\pi} = \pi^\ast$. Also, one can explore a generalization based on mixtures beyond the Euclidean space, say, a general metric or Banach space. On the practical side, it would be interesting to delve deeper into the procedure of finding inliers, that is, the points from the common part of two objects. For interested readers, we discuss in Appendix \ref{sec:inliers} how the threshold $\rho$ of Algorithm \ref{alg:distance_profile_matching} can be chosen to identify the points from the common part at least in theory. However, developing a practical method applicable to real data is very challenging in general and needs further investigation. We leave these directions for future research.

\section*{Acknowledgments}
The work of YK was partially supported by DMS-2339439, DE-SC0022232, and Sloan research fellowship. The work of YH was partially supported by DE-SC0022232.

\appendix
\section{Proofs}
\label{sec:proofs}

\subsection{Proof of Theorem \ref{thm:W1_bound_sup_x_euclidean}}
\begin{proof}
    For $x \in \cX$ and $r \in \R$, let $\phi_{x, r}(\cdot) = 1\{d_\cX(x, \cdot) \le r\}$, which is a function from $\cX$ to $\{0, 1\}$. Let $F_x$ be the cumulative distribution function of $\mu_x$. Then, $F_x(r) = \mu \phi_{x, r}$ by definition. Similarly, letting $\hat{F}_x$ be the cumulative distribution function of $\hat{\mu}_x$, we have $\hat{F}_x(r) = \hat{\mu} \phi_{x, r}$. Recalling that $W_1$ between two probability measures on $\R$ is the $L_1$ distance between their cumulative distribution functions, we have
    \begin{equation*}
        W_1(\mu_x, \hat{\mu}_x) = \int_{0}^{\infty} |F_x(r) - \hat{F}_x(r)| \, \mathrm{d} r = \int_{0}^{\infty} |\mu \phi_{x, r} - \hat{\mu} \phi_{x, r}| \, \mathrm{d} r \le \sup_{\phi \in \cF_x} |\mu \phi - \hat{\mu} \phi|,
    \end{equation*}
    where $\cF_x := \{\phi_{x, r} : r \ge 0\}$ for any $x \in \cX$. Therefore,
    \begin{equation*}
        \sup_{x \in \cX} W_1(\mu_x, \hat{\mu}_x) \le \sup_{x \in \cX} \sup_{\phi \in \cF_x} |\mu \phi - \hat{\mu} \phi| = \sup_{\phi \in \cF_\cX} |\mu \phi - \hat{\mu} \phi|,
    \end{equation*}
    where $\cF_\cX := \bigcup_{x \in \cX} \cF_x = \{\phi_{x, r} : x \in \cX, r \ge 0\}$. By Theorem 8.23 of \cite{vershynin_2018}, 
    \begin{equation}
        \label{eq:sup_W1_VC_upper_bound}
        \E\left[\sup_{x \in \cX} W_1(\mu_x, \hat{\mu}_x)\right] \le \E\left[\sup_{\phi \in \cF_\cX} |\mu \phi - \hat{\mu} \phi|\right] \le C \sqrt{\frac{\mathrm{vc}(\cF_\cX)}{n}}.
    \end{equation}
    where $C$ is an absolute constant that is independent of $x$, $n$, $\cX$, and $\mu$. Here, $\mathrm{vc}(\cF_\cX)$ is the Vapnik-Chervonenkis (VC) dimension of the function class $\cF_\cX$. As $\cX \subset \R^d$, this is upper bounded by the VC dimension of the set of all closed Euclidean balls in $\R^d$, which is $d + 1$; see \cite{dudley1979balls}. Hence, we obtain \eqref{eq:sup_W1_euclidean}.
\end{proof}

\subsection{Proof of Proposition \ref{prop:analytic_properties}}
\begin{proof}
    Define a map $S \colon \cX \to \R \times \R$ by letting $S(\cdot) = (d_\cX(x, \cdot), d_\cX(x', \cdot))$. Then, one can verify that $S_\# \mu$ is a coupling of $\mu_x$ and $\mu_{x'}$. Therefore,
    \begin{equation*}
        \begin{split}
            W_1(\mu_x, \mu_{x'}) 
            & = \inf_{\rho \in \Pi(\mu_x, \mu_{x'})} \int_{\R \times \R} |z_1 - z_2| \, \mathrm{d} \rho(z_1, z_2) \\
            & \le \int_{\R \times \R} |z_1 - z_2| \, \mathrm{d} S_\# \mu(z_1, z_2) \\
            & = \int_{\cX} |d_\cX(x, \bar{x}) - d_\cX(x', \bar{x})| \, \mathrm{d} \mu(\bar{x}) \\
            & \le d_\cX(x, x'),
        \end{split}
    \end{equation*}
    where the last inequality follows from the triangle inequality of $d_\cX$. 
\end{proof}

\subsection{Proof of Proposition \ref{prop:W1_distance_profiles_mixture}}
\begin{proof}
    By definition, one can check that $\mu_x$ is also given as the mixture of the distance profiles $(\mu_k)_x$'s, namely, $\mu_x = \sum_{k = 1}^{t} p_k (\mu_k)_x$. Then, 
    \begin{equation}
        \label{eq:Wasserstein_bound_mixture}
        W_1(\mu_x, \mu_{x'}) = W_1\left(\sum_{k = 1}^{t} p_k (\mu_k)_x, \sum_{k = 1}^{t} p_k (\mu_k)_{x'}\right) \le \sum_{k = 1}^{t} p_k W_1((\mu_k)_x, (\mu_k)_{x'}),
    \end{equation}
    where the inequality follows from the fact that if $\gamma_k \in \Pi((\mu_k)_x, (\mu_k)_{x'})$ is a coupling between $(\mu_k)_x$ and $(\mu_k)_{x'}$, the convex combination $\sum_{k = 1}^{t} p_k \gamma_k$ is a valid coupling between $\mu_x$ and $\mu_{x'}$. As $\mu_k$ is rotationally invariant with respect to the center $\theta_k$, we can deduce that $(\mu_k)_x = (\mu_k)_{\theta_k + \|x - \theta_k\|_2 u}$ for any $u$ on the unit sphere. Hence, 
    \begin{equation}
        \label{eq:W1_distance_profiles_rotationally_invariant}
        \begin{split}
            W_1((\mu_k)_x, (\mu_k)_{x'}) 
            & = W_1\left((\mu_k)_{\theta_k + \|x - \theta_k\|_2 u}, (\mu_k)_{\theta_k + \|x' - \theta_k\|_2 u}\right) \\
            & \le \left\|\theta_k + \|x - \theta_k\|_2 u - \theta_k - \|x' - \theta_k\|_2 u\right\|_2 \\
            & = \left|\|x - \theta_k\|_2 - \|x' - \theta_k\|_2\right|,            
        \end{split}
    \end{equation}
    where the inequality uses Proposition \ref{prop:analytic_properties}. Combining \eqref{eq:Wasserstein_bound_mixture} and \eqref{eq:W1_distance_profiles_rotationally_invariant}, we obtain \eqref{eq:W1_distance_profiles_mixture}.
\end{proof}

\subsection{Proof of Theorem \ref{thm:convergence_TLB_mixture}}
\begin{proof}
    By \eqref{eq:TLB_decomposition}, \eqref{eq:TLB_decomposition_fast_part}, and Theorem \ref{thm:W1_bound_sup_x_euclidean}, it suffices to show that $\E(*) \le O(n^{-1 / t})$. Define $\Phi \colon \R^d \to \R^t$ by letting $\Phi(x) = \left(\sqrt{p_1} \|x - \theta_1\|_2, \ldots, \sqrt{p_t} \|x - \theta_t\|_2\right)$. From Proposition \ref{prop:W1_distance_profiles_mixture}, we have
    \begin{equation*}
        W_1(\mu_x, \mu_{x'}) \le \sum_{k = 1}^{t} p_k \left|\|x - \theta_k\|_2 - \|x' - \theta_k\|_2\right| \le \|\Phi(x) - \Phi(x')\|_2,
    \end{equation*}
    where the last inequality follows from the Cauchy-Schwarz inequality. For any coupling $\gamma \in \Pi(\mu, \hat{\mu})$, note that $(\Phi \times \Phi)_\# \gamma \in \Pi(\Phi_\# \mu, \Phi_\# \hat{\mu})$, where $\Phi \times \Phi \colon \R^d \times \R^d \to \R^t \times \R^t$ is a map defined by $(\Phi \times \Phi)(x, x') = (\Phi(x), \Phi(x'))$. Conversely, for any coupling $\rho \in \Pi(\Phi_\# \mu, \Phi_\# \hat{\mu})$, we can find a coupling $\gamma \in \Pi(\mu, \hat{\mu})$ such that $\rho = (\Phi \times \Phi)_\# \gamma$ by Lemma 3.12 of \cite{Ambrosio2013}. In summary, 
    \begin{equation}
        \label{eq:coupling_change_of_variables}
        \Pi(\Phi_\# \mu, \Phi_\# \hat{\mu}) = \{(\Phi \times \Phi)_\# \gamma : \gamma \in \Pi(\mu, \hat{\mu})\}.
    \end{equation}
    Therefore,
    \begin{equation*}
        \begin{split}
            \inf_{\gamma \in \Pi(\mu, \hat{\mu})} \int_{\R^d \times \R^d} W_1(\mu_x, \mu_{x'}) \, \mathrm{d} \gamma(x, x') 
            & \le \inf_{\gamma \in \Pi(\mu, \hat{\mu})} \int_{\R^d \times \R^d} \|\Phi(x) - \Phi(x')\|_2 \, \mathrm{d} \gamma(x, x') \\
            & = \inf_{\gamma \in \Pi(\mu, \hat{\mu})} \int_{\R^t \times \R^t} \|z - z'\|_2 \, \mathrm{d} (\Phi \times \Phi)_\# \gamma(z, z')\\
            & = \inf_{\rho \in \Pi(\Phi_\# \mu, \Phi_\# \hat{\mu})} \int_{\R^t \times \R^t} \|z - z'\|_2 \, \mathrm{d} \rho(z, z') \\
            & = W_1(\Phi_\# \mu, \Phi_\# \hat{\mu}),
        \end{split}
    \end{equation*}
    where the first equality follows from the change of variables, and the second equality is due to \eqref{eq:coupling_change_of_variables}. As $\Phi_\# \hat{\mu}$ is equivalent to the empirical measure of $\Phi_\# \mu$ which is defined on $\R^t$, we conclude from the standard results \citep{dudley1969speed,fournier2015rate} that $\E(*) \le W_1(\Phi_\# \mu, \Phi_\# \hat{\mu}) \le O(n^{-1 / t})$.
\end{proof}

\subsection{Proof of Theorem \ref{thm:outlier_robustness}}
\label{sec:proof_outlier_robustness}
\begin{proof}
    We analyze the deviation of \eqref{eq:W(i, j)} from \eqref{eq:W1_distance_profiles_centers}, namely, for any pair $(i, j) \in [n] \times [m]$, we derive an upper bound on the deviation $|W(i, j) - \overline{W}(t(i), s(j))|$. To this end, we define an intermediate quantity $W'(i, j)$ as follows: for any $(i, j) \in [n] \times [m]$,
    \begin{equation*}
        \begin{split}
            W'(i, j) 
            & := W_1\bigg(\frac{1}{n} \sum_{\ell = 1}^{n} \delta_{\|\theta_{t(i)} - \theta_{t(\ell)}\|_2}, \frac{1}{m} \sum_{\ell = 1}^{m} \delta_{\|\eta_{s(j)} - \eta_{s(\ell)}\|_2}\bigg) \\
            & = W_1\bigg(\sum_{k = 1}^{t} \frac{n_k}{n} \delta_{\|\theta_{t(i)} - \theta_k\|_2}, \sum_{k = 1}^{s} \frac{m_k}{m} \delta_{\|\eta_{s(j)} - \eta_k\|_2}\bigg),
        \end{split}
    \end{equation*}
    where we define $n_k = \sum_{\ell = 1}^{n} 1\{t(\ell) = k\}$ and $m_k = \sum_{\ell = 1}^{m} 1\{s(\ell) = k\}$. We first derive an upper bound on $|W'(i, j) - \overline{W}(t(i), s(j))|$ for any $(i, j) \in [n] \times [m]$. By the triangle inequality of $W_1$, we have
    \begin{equation*}
        \begin{split}
            & |W'(i, j) - \overline{W}(t(i), s(j))| \\
            & \le \underbrace{W_1\bigg(\sum_{k = 1}^{t} \frac{n_k}{n} \delta_{\|\theta_{t(i)} - \theta_k\|_2}, \sum_{k = 1}^{t} p_k \delta_{\|\theta_{t(i)} - \theta_k\|_2}\bigg)}_{=: (*)} + \underbrace{W_1\bigg(\sum_{k = 1}^{s} \frac{m_k}{m} \delta_{\|\eta_{s(j)} - \eta_k\|_2}, \sum_{k = 1}^{s} q_k \delta_{\|\eta_{s(j)} - \eta_k\|_2}\bigg)}_{=: (**)}.
        \end{split}
    \end{equation*}    
    For each $\alpha \in [K]$, order $\theta_1, \ldots, \theta_t$ by the distance to $\theta_\alpha$, namely, find $k_\alpha \colon [t] \to [t]$ such that $\|\theta_\alpha - \theta_{k_\alpha(1)}\|_2 \le \cdots \le \|\theta_\alpha - \theta_{k_\alpha(t)}\|_2$. Then, by Lemma \ref{lem:W1_bound}, we have
    \begin{equation*}
        \begin{split}
            (*)
            & \le \max_{\alpha \in [K]} W_1\bigg(\sum_{k = 1}^{t} \frac{n_k}{n} \delta_{\|\theta_\alpha - \theta_k\|_2}, \sum_{k = 1}^{t} p_k \delta_{\|\theta_\alpha - \theta_k\|_2}\bigg) \\
            & \le \max_{\alpha \in [K]} \|\theta_\alpha - \theta_{k_\alpha(t)}\|_2 \cdot \max_{j \in [t - 1]} \left|\sum_{k = 1}^{j} \left(\frac{n_{k_\alpha(k)}}{n} - p_{k_\alpha(k)}\right)\right| \\
            & \le R \cdot \max_{\alpha \in [K]} \max_{j \in [t - 1]} \left|\sum_{k = 1}^{j} \left(\frac{n_{k_\alpha(k)}}{n} - p_{k_\alpha(k)}\right)\right|
            =: R \cdot Z_1.
        \end{split}
    \end{equation*}
    Similarly, for each $\beta \in [K]$, order $\eta_1, \ldots, \eta_s$ by the distance to $\eta_\beta$, namely, find $\ell_\beta \colon [s] \to [s]$ such that $\|\eta_\beta - \eta_{\ell_\beta(1)}\|_2 \le \cdots \le \|\eta_\beta - \eta_{\ell_\beta(s)}\|_2$. Then, by Lemma \ref{lem:W1_bound}, we have
    \begin{equation*}
        \begin{split}
            (**)
            & \le \max_{\beta \in [K]} W_1\bigg(\sum_{k = 1}^{s} \frac{m_k}{m} \delta_{\|\eta_\beta - \eta_k\|_2}, \sum_{k = 1}^{s} q_k \delta_{\|\eta_\beta - \eta_k\|_2}\bigg) \\
            & \le \max_{\beta \in [K]} \|\eta_\beta - \eta_{\ell_\beta(s)}\|_2 \cdot \max_{j \in [s - 1]} \left|\sum_{k = 1}^{j} \left(\frac{m_{\ell_\beta(k)}}{m} - q_{\ell_\beta(k)}\right)\right| \\
            & \le R \cdot \max_{\beta \in [K]} \max_{j \in [s - 1]} \left|\sum_{k = 1}^{j} \left(\frac{m_{\ell_\beta(k)}}{m} - q_{\ell_\beta(k)}\right)\right|
            =: R \cdot Z_2.
        \end{split}
    \end{equation*}
    Hence, $|W'(i, j) - \overline{W}(t(i), s(j))| \le R \cdot (Z_1 + Z_2)$. Next, we bound $|W(i, j) - W'(i, j)|$ for any $(i, j) \in [n] \times [m]$. By the triangle inequality again, we have
    \begin{equation*}
        \begin{split}
            & |W(i, j) - W'(i, j)| \\
            & \le W_1\bigg(\frac{1}{n} \sum_{\ell = 1}^{n} \delta_{\|X_i - X_\ell\|_2}, \frac{1}{n} \sum_{\ell = 1}^{n} \delta_{\|\theta_{t(i)} - \theta_{t(\ell)}\|_2}\bigg) + W_1\bigg(\frac{1}{m} \sum_{\ell = 1}^{m} \delta_{\|Y_j - Y_\ell\|_2}, \frac{1}{m} \sum_{\ell = 1}^{m} \delta_{\|\eta_{s(j)} - \eta_{s(\ell)}\|_2}\bigg). 
        \end{split}
    \end{equation*}
    Meanwhile, notice that $X_i = \theta_{t(i)} + \xi_i$ and $Y_j = \eta_{s(j)} + \zeta_j$ for any $(i, j) \in [n] \times [m]$, where $\xi_i \, | \, t(i)$ and $\zeta_j \, | \, s(j)$ are independent sub-Gaussian random vectors. Recall that the Wasserstein-1 distance between two sets of points in $\R$ with the same cardinality satisfies the following:
    \begin{equation*}
        W_1\bigg(\frac{1}{n} \sum_{\ell = 1}^{n} \delta_{\|X_i - X_\ell\|_2}, \frac{1}{n} \sum_{\ell = 1}^{n} \delta_{\|\theta_{t(i)} - \theta_{t(\ell)}\|_2}\bigg) \le \frac{1}{n} \sum_{\ell = 1}^{n}\left|\|X_i - X_\ell\|_2 - \|\theta_{t(i)} - \theta_{t(\ell)}\|_2\right|.
    \end{equation*}
    From $|\|X_i - X_\ell\|_2 - \|\theta_{t(i)} - \theta_{t(\ell)}\|_2 = |\|\theta_{t(i)} - \theta_{t(\ell)} + \xi_i - \xi_\ell\|_2 - \|\theta_{t(i)} - \theta_{t(\ell)}\|_2| \le \|\xi_i - \xi_\ell\|_2$, we have 
    \begin{equation*}
        W_1\bigg(\frac{1}{n} \sum_{\ell = 1}^{n} \delta_{\|X_i - X_\ell\|_2}, \frac{1}{n} \sum_{\ell = 1}^{n} \delta_{\|\theta_{t(i)} - \theta_{t(\ell)}\|_2}\bigg) \le \frac{1}{n} \sum_{\ell = 1}^{n} \|\xi_i - \xi_\ell\|_2 \le \max_{i, \ell \in [n]} \|\xi_i - \xi_\ell\|_2 =: U.
    \end{equation*}
    Similarly, 
    \begin{equation*}
        W_1\bigg(\frac{1}{m} \sum_{\ell = 1}^{m} \delta_{\|Y_j - Y_\ell\|_2}, \frac{1}{m} \sum_{\ell = 1}^{m} \delta_{\|\eta_{s(j)} - \eta_{s(\ell)}\|_2}\bigg) \le \max_{j, \ell \in [m]} \|\zeta_j - \zeta_\ell\|_2 =: V.
    \end{equation*}
    Hence, $|W(i, j) - W'(i, j)| \le U + V$ and $|W(i, j) - \overline{W}(t(i), s(j))| \le R (Z_1 + Z_2) + U + V$. Now, fix $i \in [n]$ such that $t(i) \in [K]$. Then,
    \begin{equation}
        \label{eq:gap_and_deviation}
        \begin{split}
            & \min_{\substack{j \in [m] \\ s(j) \neq t(i)}} W(i, j) - \min_{\substack{j \in [m] \\ s(j) = t(i)}} W(i, j) \\
            & \ge \min_{\substack{j \in [m] \\ s(j) \neq t(i)}} \overline{W}(t(i), s(j)) - \min_{\substack{j \in [m] \\ s(j) = t(i)}} \overline{W}(t(i), s(j)) - 2 R (Z_1 + Z_2) - 2 (U + V) \\
            & \ge \min_{\beta \in [s] \backslash \{t(i)\}} \overline{W}(t(i), \beta) - \overline{W}(t(i), t(i)) - 2 R (Z_1 + Z_2) - 2 (U + V) \\
            & \ge \underbrace{\min_{\substack{\alpha \in [K] \\ \beta \in [s] \backslash \{\alpha\}}} \overline{W}(\alpha, \beta) - R \cdot \bigg(1 - \sum_{k \in [K]} p_k \bigg)}_{=: \Omega} - 2 R (Z_1 + Z_2) - 2 (U + V),
        \end{split}
    \end{equation}
    where the last inequality follows from Lemma \ref{lem:Wasserstein_separation}. Therefore, $2 R (Z_1 + Z_2) + 2 (U + V) < \Omega$ implies that $s(\pi(i)) = t(i)$ for any $i \in [n]$ such that $t(i) \in [K]$. We show that $\P(E) \le \delta$, where $E$ is the event $(2 R (Z_1 + Z_2) + 2 (U + V) \ge \Omega)$. By Lemma \ref{lem:multinomial_concentration}, we have 
    \begin{align*}
        \P\left(Z_1 \ge \frac{\Omega}{8 R}\right) & \le 2 K (t - 1) \exp\left(-\frac{2 n \Omega^2}{(8 R)^2}\right), \\
        \P\left(Z_2 \ge \frac{\Omega}{8 R}\right) & \le 2 K (s - 1) \exp\left(-\frac{2 m \Omega^2}{(8 R)^2}\right).
    \end{align*}
    Meanwhile, for $i \neq \ell$, note that $\xi_i - \xi_\ell \, | \, t(i), t(\ell)$ is a sub-Gaussian random vector with variance proxy $\sigma_{t(i)}^2 + \sigma_{t(\ell)}^2$. By Lemma \ref{lem:sub_Gaussian}, we have
    \begin{equation*}
        \P\left(\max_{i, \ell \in [n]} \|\xi_i - \xi_\ell\|_2 \ge \frac{\Omega}{8}\right) \le \sum_{i \neq \ell} \P\left(\|\xi_i - \xi_\ell\|_2 \ge \frac{\Omega}{8}\right) \le n^2 e^{2 d} \exp\left(-\frac{(\Omega / 8)^2}{16 \sigma^2}\right).
    \end{equation*}
    Similarly, we have
    \begin{equation*}
        \P\left(\max_{j, \ell \in [m]} \|\zeta_j - \zeta_\ell\|_2 \ge \frac{\Omega}{8}\right) \le m^2 e^{2 d} \exp\left(-\frac{(\Omega / 8)^2}{16 \tau^2}\right). 
    \end{equation*}
    Putting all together, we have
    \begin{equation*}
        \begin{split}
            \P(E) 
            & \le \P\left(Z_1 \ge \frac{\Omega}{4 R}\right) + \P\left(Z_2 \ge \frac{\Omega}{4 R}\right) + \P\left(U \ge \frac{\Omega}{8}\right) + \P\left(V \ge \frac{\Omega}{8}\right) \\
            & \le 2 K (t + s - 2) \exp\left(-\frac{2 (n \wedge m) \Omega^2}{(8 R)^2}\right) + 2 (n \vee m)^2 e^{2 d} \exp\left(-\frac{(\Omega / 8)^2}{16 \Gamma^2}\right) \\
            & \le \frac{\delta}{2} + \frac{\delta}{2} = \delta,
        \end{split}        
    \end{equation*}
    where the last inequality follows from \eqref{eq:separation_condition}. Hence, we have \eqref{eq:matching_guarantee}.
\end{proof}

\begin{remark}
    From the proof of Theorem \ref{thm:outlier_robustness}, one may have noticed that we can replace the left-hand side of \eqref{eq:separation_condition} with 
    \begin{equation*}
        \Omega_o := \min_{\alpha \in [K]} \left(\min_{\beta \in [s] \backslash \{\alpha\}} \overline{W}(\alpha, \beta) - \overline{W}(\alpha, \alpha)\right).
    \end{equation*}
    To see this, it suffices to observe that we can replace $\Omega$ in the last line of \eqref{eq:gap_and_deviation} with $\Omega_o$, and the rest of the proof follows the same with $\Omega_o$ in place of $\Omega$. Since $\Omega_o \ge \Omega$, the condition \eqref{eq:separation_condition} is a stronger condition than requiring $\Omega_o$ to be at least the right-hand side of \eqref{eq:separation_condition}. We chose to present \eqref{eq:separation_condition} as it is written in terms of the outlier proportion which is more interpretable. 
\end{remark}

Lastly, we provide the lemmas used in the proof of Theorem \ref{thm:outlier_robustness}.
\begin{lemma}
    \label{lem:Wasserstein_separation}
    For the centers $\theta_1, \ldots, \theta_t$ and $\eta_1, \ldots, \eta_s$ in Assumption \ref{a:mixture}, define $\overline{W}$ as in \eqref{eq:W1_distance_profiles_centers}. Then, for any $k \in [K]$, we have
    \begin{equation}
        \label{eq:matching_center_upper_bound}
        \overline{W}(\alpha, \alpha) \le \bigg(1 - \sum_{k \in [K]} p_k\bigg) \cdot \max_{(\alpha, \beta) \in [t] \times [s]} \|\theta_\alpha - \eta_\beta\|_2.
    \end{equation}
\end{lemma}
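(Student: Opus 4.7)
The plan is to exhibit an explicit coupling between the two one-dimensional atomic measures defining $\overline{W}(\alpha, \alpha)$ whose transport cost matches the stated upper bound. The central observation is that Assumption \ref{a:mixture} forces the first $K$ components to agree on both sides: for every $k \in [K]$, $\theta_k = \eta_k$ and $p_k = q_k$. In particular, for $\alpha \in [K]$ we have $\theta_\alpha = \eta_\alpha$, so the atoms $\delta_{\|\theta_\alpha - \theta_k\|_2}$ and $\delta_{\|\eta_\alpha - \eta_k\|_2}$ with $k \in [K]$ coincide in location and carry identical mass $p_k = q_k$. The two measures therefore share a common part of total mass $\sum_{k \in [K]} p_k$, while the remaining outlier parts, indexed by $k \in [t] \setminus [K]$ and $k \in [s] \setminus [K]$ respectively, have matching total mass $1 - \sum_{k \in [K]} p_k$ on each side.

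Next, I would define a coupling that maps the common parts onto themselves at zero cost and couples the two outlier marginals in an arbitrary admissible fashion (any pairing will do, since we only need an upper bound on $W_1$). For any pair $(k, \ell) \in ([t] \setminus [K]) \times ([s] \setminus [K])$ arising in the outlier coupling, the incurred pointwise cost is $|\|\theta_\alpha - \theta_k\|_2 - \|\eta_\alpha - \eta_\ell\|_2|$. Using the elementary inequality $|a - b| \le a \vee b$ for $a, b \ge 0$, this cost is bounded by $\|\theta_\alpha - \theta_k\|_2 \vee \|\eta_\alpha - \eta_\ell\|_2$. Invoking $\theta_\alpha = \eta_\alpha$ (valid because $\alpha \in [K]$) rewrites each of these two distances in the form $\|\theta_{\alpha'} - \eta_{\beta'}\|_2$ for suitable indices $\alpha' \in [t]$ and $\beta' \in [s]$, and hence each is at most $\max_{(\alpha', \beta') \in [t] \times [s]} \|\theta_{\alpha'} - \eta_{\beta'}\|_2$.

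Summing these pointwise bounds against the total outlier mass $1 - \sum_{k \in [K]} p_k$ yields the stated inequality. The only mildly subtle step is the rewriting $\|\theta_\alpha - \theta_k\|_2 = \|\eta_\alpha - \theta_k\|_2$ and symmetrically $\|\eta_\alpha - \eta_k\|_2 = \|\theta_\alpha - \eta_k\|_2$, which converts intra-cluster $\theta$--$\theta$ and $\eta$--$\eta$ distances into cross distances between a $\theta$ and an $\eta$, so that the diameter on the right-hand side of \eqref{eq:matching_center_upper_bound} is applicable. No serious obstacle is anticipated: the proof reduces to a one-line coupling construction combined with this elementary inequality.
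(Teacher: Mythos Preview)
Your argument is correct and follows essentially the same route as the paper: both recognize that the first $K$ atoms coincide and cancel, leaving only the outlier mass $1 - \sum_{k \in [K]} p_k$ to transport, and both bound the resulting one-dimensional $W_1$ by (outlier mass)$\times$(diameter). The only cosmetic difference is that the paper passes through the CDF representation to obtain the exact identity $\overline{W}(\alpha,\alpha) = (1-\lambda)\,W_1(\text{outlier profiles})$ and then bounds the per-pair cost via the reverse triangle inequality $\bigl|\|\theta_\alpha - \theta_k\|_2 - \|\theta_\alpha - \eta_\ell\|_2\bigr| \le \|\theta_k - \eta_\ell\|_2$, which lands directly on a cross $\theta$--$\eta$ distance without your extra rewriting step through $\theta_\alpha = \eta_\alpha$.
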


\begin{proof}
    The Wasserstein-1 distance on $\mathscr{P}(\R)$ can be written as the $L^1$ distance between cumulative distribution functions. For $\alpha \in [K]$, we have
    \begin{equation*}
        \begin{split}
            \overline{W}(\alpha, \alpha) 
            & = \int_\R \left|\sum_{k = 1}^{t} p_k 1\{\|\theta_{\alpha} - \theta_k\|_2 \le z\} - \sum_{k = 1}^{s} q_k 1\{\|\eta_{\alpha} - \eta_k\|_2 \le z\}\right| \, \mathrm{d} z \\           
            & = \int_\R \left|\sum_{k \in [t] \backslash [K]} p_k 1\{\|\theta_{\alpha} - \theta_k\|_2 \le z\} - \sum_{k \in [s] \backslash [K]} q_k 1\{\|\theta_{\alpha} - \eta_k\|_2 \le z\}\right| \, \mathrm{d} z \\
            & = (1 - \lambda) W_1\left(\sum_{k \in [t] \backslash [K]} \tfrac{p_k}{1 - \lambda} \delta_{\|\theta_{\alpha} - \theta_k\|_2}, \sum_{k \in [s] \backslash [K]} \tfrac{q_k}{1 - \lambda} \delta_{\|\theta_{\alpha} - \eta_k\|_2}\right),
        \end{split}
    \end{equation*}
    where $\lambda = \sum_{k \in [K]} p_k$. Since $|\|\theta_\alpha - \theta_k\|_2 - \|\theta_\alpha - \eta_\ell\|_2| \le \|\theta_k - \eta_\ell\|_2$ for any $k \in [t] \backslash [K]$ and $\ell \in [s] \backslash [K]$, one can deduce that \eqref{eq:matching_center_upper_bound} holds.
\end{proof}

\begin{lemma}
    \label{lem:W1_bound}
    Let $z_1 \le \cdots \le z_t$ be any real numbers. For any $(p_1, \ldots, p_t), (q_1, \ldots, q_t) \in \Delta_t$, we have 
    \begin{equation}
        \label{eq:W1_bound}
        W_1\bigg(\sum_{k = 1}^{t} p_k \delta_{z_k}, \sum_{k = 1}^{t} q_k \delta_{z_k}\bigg) \le (z_t - z_1) \cdot \max_{j \in [t - 1]} \bigg|\sum_{k = 1}^{j} (p_k - q_k)\bigg|.
    \end{equation}
\end{lemma}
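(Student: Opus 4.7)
The plan is to invoke the classical one-dimensional representation of the Wasserstein-1 distance as the $L^1$ distance between cumulative distribution functions, namely $W_1(\mu,\nu) = \int_{\R} |F_\mu(z) - F_\nu(z)|\,\mathrm{d}z$, applied to the two discrete measures $\mu = \sum_{k} p_k \delta_{z_k}$ and $\nu = \sum_{k} q_k \delta_{z_k}$. Both measures are supported inside $[z_1, z_t]$, so the integrand vanishes outside this interval and the integral reduces to an integral over $[z_1, z_t]$.

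Next, I would observe that on each subinterval $[z_j, z_{j+1})$ with $j \in [t-1]$, both CDFs are constant, and the mass the two measures have accumulated up to that point is $\sum_{k=1}^{j} p_k$ and $\sum_{k=1}^{j} q_k$ respectively. Hence on $[z_j, z_{j+1})$ we have $F_\mu(z) - F_\nu(z) = \sum_{k=1}^{j}(p_k - q_k)$, so the integral collapses to the finite sum
\begin{equation*}
    W_1\bigg(\sum_{k=1}^{t} p_k \delta_{z_k}, \sum_{k=1}^{t} q_k \delta_{z_k}\bigg) = \sum_{j=1}^{t-1} (z_{j+1} - z_j) \left|\sum_{k=1}^{j} (p_k - q_k)\right|.
\end{equation*}

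Finally, I would pull the maximum of the partial sums outside, leaving the telescoping sum $\sum_{j=1}^{t-1}(z_{j+1} - z_j) = z_t - z_1$, which yields \eqref{eq:W1_bound}. There is essentially no obstacle here; the only minor bookkeeping concern is the possibility of ties $z_j = z_{j+1}$, but such terms contribute zero to the sum and cause no issue. The argument uses nothing beyond the CDF formula for $W_1$ on $\R$ and a telescoping identity, so it is short and self-contained.
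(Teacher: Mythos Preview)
Your proposal is correct and follows essentially the same route as the paper: both use the CDF representation $W_1 = \int_{\R}|F_\mu - F_\nu|$, reduce the integral to the exact identity $\sum_{j=1}^{t-1}(z_{j+1}-z_j)\bigl|\sum_{k=1}^{j}(p_k-q_k)\bigr|$, and then bound by extracting the maximum partial sum and telescoping.
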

\begin{proof}
    Notice that
    \begin{equation*}
        W_1\bigg(\sum_{k = 1}^{t} p_k \delta_{z_k}, \sum_{k = 1}^{t} q_k \delta_{z_k}\bigg) = \int_{\R} \bigg|\sum_{k = 1}^{t} (p_k - q_k) 1\{z_k \le z\}\bigg| \, \mathrm{d} z = \sum_{j = 1}^{t - 1} \bigg|\sum_{k = 1}^{j} (p_k - q_k)\bigg| \cdot (z_{j + 1} - z_j),
    \end{equation*}
    where the last equality follows from 
    \begin{equation*}
        \bigg|\sum_{k = 1}^{t} (p_k - q_k) 1\{z_k \le z\}\bigg| = \sum_{j = 1}^{t - 1} \bigg|\sum_{k = 1}^{j} (p_k - q_k)\bigg| \cdot 1\{z_j \le z < z_{j + 1}\} \quad \forall z \in \R.
    \end{equation*}
    Therefore, we have \eqref{eq:W1_bound}.
\end{proof}

\subsection{Proof of Theorem \ref{thm:noise_stability}}
\begin{proof}
    For the proof, we may assume that $\pi^\ast$ is the identity without loss of generality. First, letting $Q(\pi) = \sum_{i = 1}^{n} W_1(\hat{\mu}_i, \hat{\nu}_{\pi(i)})$, observe that
    \begin{equation*}
        \left(\hat{\pi} \neq \pi^\ast\right) \subset \bigcup_{\pi \in \cS_n \backslash \{\pi^\ast\}} \left(Q(\pi^\ast) \ge Q(\pi)\right).
    \end{equation*}
    The event $Q(\pi^\ast) \ge Q(\pi)$ is equivalent to 
    \begin{equation*}
        \sum_{i = 1}^{n} \left(W_1(\hat{\mu}_i, \hat{\nu}_i) - W_1(\hat{\mu}_{i}, \hat{\nu}_{\pi(i)})\right) = \sum_{i : \pi(i) \neq i} \left(W_1(\hat{\mu}_i, \hat{\nu}_i) - W_1(\hat{\mu}_{i}, \hat{\nu}_{\pi(i)})\right) \ge 0,
    \end{equation*}
    which is contained in the following events:
    \begin{equation*}
        \bigcup_{i : \pi(i) \neq i} \left(W_1(\hat{\mu}_i, \hat{\nu}_i) \ge W_1(\hat{\mu}_{i}, \hat{\nu}_{\pi(i)})\right) \subset \bigcup_{i = 1}^{n} \bigcup_{j \neq i} \left(W_1(\hat{\mu}_i, \hat{\nu}_i) \ge W_1(\hat{\mu}_i, \hat{\nu}_j)\right) 
    \end{equation*}
    Therefore, 
    \begin{equation}
        \label{eq:assigment-based_matching_fail}
        \left(\hat{\pi} \neq \pi^\ast\right) \subset \bigcup_{i = 1}^{n} \bigcup_{j \neq i} \left(W_1(\hat{\mu}_i, \hat{\nu}_i) \ge W_1(\hat{\mu}_i, \hat{\nu}_j)\right).
    \end{equation}
    Now, recall that we can write a rigid transformation $T \colon \R^d \to \R^d$ as $T(\theta) = R \theta + b$ for some orthonormal matrix $R \in \R^{d \times d}$ and some vector $b \in \R^d$. Hence, we have 
    \begin{align*}
        \hat{\mu}_i & = \frac{1}{n} \sum_{\ell = 1}^{n} \delta_{\|X_i - X_\ell\|_2} = \frac{1}{n} \sum_{\ell = 1}^{n} \delta_{\|\theta_i - \theta_\ell + \xi_i - \xi_\ell\|_2} = \frac{1}{n} \sum_{\ell = 1}^{n} \delta_{\|\theta_i - \theta_\ell + \xi_{i \ell}\|_2}, \\
        \hat{\nu}_i & = \frac{1}{n} \sum_{\ell = 1}^{n} \delta_{\|Y_i - Y_\ell\|_2} = \frac{1}{n} \sum_{\ell = 1}^{n} \delta_{\|R(\theta_i - \theta_\ell) + \zeta_i - \zeta_\ell\|_2} = \frac{1}{n} \sum_{\ell = 1}^{n} \delta_{\|\theta_i - \theta_\ell + \zeta_{i \ell}\|_2},
    \end{align*}
    where we let $\xi_{i \ell} = \xi_i - \xi_\ell$ and $\zeta_{i \ell} = R^\top (\zeta_i - \zeta_\ell)$. Next, recall that the Wasserstein-1 distance between two sets of points in $\R$ with the same cardinality satisfies the following:
    \begin{equation*}
        W_1(\hat{\mu}_i, \hat{\nu}_j) = \min_{\pi \in \cS_n} \frac{1}{n} \sum_{\ell = 1}^{n} \left|\|\theta_i - \theta_\ell + \xi_{i \ell}\|_2 - \|\theta_j - \theta_{\pi(\ell)} + \zeta_{j \pi(\ell)}\|_2\right|.
    \end{equation*}
    Note that   
    \begin{equation*}
        \begin{split}
            & \left|\left|\|\theta_i - \theta_\ell + \xi_{i \ell}\|_2 - \|\theta_j - \theta_{\pi(\ell)} + \zeta_{j \pi(\ell)}\|_2\right| - \left|\|\theta_i - \theta_\ell\|_2 - \|\theta_j - \theta_{\pi(\ell)}\|_2\right|\right| \\
            & \le \left|\|\theta_i - \theta_\ell + \xi_{i \ell}\|_2 - \|\theta_i - \theta_\ell\|_2\right| + \left|\|\theta_j - \theta_{\pi(\ell)}\|_2 - \|\theta_j - \theta_{\pi(\ell)} + \zeta_{j \pi(\ell)}\|_2\right| \\
            & \le \|\xi_{i \ell}\|_2 + \|\zeta_{j \pi(\ell)}\|_2 \\
            & \le (U + V),
        \end{split}
    \end{equation*}
    where we define $U = \max_{i, \ell \in [n]} \|\xi_{i \ell}\|_2$ and $V = \max_{j, \ell \in [n]} \|\zeta_{j \ell}\|_2$. As 
    \begin{equation*}
        W_1(\mu_i, \mu_j) = \min_{\pi \in \cS_n} \frac{1}{n} \sum_{\ell = 1}^{n} \left|\|\theta_i - \theta_\ell\|_2^2 - \|\theta_j - \theta_{\pi(\ell)}\|_2^2\right|,
    \end{equation*}
    we conclude that 
    \begin{equation*}
        |W_1(\hat{\mu}_i, \hat{\nu}_j) - W_1(\mu_i, \mu_j)| \le (U + V) \quad \forall i, j \in [n].
    \end{equation*}
    Therefore, if we can find $i, j \in [n]$ such that $i \neq j$ and $W_1(\hat{\mu}_i, \hat{\nu}_i) \ge W_1(\hat{\mu}_i, \hat{\nu}_j)$, we must have 
    \begin{equation*}
        (U + V) \ge W_1(\hat{\mu}_i, \hat{\nu}_i) \ge W_1(\hat{\mu}_i, \hat{\nu}_j) \ge W_1(\mu_i, \mu_j) - (U + V) \ge \Phi - (U + V).
    \end{equation*}
    Accordingly, by \eqref{eq:assigment-based_matching_fail}, we have 
    \begin{equation*}
        \P(\hat{\pi} \neq \pi^\ast) \le \P(2 (U + V) \ge \Phi) \le \P\left(U \ge \frac{\Phi}{4}\right) + \P\left(V \ge \frac{\Phi}{4}\right).
    \end{equation*}
    Since $\xi_{i \ell} = \xi_i - \xi_\ell$ is a sub-Gaussian random vector with variance proxy $\sigma_i^2 + \sigma_\ell^2$ for $i \neq \ell$, one can deduce from Lemma \ref{lem:sub_Gaussian} and \eqref{eq:noise_level_condition} that 
    \begin{equation*}
        \P\left(\max_{i, \ell \in [n]} \|\xi_{i \ell}\|_2 \ge \frac{\Phi}{4}\right) \le \sum_{i \neq \ell} \P\left(\|\xi_{i \ell}\|_2 \ge \frac{\Phi}{4}\right) \le n^2 e^{2 d} \exp\left(-\frac{(\Phi / 4)^2}{16 \sigma^2}\right) \le \frac{\delta}{2},
    \end{equation*}
    where $\sigma = \max\{\sigma_1, \ldots, \sigma_n\}$. Similarly, we have
    \begin{equation*}
        \P\left(\max_{j, \ell \in [n]} \|\zeta_{j \ell}\|_2 \ge \frac{\Phi}{4}\right) \le \frac{\delta}{2}.
    \end{equation*}
    Therefore, we have $\P(\hat{\pi} \neq \pi^\ast) \le \delta$.
\end{proof}

\section{Lemmas}
\label{sec:lemmas}
We have used the following lemmas on the concentration of the multinomial distribution and the sub-Gaussian distribution in the proofs of Theorems \ref{thm:outlier_robustness} and \ref{thm:noise_stability}. The former is an application of Proposition A.6.6 of \cite{vaart_wellner_1996}, while the latter is a standard result that can be deduced from the Chernoff bound; we provide the proof for completeness.

\begin{lemma}
    \label{lem:multinomial_concentration}
    Suppose $(z_1, \ldots, z_t)$ is from the multinomial distribution with the number of trials $n$ and the probability vector $(p_1, \ldots, p_t) \in \Delta_t$. Let $\cA \subset 2^{[t]}$ be a collection of subsets of $[t]$. Then, for any $\epsilon \in (0, 1)$, we have
    \begin{equation*}
        \P\left(\max_{A \in \cA} \left|\sum_{k \in A} \left(\frac{z_k}{n} - p_k\right)\right| \ge \epsilon\right) \le 2 |\cA| \cdot e^{-2 n \epsilon^2}.
    \end{equation*}
\end{lemma}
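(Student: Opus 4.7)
The plan is to reduce the statement to a scalar Hoeffding bound for a Binomial, then combine via a union bound over $\cA$.

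First, I would fix any subset $A \in \cA$ and observe that $S_A := \sum_{k \in A} z_k$ has the Binomial distribution with parameters $n$ and $p_A := \sum_{k \in A} p_k$, by the standard aggregation property of multinomials (each of the $n$ i.i.d.\ trials either falls in $A$ with probability $p_A$ or does not). Consequently $S_A$ can be written as a sum of $n$ i.i.d.\ Bernoulli$(p_A)$ variables, each bounded in $[0,1]$, so Hoeffding's inequality yields
\begin{equation*}
    \P\left(\left|\frac{S_A}{n} - p_A\right| \ge \epsilon\right) \le 2 e^{-2 n \epsilon^2}.
\end{equation*}
This is precisely the two-sided bound on $\left|\sum_{k \in A}(z_k/n - p_k)\right|$ for the chosen $A$.

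Next, I would apply a union bound over $A \in \cA$:
\begin{equation*}
    \P\left(\max_{A \in \cA}\left|\sum_{k \in A}\left(\frac{z_k}{n} - p_k\right)\right| \ge \epsilon\right) \le \sum_{A \in \cA} \P\left(\left|\frac{S_A}{n} - p_A\right| \ge \epsilon\right) \le 2 |\cA| \cdot e^{-2 n \epsilon^2},
\end{equation*}
which is the claimed inequality. The restriction $\epsilon \in (0,1)$ plays no role in the argument itself; it only guarantees the bound is nontrivial.

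There is essentially no hard step here. The only thing worth double-checking is the aggregation claim $S_A \sim \mathrm{Binomial}(n, p_A)$, which is immediate from the i.i.d.\ categorical representation of the multinomial. The cited Proposition A.6.6 of \cite{vaart_wellner_1996} gives exactly the Hoeffding-type concentration used above, so invoking it after the union bound completes the proof without further calculation.
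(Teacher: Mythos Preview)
Your proof is correct and matches the paper's own approach: the paper does not spell out a proof but simply states the lemma is an application of Proposition A.6.6 of \cite{vaart_wellner_1996}, which is precisely the Hoeffding inequality you invoke after the Binomial aggregation and union bound.
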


\begin{lemma}
    \label{lem:sub_Gaussian}
    We say a probability measure $\mu \in \mathscr{P}(\R^d)$ is sub-Gaussian with variance proxy $\sigma^2 > 0$ if the following holds for $X \sim \mu$:
    \begin{equation*}
        \E e^{u^\top (X - \E X)} \le \exp\left(\frac{\sigma^2 \|u\|_2^2}{2}\right) \quad \forall u \in \R^d.
    \end{equation*}
    In this case, for any $t \ge 0$, we have 
    \begin{equation}
        \label{eq:sub_Gaussian_concentration}
        \P\left(\|X - \E X\|_2 \ge t\right) \le e^{2 d} \exp\left(-\frac{t^2}{8 \sigma^2}\right).
    \end{equation}
\end{lemma}

\begin{proof}
    Without loss of generality, we may assume $\E X = 0$. Let $B_r = \{x \in \R^d : \|x\|_2 \le r\}$ be the closed ball of radius $r$ centered at the origin, and recall that $\|x\|_2 = r \cdot \sup_{u \in B_r} u^\top x$ for any $x \in \R^d$. Now, for $\epsilon \in (0, 1)$, let $N_\epsilon$ be a minimal $\epsilon$-covering of $B_1$. For any $u \in B_1$, we can find $z_u \in N_\epsilon$ such that $\|u - z_u\|_2 \le \epsilon$. Hence, for any $u \in B_1$, we have
    \begin{equation*}
        u^\top x \le (u - z_u)^\top x + z_u^\top x \le \sup_{z \in B_\epsilon} z^\top x + \max_{z \in N_\epsilon} z^\top x \le \epsilon \|x\|_2 + \max_{z \in N_\epsilon} z^\top x,
    \end{equation*}
    which implies 
    \begin{equation*}
        \|x\|_2 \le \frac{1}{1 - \epsilon} \max_{z \in N_\epsilon} z^\top x.
    \end{equation*}
    For any $z \in N_\epsilon \subset B_1$, note that $z^\top X$ is a sub-Gaussian random variable with variance proxy $\sigma^2$. Hence, for any $t \ge 0$, we have 
    \begin{equation*}
        \P(\|X\|_2 \ge t) 
        \le \P(\max_{z \in N_\epsilon} z^\top X \ge (1 - \epsilon) t) \le |N_\epsilon| \cdot \exp\left(-\frac{(1 - \epsilon)^2 t^2}{2 \sigma^2}\right).
    \end{equation*}
    By taking $\epsilon = 1 / 2$ and using $|N_\epsilon| \le (1 + 2 / \epsilon)^d$, we have \eqref{eq:sub_Gaussian_concentration}; we use $|N_{1 / 2}| \le 5^d \le e^{2 d}$.
\end{proof}


\section{Extensions}
\label{sec:extensions}

\subsection{Extension of Theorem \ref{thm:W1_bound_sup_x_euclidean} to General Metric Measure Spaces}
\label{sec:W1_bound_sup_x_general}
From \eqref{eq:sup_W1_VC_upper_bound} in the proof of Theorem \ref{thm:W1_bound_sup_x_euclidean}, one can deduce that Theorem \ref{thm:W1_bound_sup_x_euclidean} can be extended to any metric measure space $(\cX, d_\cX, \mu)$ by replacing $d + 1$ in \eqref{eq:sup_W1_euclidean} with the VC dimension of $\cF_\cX$ defined in the proof, which is the VC dimension of the set of all balls in $\cX$. Unfortunately, controlling the VC dimension of the set of all balls in a general metric measure space is highly nontrivial. A more straightforward quantity is the $\epsilon$-covering number $\cN(\epsilon, \cX, d_\cX)$, namely, the smallest possible number of $\epsilon$-balls to cover $\cX$. The following theorem derives an upper bound using the covering number for a general bounded metric measure space $(\cX, d_\cX, \mu)$. 

\begin{theorem}
    \label{thm:W1_bound_sup_x_general}
    Let $(\cX, d_\cX, \mu)$ be a metric measure space and $\hat{\mu} = \frac{1}{n} \sum_{i = 1}^{n} \delta_{X_i}$ be the empirical measure based on $X_1, \ldots, X_n$ that are i.i.d.\ from $\mu$. Suppose $\Delta = \sup_{x, x' \in \cX} d_\cX(x, x') < \infty$ and the support of $\mu$ is exactly $\cX$. Let $\cN(\epsilon, \cX, d_\cX)$ be the $\epsilon$-covering number of $\cX$ under $d_\cX$. Then,
    \begin{equation}
        \label{eq:E_sup_W1_upper_bound_chaining}
        \E\left[\sup_{x \in \cX} W_1(\mu_x, \hat{\mu}_x)\right] \le \frac{C}{\sqrt{n}} \int_{0}^{\Delta / 4} \sqrt{\frac{\Delta}{\epsilon} + \log \cN(\epsilon, \cX, d_\cX)} \, \mathrm{d} \epsilon,
    \end{equation}
    where $C$ is an absolute constant that is independent of $n$, $\cX$, and $\mu$.
\end{theorem}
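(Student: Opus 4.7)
The plan is to reduce $\sup_{x \in \cX} W_1(\mu_x, \hat{\mu}_x)$ to a single empirical process over a function class on $\cX$, and then control that process via Dudley's entropy integral with $L^\infty$-covers. The VC-dimension argument used in the Euclidean proof does not generalize cleanly to an arbitrary metric measure space, so one must work with covering numbers of $\cX$ directly. First, by Kantorovich--Rubinstein duality, for each $x \in \cX$,
\begin{equation*}
W_1(\mu_x, \hat{\mu}_x) = \sup_{L} \int_0^\Delta L \, \mathrm{d}(\mu_x - \hat{\mu}_x) = \sup_{L} (\mu - \hat{\mu})\bigl(L \circ d_\cX(x, \cdot)\bigr),
\end{equation*}
where the supremum runs over $1$-Lipschitz $L \colon [0, \Delta] \to \R$; adding a constant to $L$ leaves the integral unchanged, so we may normalize $L(0) = 0$, which forces $\|L\|_\infty \le \Delta$. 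Because the $1$-Lipschitz class is symmetric under $L \mapsto -L$, taking the supremum over $x$ absorbs the absolute value and yields
\begin{equation*}
\sup_{x \in \cX} W_1(\mu_x, \hat{\mu}_x) = \sup_{g \in \cG} \bigl|(\mu - \hat{\mu})(g)\bigr|, \qquad \cG := \bigl\{y \mapsto L(d_\cX(x, y)) : x \in \cX,\ L \text{ $1$-Lip on } [0, \Delta],\ L(0) = 0\bigr\},
\end{equation*}
with every $g \in \cG$ satisfying $\|g\|_\infty \le \Delta$.

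Next, I would bound the $L^\infty$-covering numbers of $\cG$ via a product cover. The triangle inequality gives $\|L \circ d_\cX(x, \cdot) - L' \circ d_\cX(x', \cdot)\|_\infty \le d_\cX(x, x') + \|L - L'\|_{\infty, [0, \Delta]}$, so the Cartesian product of an $\epsilon/2$-cover of $(\cX, d_\cX)$ with an $\epsilon/2$-cover of the normalized $1$-Lipschitz class in $L^\infty$ is an $\epsilon$-cover of $\cG$. By the classical Kolmogorov--Tikhomirov estimate, the $L^\infty$-covering number of the $1$-Lipschitz functions on $[0, \Delta]$ with $L(0) = 0$ at scale $\epsilon/2$ is at most $\exp(C' \Delta/\epsilon)$ for an absolute constant $C'$. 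Thus
\begin{equation*}
\log \cN(\epsilon, \cG, L^\infty) \le \log \cN(\epsilon/2, \cX, d_\cX) + C' \Delta / \epsilon.
\end{equation*}

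Finally, applying the standard symmetrization plus sub-Gaussian chaining bound to the uniformly bounded class $\cG$ and using $\log \cN(\epsilon, \cG, L^2(P_n)) \le \log \cN(\epsilon, \cG, L^\infty)$ gives
\begin{equation*}
\E \sup_{g \in \cG} \bigl|(\mu - \hat{\mu})(g)\bigr| \le \frac{C}{\sqrt{n}} \int_0^{\Delta/4} \sqrt{\log \cN(\epsilon, \cG, L^\infty)} \, \mathrm{d}\epsilon,
\end{equation*}
where the upper limit may be taken as any fixed multiple of the $L^2(P_n)$-diameter of $\cG$ (the integrand is decreasing). Substituting the covering-number bound above, absorbing constants into $C$, and a change of variable $\epsilon \mapsto \epsilon/2$ then yields the claimed inequality. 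The main obstacle is bookkeeping of absolute constants---in particular, securing exactly $\Delta/4$ as the upper limit---which depends on the normalization of $L$ (e.g., $L(0) = 0$ vs.\ $L$ centered about its mean) and on how one converts the $L^\infty$-diameter of $\cG$ (at most $2\Delta$) into an $L^2(P_n)$-diameter; the rest of the argument is a routine assembly of standard ingredients.
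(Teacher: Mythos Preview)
Your proposal is correct and follows essentially the same route as the paper: Kantorovich--Rubinstein duality to rewrite $\sup_x W_1(\mu_x,\hat\mu_x)$ as a single empirical process over the composite class $\{L\circ d_\cX(x,\cdot)\}$, a product $L^\infty$-covering bound combining an $\epsilon$-net of $\cX$ with the Kolmogorov--Tikhomirov estimate for $1$-Lipschitz functions on $[0,\Delta]$, and then symmetrization plus Dudley chaining. The paper's argument and yours differ only in notation and in minor bookkeeping (the paper integrates to $\Delta/2$ and then halves via $\cN(2\epsilon,\cF)\le \cN(\epsilon,\mathrm{Lip}_1)\cdot\cN(\epsilon,\cX)$ to reach $\Delta/4$), exactly the constant-tracking issue you flag.
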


\begin{proof}
    By the Kantorovich-Rubinstein theorem, e.g., Theorem 1.14 of \cite{villani_2003}, we have 
    \begin{equation*}
        W_1(\mu_x, \hat{\mu}_x) = \sup_{f \in \mathrm{Lip}_1} \left(\mu_x f - \hat{\mu}_x f\right),
    \end{equation*}
    where $\mathrm{Lip}_1 = \left\{f \colon [0, \Delta] \to \R \, | \, f(0) = 0 ~ \text{and} ~ f ~ \text{is} ~ 1\text{-Lipschitz}\right\}$. Here, we have used the fact that the supports of $\mu_x$ and $\hat{\mu}_x$ are contained in the interval $[0, \Delta]$ for any $x \in \cX$. Hence, 
    \begin{equation*}
        \sup_{x \in \cX} W_1(\mu_x, \hat{\mu}_x) = \sup_{x \in \cX} \sup_{f \in \mathrm{Lip}_1} \left(\mu_x f - \hat{\mu}_x f\right) = \sup_{\phi \in \cF} \left(\mu \phi - \hat{\mu} \phi\right),
    \end{equation*}
    where $\cF = \{f \circ d_{\cX, x} : f \in \mathrm{Lip}_1, x \in \cX\}$ which is a function class from $\cX$ to $\R$. Here, $d_{\cX, x} \colon \cX \to \R$ is defined by $d_{\cX, x}(\cdot) = d_\cX(x, \cdot)$. Using symmetrization and Dudley's chaining argument (see Chapter 8 of \cite{vershynin_2018}), one can derive the following:
    \begin{equation}
        \label{eq:chaining_bound}
        \E\left[\sup_{x \in \cX} W_1(\mu_x, \hat{\mu}_x)\right] = \E\left[\sup_{\phi \in \cF} \left(\mu \phi - \hat{\mu} \phi\right)\right] \le \frac{C}{\sqrt{n}} \int_{0}^{\Delta / 2} \sqrt{\log \cN(\epsilon, \cF, \|\cdot\|_\cX)} \, \mathrm{d} \epsilon,
    \end{equation}
    where $\cN(\epsilon, \cF, \|\cdot\|_\cX)$ denotes the $\epsilon$-covering number of $\cF$ with respect to $\|\cdot\|_\cX$, namely, the sup norm defined by $\|\phi\|_\cX = \sup_{x \in \cX} |\phi(x)|$ for any $\phi \colon \cX \to \R$. Here, we are using the fact that $\sup_{\phi \in \cF} \|\phi\|_\cX \le \Delta$ since $\sup_{z \in [0, \Delta]} |f(z)| \le \Delta$ for any $f \in \mathrm{Lip}_1$ by Lipschitzness. Now, noticing that $\cF$ is a function class consisting of composite functions, we upper bound $\cN(\epsilon, \cF, \|\cdot\|_\cX)$ by the product of the $\epsilon$-covering number of $\mathrm{Lip}_1$ with respect to the sup norm $\|\cdot\|_\infty$ on $[0, \Delta]$, that is, $\|f\|_\infty = \sup_{z \in [0, \Delta]} |f(z)|$, and the $\epsilon$-covering number of $\cX$ with respect to $d_\cX$. Concretely, we claim
    \begin{equation}
        \label{eq:covering_ineq_product}
        \cN(2 \epsilon, \cF, \|\cdot\|_\cX) \le \cN(\epsilon, \mathrm{Lip}_1, \|\cdot\|_\infty) \times \cN(\epsilon, \cX, d_\cX).
    \end{equation}
    To see this, let $\mathrm{Lip}_1^\epsilon \subset \mathrm{Lip}_1$ and $\cX^\epsilon \subset \cX$ be $\epsilon$-coverings of $\mathrm{Lip}_1$ and $\cX$, respectively. For $f \in \mathrm{Lip}_1$ and $x \in \cX$, we can find $f' \in \mathrm{Lip}_1^\epsilon$ and $x' \in \cX^\epsilon$ such that $\|f - f'\|_\infty \le \epsilon$ and $d_\cX(x, x') \le \epsilon$. Then, 
    \begin{equation*}
        \begin{split}
            \|f \circ d_{\cX, x} - f' \circ d_{\cX, x'}\|_\cX 
            & \le \|f \circ d_{\cX, x} - f \circ d_{\cX, x'}\|_\cX + \|f \circ d_{\cX, x'} - f' \circ d_{\cX, x'}\|_\cX \\
            & \le \|d_{\cX, x} - d_{\cX, x'}\|_\cX + \|f - f'\|_\infty \\
            & \le d_\cX(x, x') + \epsilon \\
            & \le 2 \epsilon.
        \end{split}
    \end{equation*}
    Hence, we obtain \eqref{eq:covering_ineq_product}. The standard upper bound on the covering number of $\mathrm{Lip}_1$ (see Exercise 8.2.7 of \cite{vershynin_2018}) tells
    \begin{equation}
        \label{eq:Lipschitz_class_covering_bound}
        \log \cN(\epsilon, \mathrm{Lip}_1, \|\cdot\|_\infty) \le \frac{c \Delta}{\epsilon},
    \end{equation}
    where $c$ is some absolute constant. Combining \eqref{eq:chaining_bound}, \eqref{eq:covering_ineq_product}, and \eqref{eq:Lipschitz_class_covering_bound}, we obtain \eqref{eq:E_sup_W1_upper_bound_chaining}.
\end{proof}

\begin{remark}
    If one assumes that $\cX$ is a compact subset of $\R^d$ and $d_\cX$ is the standard Euclidean distance in Theorem \ref{thm:W1_bound_sup_x_general}, the bounds coincide with those in Theorem \ref{thm:W1_bound_sup_x_euclidean}. To see this, it suffices to notice that $\cN(\epsilon, \cX, d_\cX)$ scales as $(1 + 2 / \epsilon)^d$; see Section 4.2 of \cite{vershynin_2018}. The key difference, however, is that Theorem \ref{thm:W1_bound_sup_x_general} requires compactness of the support of $\mu$, while Theorem \ref{thm:W1_bound_sup_x_euclidean} does not.
\end{remark}

\subsection{Extension of Theorem \ref{thm:outlier_robustness}: Identifying Points from the Common Part}
\label{sec:inliers}
Section \ref{sec:outlier_robustness} focuses on the matching of the points from the common part, namely, the first $K$ components. The matching guarantee ensures that all $X_i$'s such that $t(i) \in [K]$ are correctly matched, while ignoring the points from the outlier components. If we assume a stronger condition by replacing the left-hand side of \eqref{eq:separation_condition} with the following quantity
\begin{equation}
    \label{eq:mixture_centers_separation_stronger}
    \min_{\substack{\alpha \in [K] \\ \beta \in [s] \backslash \{\alpha\}}} \overline{W}(\alpha, \beta) \wedge \min_{\substack{\alpha \in [t] \backslash [K] \\ \beta \in [s]}} \overline{W}(\alpha, \beta) - R \cdot \bigg(1 - \sum_{k \in [K]} p_k \bigg),
\end{equation}
we can show that
\begin{equation}
    \label{eq:threshold_range}
    \max_{\substack{i \in [n] \\ t(i) \in [K]}} W(i, \pi(i)) < \min_{\substack{i \in [n] \\ t(i) \notin [K]}} W(i, \pi(i))
\end{equation}
holds with probability at least $1 - \delta$. This means that if the threshold $\rho$ in Algorithm \ref{alg:distance_profile_matching} is chosen between the two quantities in \eqref{eq:threshold_range}, the output $I$ of the algorithm exactly coincides with the set $\{i \in [n] : t(i) \in [K]\}$, thereby identifying the points from the common part. 

Let us prove \eqref{eq:threshold_range}. To this end, first recall from the proof of Theorem \ref{thm:outlier_robustness} that the event $C = (2 R (Z_1 + Z_2) + 2 (U + V) < \Omega)$ implies $s(\pi(i)) = t(i)$ for any $i \in [n]$ such that $t(i) \in [K]$. Hence, on the event $C$, one can deduce that for such $i \in [n]$, we have
\begin{equation*}
    W(i, \pi(i)) \le \overline{W}(t(i), t(i)) + R (Z_1 + Z_2) + U + V,
\end{equation*}
which follows from $|W(i, j) - \overline{W}(t(i), s(j))| \le R (Z_1 + Z_2) + (U + V)$ for any $i, j \in [n]$. Therefore, invoking Lemma \ref{lem:Wasserstein_separation}, we have
\begin{equation*}
    \max_{\substack{i \in [n] \\ t(i) \in [K]}} W(i, \pi(i)) \le R \cdot \bigg(1 - \sum_{k \in [K]} p_k \bigg) + R (Z_1 + Z_2) + (U + V).
\end{equation*}
Meanwhile, 
\begin{equation*}
    \begin{split}
        \min_{\substack{i \in [n] \\ t(i) \notin [K]}} W(i, \pi(i)) 
        & \ge \min_{\substack{i \in [n] \\ t(i) \notin [K]}} \overline{W}(t(i), s(\pi(i))) - R (Z_1 + Z_2) - (U + V) \\
        & \ge \min_{\substack{\alpha \in [t] \backslash [K] \\ \beta \in [s]}} \overline{W}(\alpha, \beta) - R (Z_1 + Z_2) - (U + V).
    \end{split}
\end{equation*}
Now, let $C'$ be the event $(2 R (Z_1 + Z_2) + 2 (U + V) < \Omega')$, where $\Omega'$ represents the quantity in \eqref{eq:mixture_centers_separation_stronger}. Clearly, we have $\Omega' \le \Omega$ and thus $C' \subset C$. Therefore, on the event $C'$, we have 
\begin{equation*}
    \min_{\substack{i \in [n] \\ t(i) \notin [K]}} W(i, \pi(i)) - \max_{\substack{i \in [n] \\ t(i) \in [K]}} W(i, \pi(i)) \ge \Omega' - 2 R (Z_1 + Z_2) - 2 (U + V) > 0.
\end{equation*}
Now, we have $\P(C') \ge 1 - \delta$ by repeating the proof of Theorem \ref{thm:outlier_robustness} with the left-hand side of \eqref{eq:separation_condition}, which is $\Omega$, replaced by $\Omega'$. Therefore, we have \eqref{eq:threshold_range} with probability at least $1 - \delta$.

\bibliographystyle{abbrvnat}
\bibliography{ref}

\end{document}